\providecommand{\doi}[1]{%
  \begingroup
    \let\bibinfo\@secondoftwo
    \urlstyle{rm}%
    \href{http://dx.doi.org/#1}{%
      doi:\discretionary{}{}{}%
      \nolinkurl{#1}%
    }%
  \endgroup
}
\begin{document}

\title{A Hierarchy of Non-Equilibrium Two-Phase Flow Models}
\author[sintef,nbi]{Gaute Linga}
\address[sintef]{SINTEF Energy Research, Trondheim, Norway}
\address[nbi]{Niels Bohr Institute, University of Copenhagen, Copenhagen, Denmark.}
\ead{gaute.linga@gmail.com}


\date{\today}

\begin{abstract}
  We consider a hierarchy of relaxation models for two-phase flow.
  The models are derived from the non-equilibrium Baer--Nunziato model, which is endowed with relaxation source terms to drive it towards equilibrium.
  The source terms cause transfer of volume, heat, mass and momentum due to differences between the phases in pressure, temperature, chemical potential and velocity, respectively. 
  The subcharacteristic condition is closely linked to the stability of such relaxation systems, and in the context of two-phase flow models, it implies that the sound speed of an equilibrium system can never exceed that of the relaxation system.
  Here, previous work by Flåtten and Lund~[\emph{Math.\ Models Methods Appl.\ Sci.}, 21 (12), 2011, 2379--2407] and Lund~[\emph{SIAM J.~Appl.~Math.\ } { 72}, 2012, 1713--1741] is extended to encompass two-fluid models, i.e.~models with separately governed velocities for the two phases.
  Each remaining model in the hierarchy is derived, and analytical expressions for the sound speeds are presented.
  Given only physically fundamental assumptions, the subcharacteristic condition is shown to be satisfied in the entire hierarchy, either in a weak or in a strong sense.
\end{abstract}


\begin{keyword}
two-phase flow \sep relaxation systems \sep subcharacteristic condition
\PACS 76T10 \sep 35L60
\end{keyword}

\maketitle


\section{Introduction}
The concurrent flow of two fluid phases occurs in a wide range of industrially relevant settings, including in nuclear reactors \cite{bestion_physical_1990,aarsnes2016review}, petroleum production \cite{bendiksen_dynamic_1991}, heat exchangers \cite{pettersen98}, cavitating flows \cite{saurel_modelling_2008}, and within carbon capture, transport and storage (CCS) \cite{berstad_co2_2011,munkejord2016co2,linga2016two}.
However, for most simulation purposes, resolving the full three-dimensional flow field may be too cumbersome, due to the complex interaction between the phases.
In particular, this encompasses calculating the temporal evolution of the interface between the phases, and the transfer of mass, heat and momentum across it.
Averaging methods (see e.g.\ Ishii and Hibiki \cite{ishii_thermo-fluid_2011} or Drew and Passman \cite{drew99}) may therefore be applied to avoid direct computation of the interface.
The resulting coarse-grained models may often be expressed as hyperbolic relaxation systems with source terms accounting for the interactions between the phases, driving them asymptotically towards equilibrium at a finite rate.
In a quasi-linear form, one-dimensional versions of such systems may be written as
\begin{align}
  \pd {\v U} t + \v A( \v U ) \pd {\v U} x = \frac{1}{\epsilon} \v Q ( \v U ),
  \label{eq:relaxation_system}
\end{align}
wherein $\v U (x,t) \in G \subseteq \mathbb{R}^N$ is the (smooth) vector of unknowns and $\v A (\v U )$ is a matrix which we shall call the \emph{Jacobian} of the system, in analogy to conservative systems.\footnote{In systems which can be written on the conservative form $\pd {\v U} {t} + \pd {\v F (\v U)}{x} = 0$, we have that in the weak form \cref{eq:homogeneous_system}, $\v A = \pd {\v F}{\v U}$ is the actual Jacobian of a flux $\v F$.} Further, $\epsilon$ is a characteristic time associated with the relaxation process described by $\v Q(\v U)$.
For an extensive review of the existing literature on such systems, see e.g.~Natalini \cite{natalini_recent_1998}, or, for a more up-to-date summary, consider the first few sections of Solem et al.~\cite{solem14} and the references therein.

Two limits of the relaxation system \eqref{eq:relaxation_system} will be considered in this paper:
\begin{itemize}[]
\item
The \emph{non-stiff limit}, corresponding to the limit
  $\epsilon \to \infty$. In this case, we may write
  \cref{eq:relaxation_system} as
  \begin{align}
    \pd {\v U} t + \v A(\v U) \pd {\v U} x = 0.
    \label{eq:homogeneous_system}
  \end{align}
  We will refer to \cref{eq:homogeneous_system} as the \emph{homogeneous system}.
\item
  The formal \emph{equilibrium limit}, which is characterized by $\v Q (\v U ) \equiv 0$. This defines an equilibrium manifold \cite{chen_hyperbolic_1994} through $\mathcal M = \left\lbrace \v U \in G : \v Q ( \v U ) = 0 \right\rbrace$.
  We now assume that the reduced vector of variables $\v u (x, t) \in \mathbb R^n$, where $n \leq N$, uniquely defines an equilibrium value $\v U = \mathcal E (\v u ) \in \mathcal M$.
  We may then express \cref{eq:relaxation_system} as
  \begin{gather}
    \pd {\v u} t + \v B (\v u ) \pd {\v u} x = 0,\label{eq:equilibrium_system} \quad \v U = \mathcal E ( \v u ),
  \end{gather}
  where
  $\v B (\v u ) = \v P(\v u) \v A (\mathcal E (\v u)) \partial_{\v u} \mathcal E (\v u )$ is the Jacobian of the reduced system. Herein, we have defined the operator $\v P(\v u) : \mathbb R^N \to \mathbb R^n$ through $ \v P (\v u) \partial_{\v u} \mathcal E (\v u )= \v I_n$, i.e.~the identity matrix.
  We will refer to \cref{eq:equilibrium_system} as the \emph{equilibrium system}.
\end{itemize}
We expect solutions of \cref{eq:relaxation_system} to approach solutions of \cref{eq:equilibrium_system} as $\epsilon \to 0$, i.e.~in the \emph{stiff limit}, where the relaxation towards equilibrium is assumed to be instantaneous.

\subsection{The subcharacteristic condition}
An essential concept which arises in the study of relaxation systems and their stability, is the so-called \emph{subcharacteristic condition}.
It was first introduced by Leray \cite{leray53}, subsequently independently found by Whitham \cite{whitham74}, and later developed by Liu \cite{liu_hyperbolic_1987} for conservative $2 \times 2$ systems.
For more general systems, Chen et al.~\cite{chen_hyperbolic_1994} defined an entropy condition which they showed implies the subcharacteristic condition.
Yong \cite{yong_basic_2001} proved that for $n = N-1$, the subcaracteristic condition is necessary for the linear stability of the equilibrium system.
Solem et al.~\cite{solem14} proved that it is also sufficient.
Hence, for rank 1 relaxation processes, \emph{the subcharacteristic condition is equivalent to linear stability}.

For a general $N\times N$ relaxation system, such as \cref{eq:relaxation_system}, the condition may be formulated as follows.
\begin{dfn}[Subcharacteristic condition]\label{dfn:subchar}
  Let the $N$ eigenvalues of the matrix $\v A$ of the homogeneous
  system \eqref{eq:homogeneous_system} be given by $\Lambda_i$, sorted
  in ascending order as
  \begin{align}
    \Lambda_1 \leq \ldots \leq \Lambda_i \leq \Lambda_{i+1} \leq \ldots \leq \Lambda_N .
    \label{eq:sorted_eigenvalues_homogeneous}
  \end{align}
  Similarly, let $\lambda_j$ be the $n$ eigenvalues of the matrix
  $\v B$ of the equilibrium system
  \eqref{eq:equilibrium_system}. Herein, the homogeneous system
  \eqref{eq:homogeneous_system} is applied to a local equilibrium state
  $\v U = \mathcal E (\v u)$, such that $\Lambda_i = \Lambda_i ( \mathcal E (\v u) )$, and $\lambda_j = \lambda ( \v u )$.
  Now, the equilibrium system \eqref{eq:equilibrium_system} is said to
  satisfy the \emph{subcharacteristic condition} with respect to the
  homogeneous system \eqref{eq:homogeneous_system} when (i) all $\lambda_j$ are real, and (ii) if the $\lambda_j$ are sorted in ascending order as
  \begin{align}
    \lambda_1 \leq \ldots \leq \lambda_j \leq \lambda_{j+1} \leq \ldots \leq \lambda_n ,
    \label{eq:sorted_eigenvalues_equilibrium}
  \end{align}
  then the eigenvalues of the equilibrium system are
  \emph{interlaced} with the eigenvalues of the homogeneous system,
  in the sense that $\lambda_j \in [ \Lambda_j , \Lambda_{j+N-n} ]$.
\end{dfn}

The subcharacteristic condition has been shown to be an important trait of many physical models \cite{baudin05a,baudin05b,flatten_wave_2010}, since the eigenvalues then have a direct physical interpretation as the characteristic wave speeds of the system.
In the context of relaxation models for two-phase flow, the fastest wave speeds are the speeds of pressure waves, which are identified as the \emph{fluid-mechanical speeds of sound}.
The subcharacteristic condition then implies in particular that \emph{the sound speeds of an {equilibrium model} can never exceed that of the relaxation model it is derived from.}
This is consistent with the folklore knowledge in the fluid dynamics community that the ``frozen'' speed of sound is higher than the equilibrium speed of sound \cite{gvozdeva1969triple,picard1987calculation,flatten_relaxation_2011}. 

\subsection{The model hierarchy}
In a general averaged two-phase flow model, the two-phase mixture will consist of two fluids which evolve independently.
We assume \emph{local thermodynamic equilibrium} in each phase, i.e.~that each of the phases may be described by an equation of state, such that specifying two thermodynamic quantities completely determines all thermodynamic properties of that phase.
Herein lies also the assumption that the thermodynamic quantities are unaffected by the local velocity field.
Each phase $k$ may then be thought of as having separate pressures $p_k$, temperatures $T_k$, chemical potentials $\mu_k$, and velocities $v_k$. Since the two-phase mixture will move towards phase equilibrium in each of the mentioned variables, we may model these interactions by employing \emph{relaxation source terms} corresponding to the following \emph{relaxation processes}:
\begin{enumerate}
  \item [$p$] volume transfer, i.e.~relaxation towards mechanical equilibrium due to pressure differences between the phases, i.e.~expansion or compression;
  \item [$T$] heat transfer, i.e.~relaxation towards thermal equilibrium, due to temperature differences between the phases; 
  \item[$\mu$] mass transfer: relaxation towards chemical equilibrium due to differences between the phases in chemical potential;\footnote{See also \cref{rmk:mu-equil}.} and
  \item[$v$] momentum transfer, i.e.~relaxation towards velocity equilibrium, due to velocity differences between the phases, i.e.~friction.
\end{enumerate}

The starting point of the forthcoming analysis will be the celebrated Baer--Nunziato (BN) model \cite{baer_two-phase_1986}, which is the most general available formulation of a \emph{two-fluid model}, i.e.~models where the phases have separately governed velocities.
The BN model is endowed with appropriate relaxation terms corresponding to each of these processes presented above.
By considering the \emph{homogeneous} and \emph{equilibrium} limits of each relaxation process, i.e.~assuming all combinations of zero or more of them to be instantaneous, we obtain a \emph{hierarchy of models}, each with partial equilibrium in one or more of the aforementioned variables.

This hierarchy can be represented as a four-dimensional hypercube, as illustrated in \cref{fig:model_hierarchy}.
\begin{figure}[htb]
\centering
\begin{tikzpicture}[scale=1.1, swap, every node/.style={scale=0.65}]
    \foreach \pos/\name/\tag in {{(0,0)/0/$0$}, {(\hs,\vh)/v/$v$}, {(\hs,\ph)/p/$p$}, {(\hs,\Th)/T/$T$}, 
    						{(\hs,\mh)/m/$\mu$}, 
    						{(2*\hs,\vh+\ph)/vp/$vp$}, {(2*\hs,\vh+\Th)/vT/$vT$}, {(2*\hs,\ph+\Th)/pT/$pT$}, 
                            {(2*\hs,\vh+\mh)/vm/$v\mu$}, {(2*\hs,\ph+\mh)/pm/$p\mu$}, {(2*\hs,\Th+\mh)/Tm/$T\mu$},
                            {(3*\hs,\vh+\ph+\Th)/vpT/$vpT$}, {(3*\hs,\vh+\ph+\mh)/vpm/$vp\mu$}, 
                            {(3*\hs,\ph+\Th+\mh)/pTm/$pT\mu$},
                            {(3*\hs,\vh+\Th+\mh)/vTm/$vT\mu$}, {(4*\hs,\vh+\ph+\Th+\mh)/vpTm/$vpT\mu$}}
		\ifthenelse{\equal{\name}{0}}
			{\node[supermain vertex] (\name) at \pos {\tag}}
			{\ifthenelse{\equal{\name}{p} \OR \equal{\name}{pT}}
				{\node[main vertex] (\name) at \pos {\tag}}
				{\ifthenelse{ \equal{\name}{v} \OR \equal{\name}{vp} \OR \equal{\name}{vm} \OR \equal{\name}{vT} \OR \equal{\name}{vpm} \OR \equal{\name}{vpT} \OR \equal{\name}{vpTm} \OR \equal{\name}{vTm}}
					{\node[submain vertex] (\name) at \pos {\tag}}
					{\ifthenelse{ \equal{\name}{pTm}}
						{\node[vertex] (\name) at \pos {\tag}}
						{\node[gray vertex] (\name) at \pos {\tag}}
					}}};
	\foreach \pos/\tag in {{(0,\botl)/7 eqn.},{(\hs,\botl)/6 eqn.}, {(2*\hs,\botl)/5 eqn.}, {(3*\hs,\botl)/4 eqn.},
							{(4*\hs,\botl)/3 eqn.}}
		\node[textthing] (\tag) at \pos {\tag};
    \foreach \source/ \dest /\name /\weight in {0/v/v/v,0/p/p/p,0/T/T/T,0/m/m/\mu}
        \draw[edge \name] (\source) -- node[weight] {$\weight$} (\dest) ;
    \foreach \source/ \dest /\name /\weight in {v/vp/p/p,v/vT/T/T,
    						v/vm/m/\mu,p/vp/v/v,p/pT/T/T,p/pm/m/\mu,
    						T/pT/p/p,T/Tm/m/\mu,T/vT/v/v,m/vm/v/v,m/pm/p/p,m/Tm/T/T,
    						vp/vpT/T/T,vp/vpm/m/\mu,pT/vpT/v/v,pT/pTm/m/\mu,
    						vT/vpT/p/p,vT/vTm/m/\mu,vm/vpm/p/p,vm/vTm/T/T,pm/vpm/v/v,
    						pm/pTm/T/T,Tm/vTm/v/v,Tm/pTm/p/p,vpT/vpTm/m/\mu,
    						vpm/vpTm/T/T,vTm/vpTm/p/p,pTm/vpTm/v/v}
        \draw[edge \name] (\source) -- (\dest) ;
    \begin{pgfonlayer}{background}
	\foreach \source / \dest in {v/vp,vp/vpT,vpT/vpTm,vp/vpm,v/vm,v/vT,vT/vpT,
								v/vm,vT/vTm,vm/vTm,vm/vpm,vTm/vpTm,vpm/vpTm,
								p/pT,p/vp,pT/vpT,pT/pTm,pTm/vpTm}
		\path[selected edge] (\source.center) -- (\dest.center);
    \end{pgfonlayer}
\end{tikzpicture}
\caption{The 4-dimensional hypercube representing the model hierarchy.
  Parallel edges correspond to the same relaxation processes, and each vertex signifies a unique model in the hierarchy, assuming instantaneous relaxation in zero or more of the variables $p$ (pressure), $T$ (temperature), $\mu$ (chemical
  potential) and $v$ (velocity).
  The leftmost, red circle denoted by ``0'' represents the Baer--Nunziato model \cite{baer_two-phase_1986}.
  The colored edges represent relaxation processes where a subcharacteristic condition has been established.
  This was done by \citet{flatten_relaxation_2011} and \citet{lund_hierarchy_2012} between the models shown in yellow circles, by \citet{martinez_ferrer_effect_2012} for the models in green circles, and for the model in blue by \citet{morin_two-fluid_2013}.
}
\label{fig:model_hierarchy}
\end{figure}
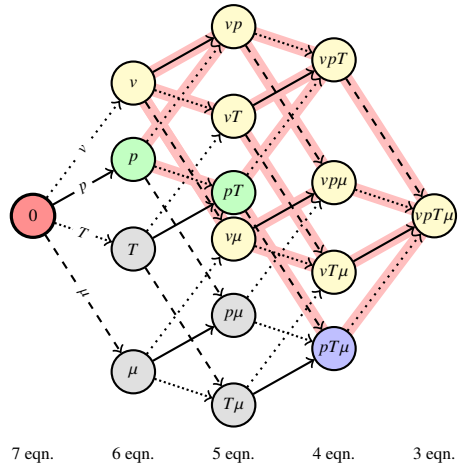
Herein, each model is symbolized by a circle, and corresponds to a ``corner'' of the hypercube.
Parallel edges, in turn, correspond to the same instantaneous relaxation assumption, in the direction of the arrow.
The basic model, denoted by ``0'' and shown in red as the leftmost circle of \cref{fig:model_hierarchy}, is thus reducible to all models in the hierarchy.
Many of the models in the hierarchy have already been derived, explicitly expressed and thoroughly analyzed, and in this respect, the current paper builds heavily on previous work \cite{gallouet2004numerical,allaire2002five,kapila2001two,bdzil1999two,saurel_modelling_2008,coquel2002closure,zein2010modeling,saurel2009simple}.

The models shown in yellow circles in \cref{fig:model_hierarchy} constitute the $v$-branch of the hierarchy, i.e.~the \emph{homogeneous flow} models, wherein the phase velocities are equal.
Such models are subclass of the so-called drift-flux models, where the phasic velocities are related by an algebraic expression.
Herein, the $v$-model was derived by \citet{saurel2009simple}, the $vp$-model is due to \citet{kapila2001two} (see also Refs.~\cite{murrone2005five,allaire2002five}), and the $vpT$-model was studied e.g.~by \cite{flatten_wave_2010}.
The $vpT\mu$-model is known as the \emph{homogeneous equilibrium model} and has been studied by several authors, see e.g.\ Refs.\ \cite{stewart1984two,menikoff1989riemann,clerc2000numerical,voss2005exact,helluy2006relaxation,helluy2011pressure,faccanoni2012modelling}.
Flåtten and Lund \cite{flatten_relaxation_2011} collected results on the $v$-, $vp$-, $vpT$-, and $vpT\mu$-models, derived the $vp\mu$-model, and showed that the subcharacteristic condition was satisfied for all relaxation processes within this branch of the hierarchy.
Lund \cite{lund_hierarchy_2012} completed the $v$-hierarchy by deriving the $vT$-, $v\mu$- and $vT\mu$-models, and established the subcharacteristic condition in the remainder of the $v$-branch, given only physically fundamental assumptions.

With regards to the \emph{two-fluid} models in the hierarchy, several of these models have been derived, employed in simulation \cite{bestion_physical_1990,bendiksen_dynamic_1991}, and analyzed.
Here, the $p$-model was analyzed e.g.\ in Refs.\ \cite{coquel_numerical_1997,stewart1984two}, and the $pT$-model was studied e.g.\ in Refs.~\cite{martinez_ferrer_effect_2012,hammer2014method}.

An important issue with $p$-relaxed (one-pressure) two-fluid models is that they develop complex eigenvalues when the velocity difference between the phases exceeds a critical value, i.e.~they become non-hyperbolic \cite{stewart1984two,cortes1998density,toumi1999approximate,gallouet2010hyperbolic,morin_two-fluid_2013}.  
This may lead to the lack of stable mathematical and numerical solutions.
Nevertheless, these models are extensively used for practical applications; and in numerical simulations they are often mitigated by specifying a regularizing interfacial pressure (see \cite{stuhmiller1977influence,bestion_physical_1990,ndjinga2007influence}).
Further, estimates of fluid-mechanical sound speeds is of practical importance for the construction of efficient numerical schemes \cite{roe1981approximate,hammer2014method}.
For relations between two-fluid models, we find, as in Ref.~\cite{martinez_ferrer_effect_2012}, the need to state a weaker formulation of the subcharacteristic condition.

\begin{dfn}[Weak subcharacteristic condition]\label{dfn:subchar_weak}
  When the subcharacteristic condition of \cref{dfn:subchar} holds with the additonal equilibrium condition of equal phasic velocities, the \emph{weak subcharacteristic condition} is said to be satisfied.
\end{dfn}

The $p$- and $pT$-models were analyzed by Martínez Ferrer et al.\ \cite{martinez_ferrer_effect_2012}, who showed that the subcharacteristic condition, in a weak or strong sense, is satisfied with respect to existing neighbouring models.
Similarly, Morin and Flåtten \cite{morin_two-fluid_2013} derived the $pT\mu$-model, and showed that subcharacteristic conditions were satisfied in relation to existing neighbouring models.
The highlighted edges in \cref{fig:model_hierarchy} summarize the relations between models where a subcharacteristic condition has already been shown to be satisfied.

\subsection{Contributions of this paper}
The objective of the current paper is complete the study of the subcharacteristic condition in the full hierarchy of two-phase flow models, proving the remaining subcharacteristic conditions.
In this respect, a generalization of the work by \citet{flatten_relaxation_2011} and \citet{lund_hierarchy_2012} is provided, extending the hierarchy to encompass also \emph{two-fluid models}, i.e.~models with separate momentum equations for the two phases.
To this end, the derivations of the two-fluid $T$-, $\mu$-, $p\mu$- and $T\mu$-models represent original contributions.
Expressions for the sound speeds in these models are provided.
Moreover, we show that the remaining 15 subcharactistic conditions are satisfied, i.e.~that the subcharacteristic condition is everywhere respected in the hierarchy, either in a strong or in a weak sense.
This is done by comparing the new expressions for the sound speeds to many known results from the literature, and by using techniques involving writing the difference of wave velocities as sums of squares (cf.\ \cite{lund_hierarchy_2012,flatten_relaxation_2011}).
We present each of the models for which we prove at least one subcharacteristic condition. 

\subsection{Outline}
The organization of the current paper is as follows.
In \cref{sec:basic_model} we present the basic model with all possible source terms, derive evolution equations for the primitive variables, and state a parameter set which suffices to satisfy the laws of thermodynamics.
In \cref{sec:v-model,sec:p-model,sec:T-model,sec:mu-model,sec:pmu-model,sec:Tmu-model}, we present in turn the $v$-, $p$-, $T$-, $\mu$-, $p\mu$- and $T\mu$-models.
For each model we give explicit analytic expressions for the sound speeds, and prove the remaining subcharacterisic conditions with respect to related models.
In \cref{sec:comparison} we show plots of the sound speeds in the different models, and briefly discuss physical and mathematical properties of models in the hierarchy.
Finally, in \cref{sec:conclusion}, we draw conclusions and suggest possible future work.


\section{Basic model}\label{sec:basic_model}
In this section, we present the basic BN model \cite{baer_two-phase_1986}.
In this model, which is hyperbolic, the two phases have separate pressures, temperatures, chemical potentials and velocities.
We state the model in a form reminiscent of that proposed by Saurel and Abgrall \cite{saurel_multiphase_1999}, but with all four possible relaxation source terms accounting for the interaction between the phases.
From this, we determine the evolution equations of the primitive variables.
Based on the evolution equations, we derive a parameter set which suffices for the model to satisfy fundamental physical laws.

\subsection{Governing equations}
In the following, we present the governing equations in the basic model, supplemented with physically appropriate relaxation terms.
We let $\alpha_k$ denote volume fraction, $v_k$ velocity, $\rho_k$ density, $p_k$ pressure, $T_k$ temperature, $\mu_k$ chemical potential, $e_k$ internal energy per mass, for each phase $k \in \lbrace \Gas, \Liq \rbrace$, where $\Gas$ denotes gas and $\Liq$ denotes liquid.

\subsubsection{Volume advection}
We assume that apart from advection, the interface between the phases can only move due to pressure differences.
This is commonly formulated as
\begin{align}
  \pd {\alpha\gas} t + \vp \pd {\alpha\gas} x = \J (p\gas - p\liq), \label{eq:volumeg}
\end{align}
wherein $\vp$ is an interface velocity and $\J$ is the pressure relaxation parameter. Hence, the volume fraction is advected with the velocity $\vp$.
There are several discussions available on how to choose this interface velocity, see e.g.~\cite{coquel2002closure,saurel2018diffuse}.
In the following, we shall motivate it from a thermodynamic point of view, using the second law of thermodynamics.

The local volume transfer must occur so that the phase with the lowest pressure is compressed, and the phase with the highest pressure is expanded.
This is enforced through $\J \geq 0$.
Moreover, the volume fractions must satisfy $\alpha\gas + \alpha\liq = 1$, where $\alpha_k \in (0,1)$, and hence only one evolution equation for the volume fractions is needed.

\subsubsection{Mass balance}
The evolution of the mass of each phase is contained in the balance equations
\begin{gather}
  \pd {\alpha\gas\rho\gas} t + \pd {\alpha\gas\rho\gas v\gas } x = \K
  (\mu\liq - \mu\gas),\label{eq:massg7} \\
  \pd {\alpha\liq\rho\liq} t + \pd {\alpha\liq\rho\liq v\liq } x = \K (\mu\gas - \mu\liq), \label{eq:massl7}
\end{gather}
wherein $\K$ is the mass relaxation parameter, and the source terms on the right hand sides of \cref{eq:massg7,eq:massl7} account for mass transfer between the phases \cite{gavrilyuk1999new,gavrilyuk2002mathematical}.
The mass transfer occurs from the phase with the highest chemical potential towards the phase with the lowest, which is ensured through the assumption $\K \geq 0$.
We observe that \emph{conservation of total mass} is contained by summing \cref{eq:massg7,eq:massl7}:
\begin{align}
  \pd {\rho} t + \pd {\left(\alpha\gas\rho\gas v\gas + \alpha\liq\rho\liq v\liq\right)} x = 0,
  \label{eq:cons_mass}
\end{align}
wherein we have defined the mixture density $\rho = \alpha\gas \rho\gas + \alpha\liq \rho\liq.$

\subsubsection{Momentum balance}
Similar balance laws apply for the momentum of each phase:
\begin{align}
  \pd {\alpha\gas\rho\gas v\gas} t + \pd {(\alpha\gas\rho\gas v\gas^2 + \alpha\gas p\gas)} x - p\intph \pd {\alpha\gas} x = v\intph \K (\mu\liq - \mu\gas) + {\M (v\liq-v\gas)} , \label{eq:momg7} \\
  \pd {\alpha\liq\rho\liq v\liq} t + \pd {(\alpha\liq\rho\liq v\liq^2 + \alpha\liq p\liq)} x - p\intph \pd {\alpha\liq} x = v\intph \K (\mu\gas - \mu\liq) + {\M (v\gas-v\liq)}. \label{eq:moml7}
\end{align}
Herein, $p\intph$ is an interface pressure and $\M$ is the momentum relaxation parameter.
Note that from the averaging procedure resulting in these models, the interface velocity $v\intph$ in \cref{eq:momg7,eq:moml7} need not be the same as that in \cref{eq:volumeg} (see e.g.~Ref.~\cite{morin2012mathematical}).
However, we have chosen these to be equal to keep the notation to a minimum, as this will not influence the main conclusions of this paper.
The source terms associated with $v\intph$ on the right hand sides of \cref{eq:momg7,eq:moml7} represent the momentum of the condensating or vaporizing fluid, which is transferred to the other phase.
The source terms associated with $\M$ represent interfacial friction, and are assumed to cause momentum transfer from the phase with highest velocity towards the one with lowest velocity, which is ensured by requiring $\M \geq 0$.
We observe that \emph{conservation of total momentum} is ensured by summing \cref{eq:momg7,eq:moml7}:
\begin{align}
  \pd {\left( \alpha\gas\rho\gas v\gas + \alpha\liq\rho\liq v\liq \right)} t + \pd {\left(\alpha\gas\rho\gas v\gas^2 + \alpha\liq\rho\liq v\liq^2 + \alpha\gas p\gas + \alpha\liq p\liq \right)} x = 0.
  \label{eq:cons_mom}
\end{align}

\subsubsection{Energy balance}
The balance laws for the energy of each phase may be stated as
\begin{multline}
  \pd {E\gas} t + \pd {\left( E\gas v\gas + \alpha\gas v\gas p\gas
    \right)} x - p\intph \vp \pd {\alpha\gas} x \\ = -p\intph \J (p\gas-p\liq) + \left( \mu\intph + \tfrac{1}{2}v\intph^2 \right) \K (\mu\liq - \mu\gas) + \vm \M (v\liq-v\gas) + \H (T\liq-T\gas), \label{eq:eng7}
\end{multline}
\begin{multline}
  \pd {E\liq} t + \pd {\left( E\liq v\liq + \alpha\liq v\liq p\liq
    \right)} x - p\intph \vp \pd {\alpha\liq} x \\ = - p\intph \J (p\liq-p\gas) + \left( \mu\intph + \tfrac{1}{2}v\intph^2 \right) \K (\mu\gas - \mu\liq) + \vm \M (v\gas-v\liq) + \H (T\gas-T\liq). \label{eq:enl7}
\end{multline}
Herein, $\mu\intph$ is an interface chemical potential, $\H$ is the temperature relaxation parameter, and we have introduced the total phasic energy per volume $E_k = \Eint + \Ekin$, where the phasic internal and kinetic energies are given by, respectively,
\begin{align}
\Eint &= \alpha_k \rho_k e_k, \label{eq:def_Eint}\\
\Ekin &= \tfrac{1}{2} \alpha_k \rho_k v_k^2 .\label{eq:def_Ekin}
\end{align}
On the right hand side of \cref{eq:eng7,eq:enl7}, the terms associated with $\J$ represent energy transfer due to expansion--compression work, the terms associated with $\K$ represent the energy which the condensating or vaporizing fluid brings into the other phase, the terms associated with $\M$ represent energy transfer due to frictious momentum transfer, and the terms associated with $\H$ represent pure heat flow.
The latter should flow from the hotter to the colder phase, which is ensured through the assumption $\H \geq 0$.
Moreover, we see that \emph{total energy is conserved} by summing \cref{eq:eng7,eq:enl7},
\begin{align}
  \pd {E} t + \pd {\left( E\gas v\gas + E\liq v\liq + \alpha\gas v\gas p\gas + \alpha\liq v\liq p\liq \right)} x = 0,
  \label{eq:cons_en}
\end{align}
where we have introduced the mixed total energy per volume $E = E\gas + E\liq$.

\subsubsection{Phase independent form}
With all possible relaxation terms, the BN model \cite{baer_two-phase_1986}, as presented in \cref{eq:volumeg,eq:massg7,eq:massl7,eq:momg7,eq:moml7,eq:eng7,eq:enl7}, can be stated compactly as
\begin{align}
  \pd {\alpha_k} t + \vp \pd {\alpha_k} x &=I_k, \label{eq:volume7k}\\
  \pd {\alpha_k \rho_k} t + \pd {\alpha_k \rho_k v_k } x &= K_k, \label{eq:mass7k}\\
  \pd {\alpha_k \rho_k
    v_k} t + \pd {(\alpha_k \rho_k v_k^2 + \alpha_k p_k)} x - p\intph
  \pd {\alpha_k} x &= v\intph K_k + M_k, \label{eq:mom7k} \\
  \pd {E_k} t + \pd {\left( E_k v_k + \alpha_k v_k p_k \right)} x - p\intph \vp \pd {\alpha_k} x &= -p\intph I_k + \left( \mu\intph + \tfrac{1}{2}v\intph^2 \right) K_k + \vm M_k + H_k, \label{eq:energy7k}
\end{align}
for each phase $k \in \{ \Gas, \Liq \}$. Herein, the shorthand forms
of the relaxation source terms, $I_k$, $K_k$, $H_k$ and $M_k$, have
been defined such that $I\gas = -I\liq = \J (p\gas - p\liq)$, $K\gas = -K\liq = \K (\mu\liq - \mu\gas)$, $H\gas = -H\liq = \H (T\liq-T\gas)$, and $M\gas = -M\liq = \M (v\liq-v\gas)$.
\subsection{Evolution of primitive variables}
In order to systematically derive other models in the hierarchy, and
to derive a physically valid parameter set for the basic model, we now
seek the evolution equations for \emph{primitive variables}, such as
phasic velocity $v_k$, density $\rho_k$, pressure $p_k$, temperature
$T_k$, entropy $s_k$ and chemical potential $\mu_k$. To simplify the
notation in the forthcoming, we introduce the phasic \emph{material
  derivative}, defined by
\begin{align}
  \D_k \left( \cdot \right) \equiv \pd {\left( \cdot \right)} t + v_k \pd {\left( \cdot \right)} x,
  \label{eq:material_derivative}
\end{align}
for each phase $k \in \lbrace \Gas, \Liq \rbrace$. 

In the forthcoming calculations, the following relation will prove useful.
For an arbitrary quantity $f$, we have from \cref{eq:mass7k,eq:material_derivative} that
\begin{align}
  \alpha_k \rho_k \D_k f 
  &= \pd {\alpha_k \rho_k f } t + \pd {\alpha_k \rho_k v_k f } x - f K_k.
  \label{eq:partialDK}
\end{align}

Note that in order to keep the following analysis as short and concise as possible, most of the details in the derivations are omitted.
Further, in the analysis we assume each phase to be described by a completely general (smooth) equation of state.
The assumption that \emph{any primitive thermodynamic variable can be determined by knowledge of any two other} is important (see also the Supplementary Material for a list of relations between thermodynamic differentials).
Modelling equations of state and closure laws is beyond the scope of this paper.

We now proceed to find the evolution equations for the primitive variables.

\subsubsection{Volume fraction}
For clarity we state the evolution equation for the volume
fraction. Using \cref{eq:volume7k}, we have that
\begin{align}
  \D_k \alpha_k = I_k + (v_k - \vp) \pd{\alpha_k}{x}.
  \label{eq:Dalpha_k}
\end{align}

\subsubsection{Velocity}
We now seek the evolution equation for the phasic velocity. Using
$f = v_k$ in \cref{eq:partialDK}, 
and \cref{eq:mom7k}, we obtain 
\begin{align}
  \D_k v_k &= (\alpha_k \rho_k)^{-1} \left( (p\intph - p_k) \pd {\alpha_k} x - \alpha_k \pd {p_k} x + (v\intph - v_k) K_k + M_k \right).
  \label{eq:Dv_k}
\end{align}

\subsubsection{Density}
The density evolution equation is found by combining
\cref{eq:mass7k,eq:Dalpha_k},
\begin{align}
  \D_k \rho_k &= - \frac{\rho_k}{\alpha_k} (v_k - \vp) \pd{\alpha_k}{x} - \rho_k \pd {v_k} x  - \frac{\rho_k}{\alpha_k} I_k + \frac{1}{\alpha_k} K_k .
  \label{eq:Drho_k}
\end{align}

\subsubsection{Kinetic energy}
In order to obtain the evolution equation for the specific internal
energy, we start by finding the evoluton equations for the kinetic
energy. 
Using $f=v_k^2/2$ in \cref{eq:partialDK}, and
\cref{eq:Dv_k,eq:def_Ekin}, we obtain
\begin{align}
  \pd {\Ekin} t + \pd {\Ekin v_k} x + \alpha_k v_k \pd {p_k} x + v_k (p_k - p\intph) \pd {\alpha_k} x =  \left(v\intph v_k - \tfrac{1}{2} v_k^2 \right) K_k + v_k M_k.
  \label{eq:ekink}
\end{align}

\subsubsection{Internal energy}
We obtain the evolution equation for the
internal energy by subtracting \cref{eq:ekink} from
\cref{eq:energy7k}, expanding and collecting terms:
\begin{equation}
  \pd {\Eint} t + \pd {\Eint v_k} x + \alpha_k p_k \pd {v_k} x + p\intph (v_k - \vp) \pd {\alpha_k} x
  = -p\intph I_k + g_k K_k + (\vm -v_k) M_k + H_k, \label{eq:EintK}
\end{equation}
where we have introduced a shorthand interface energy $g_k = \mu\intph + \tfrac{1}{2} \left( v\intph - v_k \right)^2$.
Now, by using \cref{eq:EintK,eq:def_Eint} and $f=e_k$ in
\cref{eq:partialDK}, we obtain
\begin{equation}
  \D_k e_k = \tfrac{1}{\alpha_k \rho_k}\big(- p\intph ( I_k + (v_k - \vp) \pd {\alpha_k} x ) - \alpha_k p_k \pd {v_k} x + ( g_k - e_k ) K_k + (\vm -v_k ) M_k + H_k \big) .\label{eq:De_k}
\end{equation}

\subsubsection{Entropy}
The fundamental thermodynamic differential---alternatively referred to as the first law of thermodynamics---reads
\begin{align}
  \diff e_k = T_k \diff s_k + p_k \rho_k^{-2} \diff \rho_k,
  \label{eq:1stlaw}
\end{align}
where $s_k$ is the specific entropy of phase $k$.
By writing \cref{eq:1stlaw} in terms of material derivatives, and inserting \cref{eq:De_k,eq:Drho_k}, we obtain the evolution equation for the phasic entropy as
\begin{equation}
  \D_k s_k = (\alpha_k \rho_k T_k)^{-1} \big[ (p_k - p\intph) \left( I_k + (v_k - \vp) \pd {\alpha_k} x \right) + (g_k - h_k ) K_k + (\vm -v_k) M_k + H_k \big].
  \label{eq:Ds_k}
\end{equation}
Herein, the phasic specific enthalpy is defined as $h_k = e_k + p_k /\rho_k$.
By using $f=s_k$ in \cref{eq:partialDK} along with the identity
$\mu_k = h_k - T_k s_k$, \cref{eq:Ds_k} may be written in the balance
form
\begin{equation}
  \pd {S_k} t + \pd{S_k v_k} x = T_k^{-1} \Bigl[ (p_k - p\intph) \left( I_k + (v_k - \vp) \pd {\alpha_k} x \right) + (g_k - \mu_k ) K_k + (\vm -v_k) M_k + H_k \Bigr]
  \label{eq:DS_k}
\end{equation}
where we have defined the phasic entropy per volume $S_k = \alpha_k \rho_k s_k$.

\subsubsection{Pressure}
The pressure differential in terms of the density and entropy differentials may be written as
\begin{align}
  \diff p_k = c_k^2 \diff \rho_k + \Gamma_k \rho_k T_k \diff s_k,
  \label{eq:dp_drho_ds}
\end{align}
where we have introduced the phasic \emph{thermodynamic} speed of sound and the first Gr\"uneisen coefficient, respectively defined by $c_k^2 = \pdcinl{p_k}{\rho_k}{s_k}$ and $\Gamma_k = \rho_k^{-1} \pdcinl {p_k} {e_k}{\rho_k}$.
By writing \cref{eq:dp_drho_ds} in terms of the phasic material derivative, and inserting \cref{eq:Drho_k,eq:De_k}, we arrive at
\begin{equation}
  \D_k p_k 
  = \tfrac{\Gamma_k (p_k - p\intph) - \rho_k c_k^2 }{\alpha_k} \left( I_k + (v_k - \vp) \pd{\alpha_k}{x} \right) - \rho_k c_k^2 \pd {v_k} x + \tfrac{\Gamma_k (g_k - h_k ) + c_k^2}{\alpha_k} K_k + \tfrac{\Gamma_k}{\alpha_k} (\vm -v_k) M_k + \tfrac{\Gamma_k}{\alpha_k} H_k .
  \label{eq:Dp_k}
\end{equation}

\subsubsection{Temperature}
We now seek the equation governing the phasic temperature
evolution. The temperature differential may in terms of the pressure
and entropy differentials be written as
\begin{align}
  \diff T_k = \Gamma_k T_k \rho_k^{-1} c_k^{-2} \diff p_k + T_k \Cpk^{-1} \diff s_k,
  \label{eq:dT_dp_ds}
\end{align}
where the specific isobaric heat capacity is defined by $\Cpk = T_k \pdcinl {s_k}{T_k}{p_k}$.
Now, writing \cref{eq:dT_dp_ds} in terms of phasic material
derivatives, and inserting \cref{eq:Ds_k,eq:Dp_k}, we obtain
\begin{multline}
  \D_k T_k 
  = \left[ \frac{1 + \frac{\Gamma_k^2 \Cpk T_k}{c_k^2} }{\alpha_k \rho_k \Cpk} (p_k - p\intph) - \frac{\Gamma_k T_k}{ \alpha_k } \right] \left( I_k + (v_k - \vp) \pd{\alpha_k}{x} \right) - \Gamma_k T_k \pd {v_k} x \\ + \left[ \frac{\Gamma_k T_k}{\alpha_k \rho_k} +  \frac{1 + \frac{\Gamma_k^2 \Cpk T_k}{c_k^2} }{\alpha_k \rho_k \Cpk} (g_k - h_k ) \right] K_k 
  + \frac{1 + \frac{\Gamma_k^2 \Cpk T_k}{c_k^2} }{\alpha_k \rho_k \Cpk} \left[ (\vm -v_k) M_k + H_k  \right].
  \label{eq:DT_k}
\end{multline}

\subsubsection{Chemical potential}
The natural differential of the phasic chemical potential reads
\begin{align}
  \diff \mu_k = \rho_k^{-1} \diff p_k - s_k \diff T_k.
  \label{eq:dmu_dp_dT}
\end{align}
Therefore, writing \cref{eq:dmu_dp_dT} in terms of phasic material
derivatives, and inserting \cref{eq:Dp_k,eq:DT_k}, we obtain
\begin{multline}
  \D_k \mu_k 
  = \tfrac{1}{\alpha_k} \left[ \left(\Gamma_k - \tfrac{s_k}{\Cpk} - \tfrac{\Gamma_k^2 T_k s_k}{c_k^2} \right) \tfrac{(p_k - p\intph)}{\rho_k} -  c_k^2 + \Gamma_k T_k s_k \right] \left( I_k + (v_k - \vp) \pd{\alpha_k}{x} \right) \\
  - \left( c_k^2 - \Gamma_k T_k s_k \right) \pd {v_k} x 
  + \tfrac{1}{\alpha_k \rho_k }\left[ c_k^2 -\Gamma_k T_k s_k + \left( \Gamma_k - \tfrac{s_k}{\Cpk} - \tfrac{\Gamma_k^2 T_k s_k}{c_k^2} \right) (g_k - h_k ) \right] K_k \\
  + \tfrac{1}{\alpha_k \rho_k} \left(\Gamma_k - \tfrac{s_k}{\Cpk} - \tfrac{\Gamma_k^2 T_k s_k}{c_k^2} \right) \left[ (\vm -v_k) M_k + H_k \right].
  \label{eq:Dmu_k}
\end{multline}

\subsection{Laws of thermodynamics}
For the model to correctly represent physical phenomena, it should be verified that it satisfies fundamental physical principles \cite{flatten_relaxation_2011,flatten_wave_2010}.
We have already verified that it conserves mass, momentum and energy, respectively represented by \cref{eq:cons_mass,eq:cons_mom,eq:cons_en}, where the latter is known as the first law of thermodynamics.
We now consider the second law of thermodynamics, which states that the total entropy should be non-decreasing.
The analysis in the following is reminiscent of that of various previous works \cite{flatten_relaxation_2011,coquel2002closure}.

\subsubsection{Total entropy evolution}
The total entropy per volume is given by $S = S\gas + S\liq$.
The evolution equation for the total entropy is therefore found by summing \cref{eq:DS_k} over $k \in \{ \Gas, \Liq \}$:
\begin{align}
  \pd {S} t + \pd{(S\gas v\gas + S\liq v\liq)} x = \S_p + \S_{\mu} + \S_v + \S_T = \S,
  \label{eq:DS}
\end{align}
where we have defined the \emph{entropy source terms}
\begin{align}
  \S_p &= \left( \tfrac{p\gas - p\intph}{T\gas} - \tfrac{p\liq - p\intph}{T\liq} \right) I\gas + \left[ \tfrac{(p\gas - p\intph)(v\gas - \vp)}{T\gas} - \tfrac{(p\liq - p\intph)(v\liq - \vp)}{T\liq} \right] \pd {\alpha\gas} x ,\label{eq:entropy_source_p}\\
  \S_{\mu} &= \left( \left( \mu\intph - \mu\gas + \tfrac{1}{2} (v\intph-v\gas)^2 \right) T\gas^{-1} - \left( \mu\intph - \mu\liq + \tfrac{1}{2} (v\intph-v\liq)^2 \right) T\liq^{-1} \right) K\gas ,\label{eq:entropy_source_mu}\\
  \S_v &= \left[ \left(\vm -v\gas\right) T\gas^{-1} - \left(\vm -v\liq\right) T\liq^{-1} \right] M\gas ,\label{eq:entropy_source_v}\\
  \S_T &= ( T\gas^{-1} - T\liq^{-1} ) H\gas. \label{eq:entropy_source_T}
\end{align}
\subsubsection{The second law of thermodynamics}
We define the \emph{global entropy} as
\begin{align}
  \Omega (t) = \int_{\mathscr C}  S(x, t) \, \diff x,
  \label{eq:def_global_entropy}
\end{align}
where $\mathscr C \subseteq \mathbb R $ is some closed region.
\begin{dfn}\label{dfn:second_law}
  The \emph{second law of thermodynamics} states that the global entropy is non-decreasing, i.e.,
  \begin{align}
    \d {\Omega} t
      \geq 0 \quad \forall t,
    \label{eq:def_second_law}
  \end{align}
  in our context.
\end{dfn}

\begin{prop}
Sufficient conditions for the relaxation model given by \cref{eq:volumeg,eq:massg7,eq:massl7,eq:momg7,eq:moml7,eq:eng7,eq:enl7} to satisfy the second law of thermodynamics (\cref{dfn:second_law}) are
\begin{gather}
  \J,\K,\M,\H \geq 0,\label{eq:JKMHgeq0}\\
  \min\{ \mu\gas, \mu\liq \} \leq \mu\intph \leq \max\{ \mu\gas, \mu\liq \}, \label{eq:condmui}\\
  p\intph = \tfrac{ \sqrt{T\liq} p\gas + \sqrt{T\gas} p\liq }{\sqrt{T\gas} + \sqrt{T\liq}} , \label{eq:condpi}\\
  \vp = \tfrac{ \sqrt{T\liq} v\gas + \sqrt{T\gas} v\liq }{\sqrt{T\gas} + \sqrt{T\liq}}, \label{eq:condvivp}
\end{gather}
given only the physically fundamental assumption $T_k \geq 0$ for $k \in \{ \Gas, \Liq \}$.
\end{prop}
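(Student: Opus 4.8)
The plan is to reduce \cref{dfn:second_law} to a pointwise sign condition on the entropy production and then verify it mechanism by mechanism. Differentiating the global entropy \cref{eq:def_global_entropy} in time and inserting the total balance \cref{eq:DS}, the spatial flux $\partial_x(S\gas v\gas + S\liq v\liq)$ contributes only the net entropy flux through $\partial\mathscr C$, which vanishes for an isolated system with no entropy exchange across the boundary. Thus $\d{\Omega}{t} = \int_{\mathscr C}\S\,\diff x$, and it remains to prove that the source $\S = \S_p + \S_\mu + \S_v + \S_T$ of \cref{eq:DS} is non-negative pointwise. Since $\S_p,\S_\mu,\S_v,\S_T$ are proportional to the independent, individually non-negative parameters $\J,\K,\M,\H$ of \cref{eq:JKMHgeq0}, I would establish non-negativity of each term separately, which suffices.

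The thermal term \cref{eq:entropy_source_T} is immediate: with $H\gas = \H(T\liq - T\gas)$ one gets $\S_T = \H(T\liq - T\gas)^2/(T\gas T\liq)\ge 0$. The crux is the pressure term \cref{eq:entropy_source_p}, which is the only source carrying the spatial gradient $\partial_x\alpha\gas$. Because this gradient is an independent datum of arbitrary sign, its coefficient cannot merely be bounded---it must vanish identically, and this is exactly what fixes the interface quantities. Substituting $p\intph$ and $\vp$ from \cref{eq:condpi,eq:condvivp}, both $p_k - p\intph$ and $v_k - \vp$ become multiples of $p\gas - p\liq$ and $v\gas - v\liq$ with coefficients proportional to $\sqrt{T_k}$; a direct substitution then shows the bracket multiplying $\partial_x\alpha\gas$ collapses to zero. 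What survives of \cref{eq:entropy_source_p}, namely the $I\gas$ contribution, reduces to $\S_p = \J(p\gas-p\liq)^2/\sqrt{T\gas T\liq}\ge 0$, and, with the interface velocities taken equal ($\vm = v\intph = \vp$), the friction term \cref{eq:entropy_source_v} similarly collapses to $\S_v = \M(v\gas - v\liq)^2/\sqrt{T\gas T\liq}\ge 0$.

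There remains the mass-transfer term \cref{eq:entropy_source_mu}. Here I would again use $v\intph = \vp$: the kinetic contributions $\tfrac12(v\intph - v\gas)^2 T\gas^{-1}$ and $\tfrac12(v\intph - v\liq)^2 T\liq^{-1}$ become equal and cancel, leaving $\S_\mu = \bigl[(\mu\intph - \mu\gas)T\gas^{-1} - (\mu\intph - \mu\liq)T\liq^{-1}\bigr]\K(\mu\liq - \mu\gas)$. Its sign is then settled by a case distinction on $\operatorname{sign}(\mu\gas - \mu\liq)$: the bound \cref{eq:condmui} forces $\mu\intph - \mu\gas$ and $\mu\intph - \mu\liq$ to carry signs that align with $\mu\liq - \mu\gas$, so the bracket and the factor $\K(\mu\liq - \mu\gas)$ always share a common sign and their product is non-negative (vanishing when $\mu\gas = \mu\liq$). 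The main obstacle, and the only genuinely delicate step, is the exact cancellation of the $\partial_x\alpha\gas$ coefficient in $\S_p$: it is this algebraic identity that both singles out the square-root--weighted averages \cref{eq:condpi,eq:condvivp} and eliminates the single term whose sign could not otherwise be controlled. Everything else is sum-of-squares structure or elementary sign bookkeeping once those choices are imposed.
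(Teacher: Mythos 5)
Your proposal is correct and follows essentially the same route as the paper: reduce to pointwise non-negativity of each partial entropy source, observe that the square-root-of-temperature weightings in \cref{eq:condpi,eq:condvivp} kill the $\partial_x\alpha\gas$ term and turn $\S_p$, $\S_v$, $\S_T$ into explicit squares, and control $\S_\mu$ via \cref{eq:condmui}. The only cosmetic difference is that the paper handles $\S_\mu$ by writing $\mu\intph$ as an explicit convex combination $\beta_\mu\mu\gas+(1-\beta_\mu)\mu\liq$, yielding a manifest sum-of-squares form, where you argue by a sign case distinction --- the two are equivalent.
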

\begin{proof}
  By temporal differentiation of \cref{eq:def_global_entropy}, in combination with \cref{eq:def_second_law,eq:DS}, we obtain
  \begin{align}
    \int_{\mathscr C}\diff x \, \S (x,t) \geq 0,
    \label{eq:second_law_source}
  \end{align}
  where we have assumed that the entropy flux of \cref{eq:DS},
  $S\gas v\gas + S\liq v\liq$, vanishes at the boundary of
  $\mathscr C$. For \cref{eq:second_law_source} to be
  satisfied, clearly $\S \geq 0$ is a sufficient criterion, for which
  statement to hold the non-negativity of \emph{all} the partial source terms
  $\S_p, \S_{\mu}, \S_T$ and $\S_v$ is in turn sufficient. We now show
  this for each of the terms under the conditions of \cref{eq:JKMHgeq0,eq:condmui,eq:condpi,eq:condvivp}.

  Firstly, the conditions \cref{eq:condpi,eq:condvivp} inserted
  into \cref{eq:entropy_source_p} yields
  \begin{align}
    \S_p = \J \left( p\gas - p\liq \right)^2 (T\gas T\liq)^{-1/2} \geq 0.
  \end{align}
  Now, \cref{eq:condmui} is equivalent to $\mu\intph = \beta_\mu \mu\gas + (1-\beta_\mu) \mu\liq$, with $\beta_\mu \in [0,1]$. Hence, combination of \cref{eq:condvivp,eq:entropy_source_mu} yields
  \begin{align}
    \S_{\mu} = \K (\mu\liq -\mu\gas)^2 \left[ (1-\beta_\mu) T\gas^{-1} + \beta_\mu T\liq^{-1} \right] \geq 0.
  \end{align}
  Next, \Cref{eq:condvivp} inserted into \cref{eq:entropy_source_v} yields
  \begin{align}
    \S_v = \M (v\liq-v\gas)^2 (T\gas T\liq)^{-1/2} \geq 0.
  \end{align}
  Finally, \cref{eq:entropy_source_T} becomes
  \begin{align}
    \S_T = \H (T\liq-T\gas)^2 (T\gas T\liq)^{-1} \geq 0,
  \end{align}
  and hence all the source terms are non-negative.
\end{proof}

\begin{rmk}
  The interface conditons \cref{eq:condpi,eq:condvivp} are sufficient, not necessary, and the square-root-of-temperature weighted average between the phasic values differs from choices in the literature, e.g.~the initial choices by \citet{baer_two-phase_1986}.
  The reason for this particular weighting is that we enforced the interface velicities in \cref{eq:momg7,eq:moml7,eq:volumeg} to be equal, as noted above.
  Allowing these to differ would enable other linear combinations of the phasic quantities, which could possibly be more suitable for numerical simulations \cite{saurel2018diffuse}.
  These differences, however, do not have implications for the main conclusions of this paper.
\end{rmk}

\subsection{Wave velocities}
We now consider the homogeneous limit of the BN model, where the source terms $\J, \K, \M, \H \to 0$.
The resulting model has previously been extensively studied by several authors, see e.g.~\cite{saurel_multiphase_1999,zein2010modeling}.
The model has two fluid-mechanical sound speeds; one for each of the phases.
The seven wave velocities are given by $\boldsymbol{\lambda}_0 = \{ \vp, v\gas, v\liq, v\gas - c\gas, v\gas + c\gas, v\liq - c\liq, v\liq + c\liq \} $ \cite{saurel_multiphase_1999}.

In typical applications, the flow is subsonic, i.e. $|v\gas-v\liq| \ll c\gas, c\liq$ may be a valid approximation.
Evaluated in the velocity equilibrium limit, taking $v \equiv v\gas = v\liq$, the eigenvalues are, sorted in ascending order,
\begin{align}
  \boldsymbol{\lambda}_0^{(0)} = \{ v - c_{0,+} , v - c_{0,-} , v, v, v, v + c_{0,-}, v + c_{0,+} \}
  \label{eq:eigenvalues_0_local}
\end{align}
where we have defined
$ c_{0,+} = \max \{ c\gas, c\liq \}$  
 and
$ c_{0,-} = \min \{ c\gas, c\liq \}$ 
as the higher and lower sound speeds, respectively.


\section{The $v$-model}\label{sec:v-model}
We now study the model that arises upon imposing instantaneous equilibrium in velocity, i.e.~letting the velocity relaxation parameter $\M \to \infty$, which we expect corresponds to
\begin{align}
  v\gas = v\liq \equiv v.
  \label{eq:v-equil}
\end{align}
Simultaneously, we require the term $M\gas = \M (v\liq-v\gas)$ to remain finite.
By noting that for a general function $f$, the phasic material derivatives are equal for the two phases, i.e.~
$ \D_k f = \pd f t + v \pd f x \equiv \D f $,
then the system that results from evaluating \cref{eq:Dv_k} for the two phases $k \in \{ \Gas, \Liq \}$ can be solved to yield
\begin{align}
  M\gas 
  = \left( Y\gas p\liq + Y\liq p\gas - p\intph \right) \pd {\alpha\gas} x  + \alpha\gas Y\liq \pd {p\gas} x - \alpha\liq Y\gas \pd {p\liq} x ,
  \label{eq:M_v-equil}
\end{align}
where we have introduced the phasic mass fractions $Y_k = \alpha_k \rho_k / \rho $.
The model that now results from inserting \cref{eq:M_v-equil,eq:v-equil} into the basic model of \cref{sec:basic_model}, was analyzed by Flåtten and Lund \cite{flatten_relaxation_2011,lund_hierarchy_2012}, as it constitutes the basic model of the $v$-branch of the hierarchy. The model is hyperbolic and has previously been studied by many authors \cite{pelanti2014mixture,kapila2001two,saurel2009simple}.

\subsection{Wave velocities}
The wave velocities of the velocity equilibrium model, in the homogeneous limit where $\J,\K,\H\to 0$, are given by \cite{flatten_relaxation_2011}
\begin{align}
  \boldsymbol{\lambda}_v = \left\lbrace v - c_v , v, v, v, v, v + c_v \right\rbrace.
\end{align}
Herein, the sound speed of this model is defined by
\begin{align}
  c_v^2 = Y\gas c\gas^2 + Y\liq c\liq^2.
  \label{eq:c_v}
\end{align}

\begin{prop}\label{prop:SC_v_0}
  The $v$-model satisfies the subcharacteristic condition with respect to the basic model, given only the physically fundamental conditions $\rho_k, c_k^2 > 0$, for $k \in \{ \Gas, \Liq \}$.
\end{prop}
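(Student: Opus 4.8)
The plan is to verify \cref{dfn:subchar} directly, exploiting that explicit eigenvalue lists are already available for both systems. First I would take the sorted eigenvalues of the homogeneous basic model evaluated at the local velocity equilibrium $v\gas = v\liq \equiv v$, namely $\boldsymbol{\lambda}_0^{(0)}$ of \cref{eq:eigenvalues_0_local}, and set $\Lambda_1 = v - c_{0,+}$, $\Lambda_2 = v - c_{0,-}$, $\Lambda_3 = \Lambda_4 = \Lambda_5 = v$, $\Lambda_6 = v + c_{0,-}$, $\Lambda_7 = v + c_{0,+}$. On the equilibrium side I would use $\boldsymbol{\lambda}_v$ with the sound speed $c_v$ of \cref{eq:c_v}, giving $\lambda_1 = v - c_v$, $\lambda_2 = \lambda_3 = \lambda_4 = \lambda_5 = v$, $\lambda_6 = v + c_v$. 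Here $N = 7$ and $n = 6$, so $N - n = 1$ and the required interlacing reduces to $\lambda_j \in [\Lambda_j, \Lambda_{j+1}]$.

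The crux is the observation that \cref{eq:c_v} expresses $c_v^2$ as a convex combination of the phasic sound speeds. Since the mass fractions satisfy $Y\gas + Y\liq = 1$ with $Y_k = \alpha_k \rho_k / \rho$ well-defined and strictly positive by $\rho_k > 0$, the quantity $c_v^2 = Y\gas c\gas^2 + Y\liq c\liq^2$ lies between $c\gas^2$ and $c\liq^2$. Writing $c_{0,-} = \min\{c\gas, c\liq\}$ and $c_{0,+} = \max\{c\gas, c\liq\}$, this yields
\begin{align}
  c_{0,-} \leq c_v \leq c_{0,+}.
\end{align}
Moreover $c_k^2 > 0$ forces $c_v^2 > 0$, so $c_v$ is real; all six equilibrium eigenvalues are then real, which settles part (i) of \cref{dfn:subchar}.

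For part (ii) I would check the six inclusions in turn. The extremal cases $j = 1$ and $j = 6$ are exactly $v - c_{0,+} \leq v - c_v \leq v - c_{0,-}$ and $v + c_{0,-} \leq v + c_v \leq v + c_{0,+}$, both equivalent to the bound just established. The interior cases $j = 2, 3, 4, 5$ require $v$ to lie in $[v - c_{0,-}, v]$, $[v, v]$, $[v, v]$ and $[v, v + c_{0,-}]$, respectively, all of which hold because $c_{0,-} \geq 0$ (again from $c_k^2 > 0$). Hence every $\lambda_j$ is interlaced as required, and the subcharacteristic condition follows.

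I expect no genuine obstacle here: once \cref{eq:c_v} is recognized as a convex combination, the verification is mechanical and invokes precisely the stated hypotheses $\rho_k, c_k^2 > 0$. The only point demanding care is bookkeeping of multiplicities---matching the sorted indexing of \cref{dfn:subchar} correctly, since the basic model carries three copies of the eigenvalue $v$ (from $\vp$, $v\gas$ and $v\liq$ all collapsing to $v$) while the $v$-model carries four, so that the single lost degree of freedom $N - n = 1$ is accounted for consistently.
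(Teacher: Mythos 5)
Your proposal is correct and follows essentially the same route as the paper's proof: the key step in both is recognizing $c_v^2$ in \cref{eq:c_v} as a convex combination with weights $Y_k \in (0,1)$, giving $c_{0,-} \leq c_v \leq c_{0,+}$, from which interlacing with \cref{eq:eigenvalues_0_local} follows. You merely spell out the index bookkeeping that the paper dismisses as trivial, and that bookkeeping is carried out correctly.
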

\begin{proof}
  We observe that $Y\gas + Y\liq = 1$, and due to the given positivity conditions, we have that $Y_k \in (0,1)$.
  Therefore, \cref{eq:c_v} implies that $\min \lbrace c\gas, c\liq \rbrace \leq c_v \leq \max \lbrace c\gas, c\liq \rbrace$.
  It then follows trivially that the wave velocities of the $v$-model are \emph{interlaced} in the wave velocities \cref{eq:eigenvalues_0_local} of the basic model evaluated in the velocity equilibrium state \cref{eq:v-equil}.
  Hence, the associated subcharacteristic condition of \cref{dfn:subchar} is satisfied.
\end{proof}


\section{The $p$-model}\label{sec:p-model}
In this section, we consider the mechanical equilibrium model, which arises when we assume instantaneous mechanical equilibrium in the basic model of \cref{sec:basic_model}.
We let the pressure relaxation parameter $\J \to \infty$, which we expect to correspond to $p\gas = p\liq \equiv p$.
Simultaneously, the product $I\gas = \J (p\gas-p\liq)$ should remain finite.
The mechanical equilibrium model is found by using \cref{eq:Dp_k} evaluated for each of the two phases.
From this, we may find an expression for $I\gas$ without temporal derivatives, and insert it into the basic model of \cref{sec:basic_model}.
The resulting model has been extensively studied previously \cite{saurel_multiphase_1999,martinez_ferrer_effect_2012}.
As other one-pressure two-fluid models, the model is not hyperbolic.

\subsection{Wave velocities}
We consider now the homogeneous limit, where $\K,\M,\H\to 0$.
The eigenvalues to the lowest order in the small parameter $\varepsilon = v\gas-v\liq$, i.e. evaluated in the equilibrium state defined by \cref{eq:v-equil}, are given by \cite{martinez_ferrer_effect_2012}
\begin{align}
  \boldsymbol{\lambda}_p^{(0)} = \left\lbrace v-c_p , v, v, v, v, v+c_p \right\rbrace,
\end{align}
where the sound speed in the $p$-model is given by
\begin{align}
  c_p^2
  = \left(\tfrac{\alpha\gas}{\rho\gas} + \tfrac{\alpha\liq}{\rho\liq} \right) \left( \tfrac{\alpha\gas}{\rho\gas c\gas^2 } + \tfrac{\alpha\liq}{\rho\liq c\liq^2 } \right)^{-1} . \label{eq:c_p}
\end{align}

\begin{prop}\label{prop:SC_p_0}
  The $p$-model satisfies the weak subcharacteristic condition of \cref{dfn:subchar_weak} with respect
  to the basic model of \cref{sec:basic_model}, subject only to the
  physically fundamental conditions $\rho_k, c_k^2 > 0$, for $k \in \{ \Gas, \Liq \}$, in the equilibrium state defined by
  \cref{eq:v-equil}.
\end{prop}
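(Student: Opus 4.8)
The plan is to apply the subcharacteristic condition of \cref{dfn:subchar} directly to the two eigenvalue lists evaluated in the velocity equilibrium state \cref{eq:v-equil}, namely $\boldsymbol{\lambda}_0^{(0)}$ from \cref{eq:eigenvalues_0_local} for the basic model ($N = 7$) and $\boldsymbol{\lambda}_p^{(0)}$ for the $p$-model ($n = 6$). Since this is precisely the equilibrium condition of equal phasic velocities, establishing \cref{dfn:subchar} in this state is by definition the weak subcharacteristic condition of \cref{dfn:subchar_weak}. First I would sort both lists in ascending order, write out the interlacing windows, and observe that here $N - n = 1$, so the condition to verify reduces to $\lambda_j \in [\Lambda_j, \Lambda_{j+1}]$ for each $j$.

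With the basic-model eigenvalues sorted as $v - c_{0,+} \le v - c_{0,-} \le v \le v \le v \le v + c_{0,-} \le v + c_{0,+}$ and the $p$-model eigenvalues as $v - c_p \le v \le v \le v \le v \le v + c_p$, the five interior conditions involving the material waves $\lambda_j = v$ hold automatically, since each $v$ falls at an endpoint of its window $[\Lambda_j, \Lambda_{j+1}]$. The only nontrivial requirements come from the acoustic waves: $\lambda_1 = v - c_p \in [v - c_{0,+}, v - c_{0,-}]$ and $\lambda_6 = v + c_p \in [v + c_{0,-}, v + c_{0,+}]$. Both reduce to the single inequality $c_{0,-} \le c_p \le c_{0,+}$, that is, $\min\{c\gas, c\liq\} \le c_p \le \max\{c\gas, c\liq\}$. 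So the entire statement collapses to bounding the $p$-model sound speed \cref{eq:c_p} by the two phasic sound speeds.

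To establish this bound I would recognize $c_p^2$ in \cref{eq:c_p} as a weighted harmonic mean of $c\gas^2$ and $c\liq^2$: writing $a = \alpha\gas/\rho\gas > 0$ and $b = \alpha\liq/\rho\liq > 0$, one has $c_p^2 = (a + b)(a/c\gas^2 + b/c\liq^2)^{-1}$. Under the positivity assumptions $\rho_k, c_k^2 > 0$ the weights are strictly positive, so $c_p^2 > 0$ is real and positive (securing condition (i) that the eigenvalues are real in this state), and a weighted harmonic mean of positive numbers always lies between their minimum and maximum. Concretely, assuming without loss of generality $c\gas^2 \le c\liq^2$, the inequality $c_p^2 \ge c\gas^2$ is equivalent after clearing the positive denominator to $b(1 - c\gas^2/c\liq^2) \ge 0$, which holds; and $c_p^2 \le c\liq^2$ is equivalent to $a(c\liq^2/c\gas^2 - 1) \ge 0$, which also holds. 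Taking positive square roots yields $\min\{c\gas, c\liq\} \le c_p \le \max\{c\gas, c\liq\}$, exactly the required bound, which completes the verification of interlacing.

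I do not anticipate a genuine obstacle here: unlike the later two-fluid models, no sum-of-squares rewriting of wave-speed differences is needed, and the argument rests entirely on the elementary harmonic-mean bound. The only points demanding care are bookkeeping ones, namely correctly handling the dimension mismatch $N - n = 1$ so that the acoustic eigenvalues are compared against the \emph{adjacent} (rather than the extreme) eigenvalues of the basic model, and confirming that the fivefold-degenerate material wave $v$ sits inside each of its interlacing windows. Both are immediate once the sorted lists are written out.
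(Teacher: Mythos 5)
Your proposal is correct and follows essentially the same route as the paper: both reduce the weak subcharacteristic condition to the single bound $\min\{c\gas,c\liq\}\leq c_p\leq\max\{c\gas,c\liq\}$ and obtain it from the elementary observation that \cref{eq:c_p} is a mean of $c\gas^2$ and $c\liq^2$ with positive weights (the paper writes it as a convex combination with weights $\varphi_k\propto\alpha_k/(\rho_k c_k^2)$, which is algebraically the same fact as your weighted-harmonic-mean bound). Your explicit interlacing bookkeeping for $N-n=1$ is a correct spelling-out of what the paper leaves implicit.
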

\begin{proof}
  We see from \cref{eq:c_p} that $c_p^2$ is a convex combination
  \begin{align}
    c_p^2 = \varphi\gas c\gas^2 + \varphi\liq c\liq^2, \qquad \textrm{where} \qquad \varphi_k = \left( \tfrac{\alpha_k}{\rho_k c_k^2} \right) \left( \tfrac{\alpha\gas}{\rho\gas c\gas^2 } + \tfrac{\alpha\liq}{\rho\liq c\liq^2 } \right)^{-1},
  \end{align}
  since $\varphi\gas + \varphi\liq=1$, and $\varphi_k \in (0, 1)$, due
  to the given conditions. This implies that
  \begin{align}
    \min\{ c\gas, c\liq \} \leq c_p \leq \max\{ c\gas, c\liq \},
  \end{align}
  and hence the weak subcharacteristic condition is fullfilled with respect
  to the basic model, whose local eigenvalues evaluated in the same
  state are given by \cref{eq:eigenvalues_0_local}.
\end{proof}


\section{The $T$-model}\label{sec:T-model}
In this section, we investigate the thermal-equilibrium model ($T$-model), which emerges from assuming instantaneous thermal equilibrium in the basic model of \cref{sec:basic_model}.
To this end, we let $\H \to \infty$ herein, which we expect corresponds to
\begin{align}
  T\gas = T\liq \equiv T,
  \label{eq:T-equil}
\end{align}
in such a way that $H\gas = \H (T\liq-T\gas)$ remains finite.
In the following we present the governing equations.

\subsection{Governing equations}
The full $T$-model may be stated as the basic model of \cref{sec:basic_model}, in which \cref{eq:eng7,eq:enl7} are replaced by \cref{eq:cons_en} and the thermal equilibrium condition \cref{eq:T-equil}.

In order to establish the impact of instantaneous thermal relaxation on the wave velocities, we need to express the model on a quasi-linear form, and thus obtain the velocities as the eigenvalues of the associated Jacobian.
This is most easily done by exploiting the primitive variables, which is what we now turn to do.

Firstly, we have that the phasic pressure differential in terms of density and temperature may be written as
\begin{align}
  \diff p_k
  = c_k^2 \zeta_k^{-1} \diff \rho_k + \Gamma_k \rho_k \Cpk \zeta_k^{-1} \diff T .
  \label{eq:Dpk_T-equil}
\end{align}
where we have introduced the ratio of specific heats $\zeta_k = 1 + \Gamma_k^2 \Cpk T / c_k^2 $,
and used \cref{eq:T-equil}. With \cref{eq:Dpk_T-equil}, \cref{eq:Dv_k} becomes
\begin{align}
  \D_k v_k 
           &= \tfrac{\Dip_k}{m_k} \pd {\alpha_k} x - \tfrac{ c_k^2}{\rho_k \zeta_k} \pd {\rho_k} x - \tfrac{\Gamma_k \Cpk}{\zeta_k} \pd T x + \tfrac{\Div_k}{m_k} K_k + \tfrac{1}{m_k} M_k ,
             \label{eq:Dv_k_T-equil}
\end{align}
where we have defined the phasic mass per volume $m_k= \alpha_k \rho_k$, the phasic interface pressure jump $\Dip_k= p\intph - p_k$, and the phasic interface velocity difference $\Div_k= v\intph - v_k$. Furthermore, \cref{eq:DT_k} becomes
\begin{multline}
  \D_k T = - \left[ \tfrac{ \zeta_k \Dip_k}{\mCpk} + \tfrac{\Gamma_k T}{ \alpha_k } \right] \left(  I_k - \Dpv_k \pd{\alpha_k}{x} \right)
  - \Gamma_k T \pd {v_k} x + \left[ \tfrac{\Gamma_k T}{m_k} + \tfrac{ \zeta_k }{\mCpk} (g_k - h_k ) \right] K_k
  + \tfrac{ \zeta_k }{\mCpk} \Dpv_k M_k + \tfrac{ \zeta_k }{\mCpk} H_k ,
  \label{eq:DT_1}
\end{multline}
where we have introduced the extensive heat capacity at constant pressure
$\mCpk = m_k \Cpk$.
We now define the weighting factor $\theta_k = \mCpk \zeta_k^{-1} / (\mCpg \zeta\gas^{-1} + \mCpl \zeta\liq^{-1})$,
for which clearly $\theta\gas + \theta\liq = 1$ and
$\theta_k \in (0, 1)$. Multiplying \cref{eq:DT_1} by $\theta_k$, and
summing over the phases yields
\begin{multline}
  \pd T t + \left( \theta\gas v\gas + \theta\liq v\liq \right) \pd T x
 = - \left[  \tfrac{\theta\gas\Gamma\gas T}{\alpha\gas} + \tfrac{\theta\liq\Gamma\liq T}{\alpha\liq}\right] \tfrac{v\gas-v\liq}{2} \pd{\alpha\gas}{x}
  - \theta\gas \Gamma\gas T \pd {v\gas} x
  - \theta\liq \Gamma\liq T \pd {v\liq} x \\
  + \left[ \frac{ p\gas - p\liq }{\frac{\mCpg}{\zeta\gas} + \frac{\mCpl}{\zeta\liq}} - \frac{\theta\gas\Gamma\gas T}{\alpha\gas} + \frac{\theta\liq \Gamma\liq T}{\alpha\liq} \right] I\gas 
  + \left[ \frac{ h\liq - h\gas  }{\frac{\mCpg}{\zeta\gas} + \frac{\mCpl}{\zeta\liq}} +  \frac{\theta\gas \Gamma\gas T}{m\gas} - \frac{\theta\liq\Gamma\liq T}{m\liq} \right] K\gas 
  + \frac{v\liq -v\gas}{\frac{\mCpg}{\zeta\gas} + \frac{\mCpl}{\zeta\liq}} M\gas ,
  \label{eq:DT}
\end{multline}
where have used the interface parameter definitions of \cref{eq:condvivp,eq:condpi} evaluated in thermal equilibrium \cref{eq:T-equil} to simplify.


\subsection{Wave velocities}
We now seek the wave velocities, i.e.~eigenvalues, in the homogeneous limit, where the relaxation source terms $\J,\K,\M\to 0$.
From \cref{eq:Dalpha_k}, it is then clear that $\alpha\gas$ is a characteristic variable of the system, since the volume fraction is advected with the velocity $\vp$ in the absence of relaxation source terms.
By using \cref{eq:Drho_k,eq:Dv_k_T-equil,eq:DT}, the remaining, reduced system may now be expressed on the quasi-linear form $\pd { \tilde{\v u}_T } t + \tilde{\v A}_T \left(\tilde{\v u}_T\right) \pd { \tilde{\v u}_T } x = 0$, where $\tilde{\v u}_T = [ \rho\gas, \rho\liq, v\gas, v\liq, T ]$, and the associated Jacobian is given by
\begin{align}
  \tilde {\v A} _T = \begin{bmatrix}
    v\gas & 0 & \rho\gas & 0 & 0 \\
    0 & v\liq & 0 & \rho\liq & 0 \\
    \frac{c\gas^2}{\rho\gas \zeta\gas} & 0 & v\gas & 0 & \frac{\Gamma\gas \Cpg}{\zeta\gas} \\
    0 & \frac{c\liq^2}{\rho\liq \zeta\liq} & 0 & v\liq & \frac{\Gamma\liq \Cpl}{\zeta\liq} \\
    0 & 0 & \theta\gas \Gamma\gas T & \theta\liq \Gamma\liq T & \theta\gas v\gas + \theta\liq v\liq
  \end{bmatrix},
\end{align}
from which we can find the remaining five eigenvalues. The characteristic polynomial of the latter is a fifth-degree polynomial, for which in general no closed-form solution can be obtained.
We now note that we may write $\tilde {\v A}_T = \tilde {\v A}_T^{(0)} + \varepsilon \tilde {\v A}_T^{(1)} $,
where $\varepsilon=v\gas-v\liq$. The matrices are given by
\begin{align}
  \tilde {\v A} _T^{(0)} = \begin{bmatrix}
    \bar v & 0 & \rho\gas & 0 & 0 \\
    0 & \bar v & 0 & \rho\liq & 0 \\
    \frac{c\gas^2}{\rho\gas \zeta\gas} & 0 & \bar v & 0 & \frac{\Gamma\gas \Cpg}{\zeta\gas} \\
    0 & \frac{c\liq^2}{\rho\liq \zeta\liq} & 0 & \bar v & \frac{\Gamma\liq \Cpl}{\zeta\liq} \\
    0 & 0 & \theta\gas \Gamma\gas T & \theta\liq \Gamma\liq T & \bar v
  \end{bmatrix},
\end{align}
and
$
  \tilde {\v A} _T^{(1)} 
  = \diag\left( \theta\liq, -\theta\gas, \theta\liq, -\theta\gas, 0 \right),
$
where we have taken $\bar v = \theta\gas v\gas + \theta\liq v\liq$.  Hence,
we approximate the eigenvalues by means of a perturbation expansion in
the small parameter $\veps$. To the lowest order in $\veps$, $v\gas = v\liq = \bar v = v$, and the eigenvalues of the $T$-model are given by
\begin{align}
  \boldsymbol{\lambda}_T^{(0)} = \left\lbrace v - c_{T,+}, v - c_{T,-}, v, v, v + c_{T,-}, v + c_{T,+} \right\rbrace
\end{align}
where 
the two distinct sound speeds of the model are given by
\begin{equation}
  c_{T,\pm}^2 = \frac{ \tfrac{c\gas^2 + c\liq^2}{T} \left(
      \tfrac{1}{\mCpg} + \tfrac{1}{\mCpl} \right) +
    \tfrac{\Gamma\gas^2 c\liq^2}{m\gas c\gas^2} + \tfrac{\Gamma\liq^2
      c\gas^2}{m\liq c\liq^2}  \pm \sqrt{ \left[ \tfrac{c\gas^2 -
          c\liq^2}{T} \left( \tfrac{1}{\mCpg} + \tfrac{1}{\mCpl}
        \right) -\tfrac{\Gamma\gas^2 c\liq^2}{m\gas c\gas^2} +
        \tfrac{\Gamma\liq^2 c\gas^2}{m\liq c\liq^2} \right]^2 + 4
      \tfrac{\Gamma\gas^2 \Gamma\liq^2}{m\gas m\liq} } }{
      2 \left[ \tfrac{\Gamma\gas^2}{m\gas c\gas^2} +
        \tfrac{\Gamma\liq^2}{m\liq c\liq^2} + \tfrac{1}{T} \left(
          \tfrac{1}{\mCpg} + \tfrac{1}{\mCpl} \right) \right]  } .
  \label{eq:c_T}
\end{equation}

\begin{prop}\label{prop:SC_T_0}
  The $T$-model satisfies the weak subcharacteristic condition with respect to the basic model of \cref{sec:basic_model}, subject only to the physically fundamental conditions $\rho_k, \Cpk, T > 0$, for $k\in\{\Gas,\Liq\}$, in the equilibrium state defined by \cref{eq:v-equil}.
\end{prop}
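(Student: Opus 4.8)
The plan is to turn the interlacing requirement of \cref{dfn:subchar}, in the weak form of \cref{dfn:subchar_weak} (i.e.\ evaluated in the velocity-equilibrium state \cref{eq:v-equil}), into a pair of inequalities between the sound speeds $c_{T,\pm}$ of \cref{eq:c_T} and the phasic sound speeds $c\gas, c\liq$. In this state the basic model contributes $N=7$ eigenvalues, given by \cref{eq:eigenvalues_0_local}, while the $T$-model contributes $n=6$, given by $\boldsymbol{\lambda}_T^{(0)}$; hence $N-n=1$ and the condition reads $\lambda_j \in [\Lambda_j,\Lambda_{j+1}]$. Matching the two sorted lists, the central eigenvalues equal to $v$ are automatically interlaced, and the acoustic ones collapse to the two requirements $c_{T,-} \le \min\{c\gas,c\liq\}$ and $\min\{c\gas,c\liq\} \le c_{T,+} \le \max\{c\gas,c\liq\}$, together with reality of $c_{T,\pm}$ (part (i) of \cref{dfn:subchar}).

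The device I would use is to read off from \cref{eq:c_T} that $c_{T,+}^2$ and $c_{T,-}^2$ are precisely the two roots of the quadratic $f(x) = A x^2 - B x + C$, with
\[
  A = \frac{\Gamma\gas^2}{m\gas c\gas^2} + \frac{\Gamma\liq^2}{m\liq c\liq^2} + \frac{1}{T}\Bigl(\frac{1}{\mCpg}+\frac{1}{\mCpl}\Bigr), \qquad
  B = \frac{c\gas^2+c\liq^2}{T}\Bigl(\frac{1}{\mCpg}+\frac{1}{\mCpl}\Bigr) + \frac{\Gamma\gas^2 c\liq^2}{m\gas c\gas^2} + \frac{\Gamma\liq^2 c\gas^2}{m\liq c\liq^2}.
\]
The missing coefficient $C$ I recover from the discriminant identity $B^2 - 4AC = D$, where $D$ is the expression under the root in \cref{eq:c_T}; the key algebraic step is that the square completes so that $B^2 - D = \tfrac{4}{T}\bigl(\tfrac{1}{\mCpg}+\tfrac{1}{\mCpl}\bigr)\,c\gas^2 c\liq^2\,A$, whence $C = \tfrac{c\gas^2 c\liq^2}{T}\bigl(\tfrac{1}{\mCpg}+\tfrac{1}{\mCpl}\bigr)$. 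With the given positivity of $\rho_k, \Cpk, T$ (and the standing $c_k^2 > 0$ needed for the basic model to be hyperbolic) all of $A,B,C$ are strictly positive, so both roots are real and positive, which settles reality at once.

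Finally, I would locate $c\gas^2$ and $c\liq^2$ relative to the roots by evaluating $f$ there. A short computation, in which the terms carrying $\tfrac{1}{\mCpg}+\tfrac{1}{\mCpl}$ and the cross-terms cancel, yields the factored forms
\[
  f(c\gas^2) = \frac{\Gamma\gas^2}{m\gas}\bigl(c\gas^2 - c\liq^2\bigr), \qquad
  f(c\liq^2) = \frac{\Gamma\liq^2}{m\liq}\bigl(c\liq^2 - c\gas^2\bigr),
\]
the second following from the first by the $\Gas\leftrightarrow\Liq$ symmetry of \cref{eq:c_T}. Assuming without loss of generality $c\gas^2 \le c\liq^2$, the first is $\le 0$ and the second $\ge 0$. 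Since $f$ is an upward parabola, $f(c\gas^2)\le 0$ places $c\gas^2$ between the roots, giving $c_{T,-}^2 \le c\gas^2 \le c_{T,+}^2$, while $f(c\liq^2)\ge 0$ together with $c\liq^2 \ge c\gas^2 \ge c_{T,-}^2$ forces $c\liq^2 \ge c_{T,+}^2$; the degenerate case $c\gas^2 = c\liq^2$ is handled directly by checking that the remaining root then lies strictly below the common value. This establishes both inequalities and hence the weak subcharacteristic condition. The main obstacle is the middle step: the algebraic simplification of $C$ through $B^2 - D$ and the cancellations producing the clean factorizations of $f(c_k^2)$; once these are in hand the sign analysis is immediate.
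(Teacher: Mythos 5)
Your proposal is correct, and it reaches the conclusion by a route that is algebraically equivalent to, but organized differently from, the paper's. The paper works with two aggregate quantities: the difference of inverse products, $c_{T,+}^{-2}c_{T,-}^{-2}-c_{0,+}^{-2}c_{0,-}^{-2}=\Z_T^0\geq 0$, and the quadruple product of differences, which it writes as $-\Q_T^0\leq 0$ by exhibiting $\Q_T^0$ as an explicit square; the ordering is then the only one compatible with both. Your device of introducing the monic-up-to-$A$ quadratic $f$ with roots $c_{T,\pm}^2$ and evaluating it at $c\gas^2$ and $c\liq^2$ is finer-grained: your factorizations $f(c\gas^2)=\tfrac{\Gamma\gas^2}{m\gas}(c\gas^2-c\liq^2)$ and $f(c\liq^2)=\tfrac{\Gamma\liq^2}{m\liq}(c\liq^2-c\gas^2)$ are correct (I verified the cancellations, as well as $C=\tfrac{c\gas^2c\liq^2}{T}(\tfrac{1}{\mCpg}+\tfrac{1}{\mCpl})$ via $B^2-D=4AC$), and their product is exactly $A^2$ times the paper's $-\Q_T^0$. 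Keeping the two signs separate lets you place each frozen sound speed relative to the roots individually, so you do not need the paper's auxiliary product inequality \cref{eq:SC_T_0_ordering_1} to exclude spurious orderings; this is arguably more transparent and avoids writing out the bulky explicit form of $\Q_T^0$. Two small points. First, your sentence ``$A,B,C$ are strictly positive, so both roots are real'' is not a valid inference on its own; reality comes from the discriminant $D$ being manifestly a sum of a square and the non-negative term $4\Gamma\gas^2\Gamma\liq^2/(m\gas m\liq)$, which you already have in hand, and positivity of the (then real) roots follows from $C/A>0$ and $B/A>0$. Second, like the paper's own argument, you need $c_k^2>0$ in addition to the listed hypotheses $\rho_k,\Cpk,T>0$; you note this explicitly, which is fair, since the paper uses it tacitly as well.
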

\begin{proof}
  We first show that the sound speeds are real. We note that on the given conditions, clearly $c_{T,\pm}^2 \in \mathbb R$, and moreover, $c_{T,+}^2 \geq 0$. The
  product of the sound speeds may be written as
  \begin{align}
    c_{T,+}^{-2} c_{T,-}^{-2} = c_{0,+}^{-2} c_{0,-}^{-2} + \Z_T^0,
    \quad
    \textrm{where}
    \quad
    \Z_T^0 = \frac{T \left(
      \tfrac{\Gamma\gas^2}{m\gas c\gas^2} + \tfrac{\Gamma\liq^2}{m\liq
        c\liq^2} \right)}{c_{0,+}^2 c_{0,-}^2 \left( \tfrac{1}{\mCpg} + \tfrac{1}{\mCpl} \right)}.
  \end{align}
  Based on the given conditions, it is clear that $\Z_T^0 \geq 0$ and therefore
  \begin{align}
    0 \leq c_{T,+}^2c_{T,-}^2 \leq c_{0,+}^2 c_{0,-}^2,
    \label{eq:SC_T_0_ordering_1}
  \end{align}
  and hence also $c_{T,-}^2 \geq 0$, and thus $c_{T,\pm}$ are real,
  and by definition, positive. Now, using the definitions of $c_{0,\pm}$ and \cref{eq:c_T}, it follows that
  \begin{gather}
    (c_{0,+}^2 - c_{T,+}^2)(c_{0,+}^2 - c_{T,-}^2)(c_{0,-}^2 -
    c_{T,+}^2)(c_{0,-}^2 - c_{T,-}^2) = - \Q_T^0,
    \label{eq:SC_T_0_prod}
  \end{gather}
  where
  \begin{gather}
    \Q_T^0 = \left( c\gas^2 - c\liq^2 \right)^2 \tfrac{ \Gamma\gas^2
      \Gamma\liq^2 }{m\gas m\liq} \left[ \tfrac{\Gamma\gas^2}{m\gas
        c\gas^2} + \tfrac{\Gamma\liq^2}{m\liq c\liq^2} + \tfrac{1}{T}
      \left( \tfrac{1}{\mCpg} + \tfrac{1}{\mCpl} \right) \right]^{-2} .
  \end{gather}
  The given conditions ensure that $\Q_T^0 \geq 0$. The only ordering of sound speeds compatible with \cref{eq:SC_T_0_prod,eq:SC_T_0_ordering_1} is
   $ 0 \leq c_{T,-} \leq c_{0,-} \leq c_{T,+} \leq c_{0,+} $,
  and hence the subcharacteristic condition of \cref{dfn:subchar} is
  satisfied.
\end{proof}

\begin{prop}\label{prop:SC_vT_T}
  The $vT$-model of Lund \cite{lund_hierarchy_2012} satisfies the
  subcharacteristic condition with respect to the $T$-model, given the
  physically fundamental assumptions $\rho_k, \Cpk, T > 0, $ for $k\in\{\Gas,\Liq\}$.
\end{prop}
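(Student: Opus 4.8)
The plan is to reduce the claimed interlacing to a single scalar inequality between sound speeds, and then to establish that inequality through a symmetric product-of-differences computation, exactly in the spirit of the proof of \cref{prop:SC_T_0}. Since the $vT$-model enforces both thermal and velocity equilibrium, it carries a single sound speed $c_{vT}$ and the five eigenvalues $\boldsymbol{\lambda}_{vT} = \{ v - c_{vT}, v, v, v, v + c_{vT} \}$, whereas the $T$-model, evaluated at this same local equilibrium state (where in particular $v\gas = v\liq = v$), has the six eigenvalues $\boldsymbol{\lambda}_T^{(0)}$ with the two sound speeds $c_{T,\pm}$ of \cref{eq:c_T}. Because the added relaxation process is precisely velocity, the equilibrium state already satisfies $v\gas = v\liq$, so this is the strong subcharacteristic condition evaluated at $\boldsymbol{\lambda}_T^{(0)}$. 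Writing out the requirement $\lambda_j \in [\Lambda_j, \Lambda_{j+1}]$ of \cref{dfn:subchar} for $N = 6$ and $n = 5$, the three central eigenvalues equal to $v$ are placed trivially, and the whole condition collapses to the single two-sided bound $c_{T,-} \leq c_{vT} \leq c_{T,+}$.

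First I would record that $c_{vT}$ is real and positive, which follows from the expression for the $vT$-model sound speed derived by Lund \cite{lund_hierarchy_2012} together with the hypotheses $\rho_k, \Cpk, T > 0$; reality and positivity of $c_{T,\pm}$ were already settled in \cref{prop:SC_T_0}. Rather than bounding $c_{vT}$ against $c_{T,+}$ and $c_{T,-}$ separately, which would force me to carry the radical in \cref{eq:c_T}, I would form the symmetric product
\begin{align}
  P = \left( c_{T,+}^2 - c_{vT}^2 \right)\left( c_{vT}^2 - c_{T,-}^2 \right) = -c_{vT}^4 + \left( c_{T,+}^2 + c_{T,-}^2 \right) c_{vT}^2 - c_{T,+}^2 c_{T,-}^2 .
\end{align}
Since $c_{T,-}^2 \leq c_{T,+}^2$, the inequality $P \geq 0$ forces $c_{vT}^2$ into the closed interval $[c_{T,-}^2, c_{T,+}^2]$, which is exactly the bound required above. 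The advantage of working with $P$ is that it involves $c_{T,\pm}^2$ only through their sum $c_{T,+}^2 + c_{T,-}^2$ and product $c_{T,+}^2 c_{T,-}^2$, both of which are read off rationally from the numerator and denominator of \cref{eq:c_T} with no square root remaining.

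The computational core is then to substitute Lund's explicit $c_{vT}^2$ into $P$, clear the common denominator (the square of the bracketed factor in the denominator of \cref{eq:c_T}), and reorganize the resulting rational expression into a manifestly non-negative form. I expect this to mirror the structure seen in \cref{prop:SC_T_0}, where the analogous obstruction collapsed to a non-negative quantity $\Q_T^0$ proportional to $(c\gas^2 - c\liq^2)^2 \Gamma\gas^2 \Gamma\liq^2$; here I anticipate $P$ collecting into a sum of squares in which the differences $c\gas^2 - c\liq^2$ and the Gr\"uneisen coefficients $\Gamma\gas, \Gamma\liq$ appear quadratically, so that positivity is immediate under the stated hypotheses. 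Once $P \geq 0$ is secured, the bound $c_{T,-} \leq c_{vT} \leq c_{T,+}$ follows, the interlacing of \cref{dfn:subchar} holds, and the proof is complete.

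The hard part will be the algebraic reduction of $P$ to a sum of squares: the expression for $c_{vT}^2$ and the two symmetric functions of $c_{T,\pm}^2$ each carry several competing terms in $\Gamma_k$, $c_k^2$, $m_k$, $\Cpk$ and $T$, and the cancellations that expose the squared structure are delicate. To guard against sign and bookkeeping errors I would verify consistency in the degenerate cases of equal phasic sound speeds ($c\gas = c\liq$) and vanishing Gr\"uneisen coupling ($\Gamma\gas = \Gamma\liq = 0$), in each of which $P$ and the gap $c_{T,+}^2 - c_{T,-}^2$ must degenerate compatibly, before trusting the general sum-of-squares identity.
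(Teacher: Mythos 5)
Your proposal follows essentially the same route as the paper: the paper likewise reduces the interlacing to the single bound $c_{T,-}\leq c_{vT}\leq c_{T,+}$ and establishes it by writing the product of differences $\left(c_{T,+}^2-c_{vT}^2\right)\left(c_{T,-}^2-c_{vT}^2\right)=-\Q_{vT}^T$ with $\Q_{vT}^T=Y\gas Y\liq[\cdots]^2$ manifestly non-negative, which is exactly your $P\geq 0$ up to a sign convention. The only piece you defer --- the algebraic collapse of $P$ to a non-negative expression --- is precisely the single-square (times $Y\gas Y\liq$) form the paper exhibits, so your plan is sound and matches the published argument.
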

\begin{proof}
  The sound speed of the $vT$-model is given by \cite{lund_hierarchy_2012}
  \begin{align}
    c_{vT}^2 = \frac{1}{\rho}\frac{m\gas c\gas^2 m\liq c\liq^2 \left(
        \frac{\Gamma\gas}{m\gas c\gas^2} + \frac{\Gamma\liq}{m\liq
          c\liq^2} \right)^2 + \frac{1}{T} \left( \frac{1}{\mCpg} +
        \frac{1}{\mCpl} \right) \left( m\gas c\gas^2 + m\liq c\liq^2
      \right)}{\frac{\Gamma\gas^2}{m\gas c\gas^2} +
      \frac{\Gamma\liq^2}{m\liq c\liq^2} + \frac{1}{T} \left(
        \frac{1}{\mCpg} + \frac{1}{\mCpl} \right)} 
    .
  \end{align}
  Now, using \cref{eq:c_T}, we can write the product of the
  differences as
  \begin{align}
    \left( c_{T,+}^2 - c_{vT}^2 \right) \left( c_{T,-}^2 - c_{vT}^2
    \right) 
    = - \Q_{vT}^T,
    \label{eq:SC_vT_T_prod}
  \end{align}
  where
  \begin{align}
    \Q_{vT}^T = Y\gas Y\liq \left[ \frac{ \frac{1}{T} \left( \frac{1}{\mCpg}
        + \frac{1}{\mCpl} \right) \left( c\gas^2 - c\liq^2 \right) -
      \frac{\Gamma\gas^2 c\liq^2}{m\gas c\gas^2} + \frac{\Gamma\liq^2
        c\gas^2}{m\liq c\liq^2} + \left( \frac{1}{m\gas} -
        \frac{1}{m\liq} \right) \Gamma\gas \Gamma\liq }{
      \frac{\Gamma\gas^2}{m\gas c\gas^2} +
      \frac{\Gamma\liq^2}{m\liq c\liq^2} + \frac{1}{T} \left(
        \frac{1}{\mCpg} + \frac{1}{\mCpl} \right) }\right]^2 .
  \end{align}
  With the given conditions, clearly $\Q_{vT}^T \geq 0$.
  Hence exactly one of the factors on the left hand side of \cref{eq:SC_vT_T_prod} is negative, and combined with \cref{prop:SC_T_0} we realize that $ c_{T,-} \leq c_{vT} \leq c_{T,+}$, and hence the subcharacteristic condition is satisfied.
\end{proof}

\begin{prop}\label{prop:SC_pT_T}
  The $pT$-model satisfies the weak subcharacteristic condition with respect to the $T$-model, given the physically fundamental assumptions $\rho_k,\Cpk, T > 0 $ for $k\in\{\Gas,\Liq\}$ in the equilibrium state defined by \cref{eq:v-equil}.
\end{prop}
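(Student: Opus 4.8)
The plan is to reduce the weak subcharacteristic condition to a single acoustic bracketing inequality, and then establish it by the product-of-differences technique already used in \cref{prop:SC_vT_T}. In the velocity equilibrium state \cref{eq:v-equil}, the $T$-model carries the six wave velocities $\{v-c_{T,+}, v-c_{T,-}, v, v, v+c_{T,-}, v+c_{T,+}\}$, while the $pT$-model, imposing in addition a common pressure, is left with a single sound speed $c_{pT}$ and the five wave velocities $\{v-c_{pT}, v, v, v, v+c_{pT}\}$. Sorting both lists and writing out the interlacing requirement $\lambda_j \in [\Lambda_j, \Lambda_{j+1}]$ of \cref{dfn:subchar} (here $N=6$, $n=5$), the three central eigenvalues equal to $v$ are automatically bracketed once $c_{T,-} \geq 0$, which was already established in \cref{prop:SC_T_0}; the two acoustic eigenvalues $v \pm c_{pT}$ are bracketed if and only if
\begin{align}
  c_{T,-} \leq c_{pT} \leq c_{T,+}.
  \label{eq:SC_pT_T_target}
\end{align}
Thus the whole proposition reduces to proving \cref{eq:SC_pT_T_target}.

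First I would record the explicit expression for $c_{pT}^2$ obtained for the $pT$-model (cf.\ \cite{martinez_ferrer_effect_2012}) and check that, under the stated conditions $\rho_k, \Cpk, T > 0$, it is real and strictly positive, so that $c_{pT}$ is a genuine positive sound speed; the quantities $c_{T,\pm}$ are already real and non-negative by \cref{prop:SC_T_0}. All three numbers in \cref{eq:SC_pT_T_target} are then legitimate.

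The core step is to form the product of the two acoustic differences and cast it into a manifestly signed form, in direct analogy with \cref{eq:SC_vT_T_prod}. Concretely, I would show, using \cref{eq:c_T}, that
\begin{align}
  \left( c_{T,+}^2 - c_{pT}^2 \right)\left( c_{T,-}^2 - c_{pT}^2 \right) = - \Q_{pT}^T,
  \label{eq:SC_pT_T_prod}
\end{align}
and that the remainder $\Q_{pT}^T$ can be written as a perfect square multiplied by strictly positive thermodynamic factors (combinations of $Y\gas, Y\liq, m\gas, m\liq, \Cpk, T, c\gas^2, c\liq^2, \Gamma\gas, \Gamma\liq$), so that $\Q_{pT}^T \geq 0$. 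Given \cref{eq:SC_pT_T_prod}, exactly one factor on the left is non-positive, which sandwiches $c_{pT}^2$ between $c_{T,-}^2$ and $c_{T,+}^2$; taking positive square roots yields \cref{eq:SC_pT_T_target} and hence the weak subcharacteristic condition of \cref{dfn:subchar_weak}.

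The main obstacle is precisely the algebraic identity \cref{eq:SC_pT_T_prod}: producing a clean closed form for $\Q_{pT}^T$ and certifying its sign. To keep this tractable I would never expand the square root in \cref{eq:c_T}; instead, noting that $c_{T,\pm}^2$ are the two roots of a quadratic, I would use Vieta's formulas to supply the square-root-free elementary symmetric functions $c_{T,+}^2 + c_{T,-}^2$ and $c_{T,+}^2 c_{T,-}^2$, and evaluate the left-hand side of \cref{eq:SC_pT_T_prod} as $c_{pT}^4 - (c_{T,+}^2 + c_{T,-}^2) c_{pT}^2 + c_{T,+}^2 c_{T,-}^2$. The difficulty then reduces to recognizing the resulting rational expression as $-(\text{square}) \times (\text{positive})$; I expect the square to involve the same Gr\"uneisen-weighted combination $\tfrac{\Gamma\gas^2 c\liq^2}{m\gas c\gas^2} - \tfrac{\Gamma\liq^2 c\gas^2}{m\liq c\liq^2}$ pattern that appears in $\Q_{vT}^T$, and the positive prefactor to be built from $Y\gas Y\liq$ together with the common denominator of \cref{eq:c_T}.
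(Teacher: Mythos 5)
Your proposal follows essentially the same route as the paper: it invokes the known expression for $c_{pT}^2$ from Mart\'inez Ferrer et al., forms the product $\left( c_{T,+}^2 - c_{pT}^2 \right)\left( c_{T,-}^2 - c_{pT}^2 \right) = -\Q_{pT}^T$ with $\Q_{pT}^T$ a nonnegative prefactor times a perfect square, and concludes the bracketing $c_{T,-} \leq c_{pT} \leq c_{T,+}$ exactly as the paper does. The Vieta/symmetric-function device for evaluating the product without expanding the radical in \cref{eq:c_T} is a sensible implementation detail of the same computation, not a different argument.
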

\begin{proof}
  The sound speed of the $pT$-model is given by \cite{martinez_ferrer_effect_2012}
  \begin{align}
    c_{pT}^2 
    = \frac{ \left( \frac{\alpha\gas}{\rho\gas} +
      \frac{\alpha\liq}{\rho\liq} \right) \left( \frac{1}{\mCpg}
        + \frac{1}{\mCpl} \right) }{ \left(\frac{\alpha\gas}{\rho\gas c\gas^2}
      + \frac{\alpha\liq}{\rho\liq c\liq^2}\right) \left( \frac{1}{\mCpg}
        + \frac{1}{\mCpl} \right) + T \left(
        \frac{\Gamma\gas}{\rho\gas c\gas^2} -
        \frac{\Gamma\liq}{\rho\liq c\liq^2}\right)^2}
  \end{align}
  We may now write
  \begin{gather}
    \left( c_{T,+}^2 - c_{pT}^2 \right) \left( c_{T,-}^2 - c_{pT}^2
    \right) = - \Q_{pT}^T,
    \label{eq:SC_pT_T_prod}
  \end{gather}
  where
  \begin{multline}
    \Q_{pT}^T 
    = \frac{\alpha\gas \alpha\liq}{\rho\gas c\gas^2 \rho\liq
        c\liq^2 T} \left( \frac{1}{\mCpg} + \frac{1}{\mCpl} \right) \left[ \frac{\Gamma\gas^2}{m\gas c\gas^2} +
        \frac{\Gamma\liq^2}{m\liq c\liq^2} + \frac{1}{T} \left(
          \frac{1}{\mCpg} + \frac{1}{\mCpl} \right) \right]^{-1} \\ \times
    \left[
    \frac{ \left(\frac{1}{\mCpg}+\frac{1}{\mCpl}\right) \left(
          c\gas^2-c\liq^2 \right)-T \left( \frac{\Gamma\gas}{\rho\gas
            c\gas^2}-\frac{\Gamma\liq}{\rho\liq
            c\liq^2}\right)\left(\frac{\Gamma\gas
            c\liq^2}{\alpha\gas}+\frac{\Gamma\liq c\gas^2}{\alpha\liq}
        \right) }{
        \left(\frac{\alpha\gas}{\rho\gas c\gas^2} +
          \frac{\alpha\liq}{\rho\liq c\liq^2} \right)
        \left(\frac{1}{\mCpg}+\frac{1}{\mCpl}\right) + T \left(
          \frac{\Gamma\gas}{\rho\gas c\gas^2} -
          \frac{\Gamma\liq}{\rho\liq c\liq^2}\right)^2 } \right]^{2} .
  \end{multline}
  Clearly $\Q_{pT}^T \geq 0$, on the given conditions.
  Hence exactly one factor on the left hand side of \cref{eq:SC_pT_T_prod} is negative, yielding $c_{T,-} \leq c_{pT} \leq c_{T,+}$, and the weak subcharacteristic condition is satisfied.
\end{proof}




\section{The $\mu$-model}\label{sec:mu-model}
We now proceed to investigate the chemical-equilibrium model (the $\mu$-model), which arises when we assume instantaneous chemical equilibrium, i.e.\ let the chemical relaxation parameter $\K \to \infty$, which we expect corresponds to
\begin{align}
  \mu\gas = \mu\liq \equiv \mu.
  \label{eq:mu-equil}
\end{align}
Simultaneously, we require the product $K\gas = \K (\mu\liq-\mu\gas)$ to remain finite, and in the forthcoming we seek to express this without any temporal derivatives.
\begin{rmk}
\label{rmk:mu-equil}
It should be noted that there does not seem to be a general agreement in the literature on how to properly model mass transfer (see e.g.\ \cite[pp.\ 13]{lund_phd_thesis}).
Strictly enforcing \cref{eq:mu-equil} may sometimes lead to unphysical results \cite{barberon2005finite}.
The present choice \cref{eq:mu-equil} is primarily motivated by compliance with the subhierarchy compiled by \citet{lund_hierarchy_2012}, and evaluating the physical relevance of these models is out of the scope of the present work.
\end{rmk}

The chemical potential evolution equation \cref{eq:Dmu_k} may be written as
\begin{equation}
  \D_k \mu = - \left[ \psi_k \Dip_k + \xi_k \alpha_k^{-1} \right] \left(
    I_k - \Dpv_k \pd{\alpha_k}{x} \right) - \xi_k \pd {v_k} x +
  \chi_k K_k + \psi_k \Dpv_k M_k + \psi_k H_k .
  \label{eq:Dmu}
\end{equation}
where we have used \cref{eq:mu-equil}, and defined the shorthands
\begin{align}
  \xi_k = c_k^2 - \Gamma_k T_k s_k, \quad
  \psi_k = \tfrac{\Gamma_k \xi_k}{m_k c_k^2} -
      \tfrac{s_k}{\mCpk} ,\quad
  \chi_k =  \tfrac{\xi_k^2}{m_k c_k^2} + \tfrac{T_k s_k^2}{\mCpk} + \tfrac{1}{2} (\Div_k)^2 \psi_k .
\end{align}
By using \cref{eq:Dmu} evaluated for each of the phases, and subtracting these expressions from each other, we obtain
\begin{equation}
  K\gas 
  =
  \kappa_\mu^{-1} \left( \xi\gas \pd {v\gas} x
  - \xi\liq \pd {v\liq} x
  - (\psi\gas + \psi\liq) H\gas \right)
  - \mcl{A}^\mu \pd {\alpha\gas} x
  + (v\gas-v\liq)\kappa_\mu^{-1} \pd {\mu} x 
  + \mcl K^\mu_p I\gas - \mcl K^\mu_v M\gas
  \label{eq:Kg_mu}
\end{equation}
where we have defined the shorthands
\begin{align}
  \kappa_{\mu} &= \tfrac{T\gas s\gas^2}{\mCpg} + \tfrac{T\liq s\liq^2}{\mCpl} + \tfrac{\xi\gas^2}{m\gas c\gas^2} + \tfrac{\xi\liq^2}{m\liq c\liq^2} + \tfrac{1}{2} \left( \psi\gas (\Div\gas)^2 + \psi\liq (\Div\liq)^2 \right) ,
  \label{eq:def_kappa_mu}\\
  \mcl A^\mu &=  \kappa_{\mu}^{-1} \left[ \left( \psi\gas \Dip\gas + \tfrac{\xi\gas}{\alpha\gas} \right) \Dpv\gas + \left( \psi\liq \Dip\liq + \tfrac{\xi\liq}{\alpha\liq}\right) \Dpv\liq \right] ,\\
  \mcl K^\mu_p &= \kappa_{\mu}^{-1} \left[ \psi\gas \Dip\gas + \psi\liq \Dip\liq + \tfrac{\xi\gas}{\alpha\gas} + \tfrac{\xi\liq}{\alpha\liq} \right] ,\quad
  \mcl K^\mu_v = \kappa_{\mu}^{-1} \left( \psi\gas \Dpv\gas + \psi\liq \Dpv\liq \right) .
\end{align}

\subsection{Governing equations}
By using the expression \cref{eq:Kg_mu} to insert for $K\gas$ in the basic model of \cref{sec:basic_model}, the $\mu$-model can now be summarized with the following set of equations:
\begin{itemize}
\item Volume advection: $\pd {\alpha\gas} t + \vp \pd {\alpha\gas} x = I\gas$,
\item Conservation of mass: $\pd {\rho} t + \pd {\left(\alpha\gas \rho\gas v\gas + \alpha\liq \rho\liq v\liq \right)}{x} = 0$,
\item Momentum balance:
  \begin{multline}
    \pd {\alpha\gas\rho\gas v\gas} t + \pd {(\alpha\gas\rho\gas  v\gas^2 + \alpha\gas p\gas)} x - \left( p\intph - v\intph \mcl A^\mu \right) \pd {\alpha\gas} x - v\intph \xi\gas \kappa_{\mu}^{-1} \pd {v\gas} x 
    + v\intph \xi\liq \kappa_{\mu}^{-1} \pd {v\liq} x \\ - v\intph (v\gas-v\liq) \kappa_{\mu}^{-1} \pd {\mu} x
    = v\intph \mcl K^\mu_p I\gas + \left(1 - v\intph \mcl K^\mu_v  \right) M\gas - v\intph \left( \psi\gas + \psi\liq \right) \kappa_{\mu}^{-1} H\gas , \label{eq:momg_mu}
  \end{multline}
\item Energy balance:
  \begin{multline}
    \pd {E\gas} t + \pd {\left( E\gas v\gas + \alpha\gas v\gas p\gas \right)} x - \left[ p\intph \vp - \left( \mu + \tfrac{1}{2}v\intph^2 \right) \mcl{A}^\mu \right] \pd
    {\alpha\gas} x - \left( \mu + \tfrac{1}{2}v\intph^2 \right) \kappa_\mu^{-1} \left[ \xi\gas \pd {v\gas} x - \xi\liq \pd
      {v\liq} x + (v\gas-v\liq) \pd {\mu} x \right] \\
    = \left[\left( \mu + \tfrac{1}{2}v\intph^2 \right) \mcl K^\mu_p -
      p\intph \right] I\gas + \left[ \vm - \left( \mu +
        \tfrac{1}{2}v\intph^2 \right) \mcl K^\mu_v \right] M\gas +
    \left[ 1 - \left( \mu + \tfrac{1}{2}v\intph^2 \right) (\psi\gas + \psi\liq)\kappa_\mu^{-1} \right]  H\gas, \label{eq:eng_mu}
  \end{multline}
\end{itemize}
Momentum and energy equations for the liquid phase are found by phase symmetry; interchanging indices $\Gas$ and $\Liq$.

\subsection{Evolution of primitive variables}
In order to write the system in a quasi-linear form, and thereby find the wave speeds of the $\mu$-model, we use the evolution equations for the primitive variables. We therefore now seek the evolution of some of the primitive variables under the assumption of instantaneous chemical equilibrium.

We first define the weighting factor $\phi_k = \chi_k^{-1} / (\chi\gas^{-1}+\chi\liq^{-1})$.
Multiplying \cref{eq:Dmu} by $\phi_k$ and summing over the phases, we get for the chemical potential
\begin{multline}
  \pd {\mu} t + \left(\phi\gas v\gas + \phi\liq v\liq \right) \pd
  {\mu} x
  + G^\mu_{\alpha\gas} \pd{\alpha\gas}{x} + \phi\gas \xi\gas \pd {v\gas} x +
  \phi\liq \xi\liq \pd {v\liq} x \\= \left[ - \phi\gas ( \psi\gas
      \Dip\gas + \xi\gas \alpha\gas^{-1} ) + \phi\liq (
      \psi\liq \Dip\liq + \xi\liq \alpha\liq^{-1} ) \right]
  I\gas + \left( \phi\gas \psi\gas \Dpv\gas - \phi\liq \psi\liq
    \Dpv\liq \right) M\gas + \left(\phi\gas \psi\gas - \phi\liq
    \psi\liq \right) H\gas ,
  \label{eq:Dmu_mu}
\end{multline}
where we have defined the shorthand coefficient
\begin{align}
  G^\mu_{\alpha\gas} = - \phi\gas \left( \psi\gas \Dip\gas +
  \xi\gas \alpha\gas^{-1} \right) \Dpv\gas + \phi\liq \left(
  \psi\liq \Dip\liq + \xi\liq \alpha\liq^{-1} \right) \Dpv\liq .
\end{align}
For the phasic velocity $v\gas$, we find from \cref{eq:Dv_k} the evolution equation
\begin{multline}
  \pd {v\gas} t + \left[ v\gas - \tfrac{\xi\gas \Div\gas}{m\gas \kappa_\mu} \right] \pd {v\gas} x + \tfrac{\xi\liq \Div\gas}{m\gas \kappa_\mu} \pd {v\liq} x + \tfrac{\Div\gas \mcl{A}^\mu - \Dip\gas}{m\gas}  \pd {\alpha\gas} x + \tfrac{1}{\rho\gas} \pd {p\gas} x
     -
    \tfrac{\Div\gas (v\gas-v\liq)}{m\gas \kappa_\mu} \pd {\mu} x
\\=
    \tfrac{\Div\gas}{m\gas} \mcl K^\mu_p I\gas
     + \tfrac{1}{m\gas} \left( 1 - \Div\gas \mcl K^\mu_v \right) M\gas - 
\tfrac{\Div\gas}{m\gas} \tfrac{\psi\gas + \psi\liq}{\kappa_\mu}
    H\gas,
    \label{eq:Dv_g_mu}
\end{multline}
and $v\liq$ is found by phase symmetry. 

The phasic pressure evolution is found from \cref{eq:Dp_k}. For the gas phase, it reads
\begin{multline}
  \pd {p\gas} t + v\gas \pd {p\gas} x + P^\mu_{\Gas,\alpha\gas}
  \pd{\alpha\gas}{x} + P^\mu_{\Gas,v\gas} \pd {v\gas} x +
  P^\mu_{\Gas,v\liq} \pd {v\liq} x + P^\mu_{\Gas,\mu} \pd {\mu} x \\=
  \alpha\gas^{-1}\left[ - \left( \Gamma\gas \Dip\gas + \rho\gas
      c\gas^2 \right) + \left( \xi\gas + \tfrac{1}{2} \Gamma\gas
      (\Div\gas)^2 \right) \mcl K^\mu_p \right] I\gas +
  \alpha\gas^{-1} \left[\Gamma\gas \Dpv\gas -\left(\xi\gas +
      \tfrac{1}{2} \Gamma\gas (\Div\gas)^2 \right) \mcl K^\mu_v
  \right] M\gas \\ + \alpha\gas^{-1} \left[ \Gamma\gas - \left(
      \xi\gas + \tfrac{1}{2} \Gamma\gas (\Div\gas)^2 \right)
    (\psi\gas + \psi\liq) \kappa_\mu^{-1} \right] H\gas .
  \label{eq:Dp_g_mu}
\end{multline}
wherein we have defined the coefficients
\begin{align}
  P^\mu_{\Gas,\alpha\gas} &= \alpha\gas^{-1}\left[ \left(\xi\gas + \tfrac{1}{2} \Gamma\gas (\Div\gas)^2 \right) \mcl{A}^\mu - \left(\Gamma\gas \Dip\gas +
                            \rho\gas c\gas^2 \right) \Dpv\gas \right],\\
  P^\mu_{\Gas,v\gas} &= \rho\gas c\gas^2 - \left(\xi\gas + \tfrac{1}{2}
    \Gamma\gas (\Div\gas)^2 \right) \xi\gas \alpha\gas^{-1} \kappa_\mu^{-1} ,\\
  P^\mu_{\Gas,v\liq} &= \left( \xi\gas +
                       \tfrac{1}{2} \Gamma\gas (\Div\gas)^2 \right)
                       \xi\liq \alpha\gas^{-1} \kappa_\mu^{-1},\\
  P^\mu_{\Gas,\mu} &= - \left(\xi\gas +
                     \tfrac{1}{2} \Gamma\gas (\Div\gas)^2 \right)
                     (v\gas-v\liq) \alpha\gas^{-1}\kappa_\mu^{-1} .
\end{align}
The corresponding expressions related to the evolution of $p\liq$ are found by phase symmetry.

\subsection{Wave velocities}
We now wish to derive the wave velocities of the $\mu$-model in the homogeneous limit, where $\J,\H,\M \to 0$.
In this limit, the volume fraction $\alpha\gas$ is a characteristic variable with the associated eigenvalue $\vp$.
The remaining, reduced model, i.e.~\cref{eq:Dp_g_mu,eq:Dmu_mu,eq:Dv_g_mu} for both phases, may then be expressed in the quasi-linear form
$\pd {\tilde{\v u}_\mu} t + \tilde {\v A}_\mu (\tilde{\v u}_\mu) \pd {\tilde{\v u}_\mu} x = 0$, where the reduced vector of unknowns is $\tilde{\v u}_\mu = [ \mu, v\gas, v\liq, p\gas, p\liq ]$, and the reduced Jacobian reads
\begin{align}
  \tilde{\v A}_{\mu} = \begin{bmatrix}
    \phi\gas v\gas + \phi\liq v\liq & \phi\gas \xi\gas & \phi\liq \xi\liq & 0 & 0 \\
    -\frac{\Div\gas (v\gas-v\liq)}{m\gas \kappa_\mu} & v\gas - \frac{\xi\gas \Div\gas}{m\gas \kappa_\mu} & \frac{\xi\liq \Div\gas}{m\gas \kappa_\mu} & \rho\gas^{-1} & 0 \\
    \frac{\Div\liq (v\gas-v\liq)}{m\liq \kappa_\mu} & \frac{\xi\gas \Div\liq}{m\liq \kappa_\mu} & v\liq - \frac{\xi\liq \Div\liq}{m\liq \kappa_\mu} & 0 & \rho\liq^{-1} \\
    P^\mu_{\Gas,\mu} & P^\mu_{\Gas,v\gas} & P^\mu_{\Gas,v\liq} & v\gas & 0 \\
    P^\mu_{\Liq,\mu} & P^\mu_{\Liq,v\gas} & P^\mu_{\Liq,v\liq} & 0 & v\liq
  \end{bmatrix} .
\end{align}
Again the eigenvalues $\lambda$ are given the roots of a fifth degree polynomial, for which in general no closed-form solution exists. We therefore expand in the small parameter $\veps=v\gas-v\liq$, i.e.
$ \tilde{\v A}_{\mu} = \tilde{\v A}_{\mu}^{(0)} + \veps \tilde{\v A}_{\mu}^{(1)} + \ldots$,
 and
$\lambda = \lambda^{(0)} + \veps \lambda^{(1)} + \ldots $.
Herein, the lowest-order system matrix reads, taking $\bar v = \phi\gas v\gas + \phi\liq v\liq$,
\begin{align}
  \tilde{\v A}_\mu^{(0)} = \begin{bmatrix}
    \bar v & \phi\gas \xi\gas & \phi\liq \xi\liq & 0 & 0 \\
    0 & \bar v & 0 & \rho\gas^{-1} & 0 \\
    0 & 0 & \bar v & 0 & \rho\liq^{-1} \\
    0 & \rho\gas c\gas^2 - \xi\gas^2 /(\alpha\gas \kappa_\mu^{(0)}) & \xi\gas \xi\liq / (\alpha\gas \kappa_\mu^{(0)}) & \bar v & 0 \\
    0 & \xi\gas \xi\liq / (\alpha\liq \kappa_\mu^{(0)}) & \rho\liq c\liq^2 - \xi\liq^2 /(\alpha\liq \kappa_\mu^{(0)}) & 0 & \bar v
  \end{bmatrix} ,
\end{align}
where we have used the lowest-order term of $\kappa_\mu$, as defined in \cref{eq:def_kappa_mu}:
\begin{align}
  \kappa_\mu^{(0)} = \tfrac{T\gas s\gas^2}{\mCpg} + \tfrac{T\liq s\liq^2}{\mCpl} + \tfrac{\xi\gas^2}{m\gas c\gas^2} + \tfrac{\xi\liq^2}{m\liq c\liq^2}.
\end{align}
To the lowest order in $\veps$, $v\gas = v\liq = \bar v = v$, and thus the eigenvalue problem consists in finding the roots of $\det(\tilde{\v A}_\mu^{(0)}-\lambda^{(0)} \v I )=0$. Hence, the full vector of eigenvalues is given by
\begin{align}
  \boldsymbol{\lambda}_\mu^{(0)} = \{ v - c_{\mu,+}, v - c_{\mu,-}, v, v, v + c_{\mu,-}, v + c_{\mu,+}\}
\end{align}
where the two sound speeds in the $\mu$-model are given by
\begin{equation}
  c^2_{\mu,\pm} = \frac{ \left( \tfrac{T\gas
        s\gas^2}{\mCpg} + \tfrac{T\liq
        s\liq^2}{\mCpl} \right) ( c\gas^2 +
      c\liq^2 ) + \tfrac{\xi\liq^2 c\gas^2}{m\liq
      c\liq^2} + \tfrac{\xi\gas^2 c\liq^2}{m\gas c\gas^2} 
    \pm \sqrt{ \left[ \left( \tfrac{T\gas s\gas^2}{\mCpg} + \tfrac{T\liq s\liq^2}{\mCpl}
        \right) ( c\gas^2 - c\liq^2 ) + \tfrac{\xi\liq^2
          c\gas^2}{m\liq c\liq^2} - \tfrac{\xi\gas^2
          c\liq^2}{m\gas c\gas^2} \right]^2 + 4
      \tfrac{\xi\gas^2 \xi\liq^2}{m\gas m\liq} } }{ 2
  \left[ \tfrac{T\gas s\gas^2}{\mCpg} + \tfrac{T\liq s\liq^2}{\mCpl} +
    \tfrac{\xi\gas^2}{m\gas c\gas^2} + \tfrac{\xi\liq^2}{m\liq
      c\liq^2}\right] }
.
  \label{eq:c_mupm}
\end{equation}



\begin{prop}\label{prop:SC_mu_0}
  The $\mu$-model satisfies the weak subcharacteristic condition with
  respect to the basic model of \cref{sec:basic_model}, given only the
  physically fundamental conditions $\rho_k,\Cpk,T_k > 0$ for $k\in\{\Gas,\Liq\}$, in the equilibrium state defined by \cref{eq:v-equil}.
\end{prop}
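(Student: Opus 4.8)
The plan is to mimic the proof of \cref{prop:SC_T_0} line for line, exploiting the fact that the $\mu$-model sound speeds \cref{eq:c_mupm} are algebraically identical in structure to the $T$-model sound speeds \cref{eq:c_T} under the substitution dictionary $\tfrac{1}{T}\!\left(\tfrac{1}{\mCpg}+\tfrac{1}{\mCpl}\right)\mapsto \tfrac{T\gas s\gas^2}{\mCpg}+\tfrac{T\liq s\liq^2}{\mCpl}$ and $\Gamma_k\mapsto\xi_k$. First I would observe that, since the equilibrium eigenvalue set $\boldsymbol{\lambda}_\mu^{(0)}=\{v-c_{\mu,+},v-c_{\mu,-},v,v,v+c_{\mu,-},v+c_{\mu,+}\}$ must interlace the seven basic-model eigenvalues \cref{eq:eigenvalues_0_local}, the weak subcharacteristic condition of \cref{dfn:subchar_weak} reduces \emph{exactly} to establishing the chain $0 \le c_{\mu,-} \le c_{0,-} \le c_{\mu,+} \le c_{0,+}$. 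So it suffices to reproduce the two product identities of \cref{prop:SC_T_0}.

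To that end I would write the numerator of \cref{eq:c_mupm} as $P\pm\sqrt{Q^2+R}$ over $2S$, where $S=\kappa_\mu^{(0)}$, $R=4\,\xi\gas^2\xi\liq^2/(m\gas m\liq)$, and introduce the shorthand $D=\tfrac{T\gas s\gas^2}{\mCpg}+\tfrac{T\liq s\liq^2}{\mCpl}$, so that $S=D+\tfrac{\xi\gas^2}{m\gas c\gas^2}+\tfrac{\xi\liq^2}{m\liq c\liq^2}$. Establishing reality and positivity comes first: under $\rho_k,\Cpk,T_k>0$ the radicand is manifestly non-negative, so $c_{\mu,\pm}^2\in\mathbb{R}$ and $c_{\mu,+}^2\ge 0$. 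A direct computation of the root product collapses $P^2-Q^2-R$ to the factored form $4\,c\gas^2 c\liq^2\,D\,S$, giving $c_{\mu,+}^2 c_{\mu,-}^2 = c\gas^2 c\liq^2\,D/S$, equivalently $c_{\mu,+}^{-2}c_{\mu,-}^{-2}=c_{0,+}^{-2}c_{0,-}^{-2}+\Z_\mu^0$ with $\Z_\mu^0=\big(\tfrac{\xi\gas^2}{m\gas c\gas^2}+\tfrac{\xi\liq^2}{m\liq c\liq^2}\big)/\big(c_{0,+}^2 c_{0,-}^2\,D\big)\ge 0$. This yields $0\le c_{\mu,+}^2 c_{\mu,-}^2\le c_{0,+}^2 c_{0,-}^2$, the analogue of \cref{eq:SC_T_0_ordering_1}, forcing $c_{\mu,-}^2\ge 0$ and hence both sound speeds real and positive.

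Next I would compute the product of the four differences. Using the cancellations $S\,c\gas^2-P+c\liq^2 D=\tfrac{\xi\gas^2}{m\gas c\gas^2}(c\gas^2-c\liq^2)$ and its phase-symmetric counterpart, this telescopes to
\[
(c_{0,+}^2-c_{\mu,+}^2)(c_{0,+}^2-c_{\mu,-}^2)(c_{0,-}^2-c_{\mu,+}^2)(c_{0,-}^2-c_{\mu,-}^2)=-\Q_\mu^0 ,
\]
the analogue of \cref{eq:SC_T_0_prod}, with $\Q_\mu^0=(c\gas^2-c\liq^2)^2\,\tfrac{\xi\gas^2\xi\liq^2}{m\gas m\liq}\,S^{-2}\ge 0$ since it is a product of squares and positive factors. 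Combining the bound $c_{\mu,+}^2 c_{\mu,-}^2\le c_{0,+}^2 c_{0,-}^2$ with the four-difference product being non-positive, the only admissible ordering is $0\le c_{\mu,-}\le c_{0,-}\le c_{\mu,+}\le c_{0,+}$, which is precisely the interlacing required, so the weak subcharacteristic condition follows.

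The main obstacle is purely computational rather than conceptual: verifying the two product identities, which requires carefully expanding $P^2-Q^2-R$ and recognizing that it factors as $c\gas^2 c\liq^2 D S$, and likewise spotting the perfect-square factor $(c\gas^2-c\liq^2)^2$ in $\Q_\mu^0$. Once the correspondence with the $T$-model is fixed through $\Gamma_k\mapsto\xi_k$ and $\tfrac{1}{T}(\tfrac{1}{\mCpg}+\tfrac{1}{\mCpl})\mapsto D$, no genuinely new idea is needed — the sum-of-squares structure that made $\Z_\mu^0$ and $\Q_\mu^0$ non-negative is inherited directly, with the one cautionary check that $D\ge 0$ (guaranteed by $T_k,\Cpk>0$) replaces the corresponding non-negativity used in the $T$-model argument.
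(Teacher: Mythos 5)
Your proposal is correct and follows essentially the same route as the paper's own proof: the same reality/positivity argument, the same product identity $c_{\mu,+}^{-2}c_{\mu,-}^{-2}=c_{0,+}^{-2}c_{0,-}^{-2}+\Z_\mu^0$, the same four-difference product $-\Q_\mu^0$ with the identical expressions for $\Z_\mu^0$ and $\Q_\mu^0$, and the same concluding ordering argument. The only cosmetic difference is that you make the structural analogy with \cref{prop:SC_T_0} (via $\Gamma_k\mapsto\xi_k$) explicit, which the paper leaves implicit.
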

\begin{proof}
  We first note that $c_{\mu,\pm}^2 \in \mathbb R$ on the given conditions, and that $c_{\mu,+}^2 \geq 0$.
  The product of the sound speeds may be written as
  \begin{align}
    c_{\mu,+}^{-2} c_{\mu,-}^{-2} = c_{0,+}^{-2} c_{0,-}^{-2} + \Z_\mu^0, \quad \textrm{where} \quad
    \Z_\mu^0 = c\gas^{-2} c\liq^{-2} \frac{ \tfrac{\xi\gas^2}{m\gas c\gas^2} + \tfrac{\xi\liq^2}{m\liq c\liq^2} }{ \tfrac{T\gas s\gas^2}{\mCpg} + \tfrac{T\liq s\liq^2}{\mCpl} }.
  \end{align}
  Given the conditions we have that $\Z_\mu^0 \geq 0$, and hence
  \begin{align}
    0 \leq c_{\mu,+}^2 c_{\mu,-}^2 \leq c_{0,+}^2 c_{0,-}^2.
    \label{eq:SC_mu_0_order_1}
  \end{align}
  Therefore also $c_{0,-}^2$ is positive, and thus we have that $c_{0,\pm}$
  are real and, by choice, positive.

  Now, the product of the differences of the sound speeds may be
  written as
  \begin{gather}
    (c_{0,+}^2 - c_{\mu,+}^2)(c_{0,+}^2 - c_{\mu,-}^2)(c_{0,-}^2 - c_{\mu,+}^2)(c_{0,-}^2 - c_{\mu,-}^2) = - \Q_\mu^0,
    \label{eq:SC_mu_0_prod}
  \end{gather}
  where
  \begin{gather}
    \Q_\mu^0 = \left(c\gas^2-c\liq^2\right)^2 \tfrac{\xi\gas^2 \xi\liq^2}{m\gas m\liq}
  \left[ \tfrac{T\gas s\gas^2}{\mCpg} + \tfrac{T\liq s\liq^2}{\mCpl} + \tfrac{\xi\gas^2}{m\gas c\gas^2} + \tfrac{\xi\liq^2}{m\liq c\liq^2}\right]^{-2}.
  \end{gather}
  Clearly, with the given conditions, $\Q_\mu^0 \geq 0$, and hence the only ordering of sound speeds compatible with \cref{eq:SC_mu_0_prod,eq:SC_mu_0_order_1} is $ 0 \leq c_{\mu,-} \leq c_{0,-} \leq c_{\mu,+} \leq c_{0,+}$, which means that the weak subcharacteristic condition is satisfied.
\end{proof}

\begin{prop}\label{prop:SC_vmu_mu}
  The $v\mu$-model satisfies the subcharacteristic condition with respect to the $\mu$-model, subject only to the physically fundamental conditions $\rho_k, \Cpk, T_k > 0$, for $k \in \{\Gas,\Liq\}$.
\end{prop}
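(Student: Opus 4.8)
The plan is to mirror the argument used for \cref{prop:SC_vT_T,prop:SC_pT_T}. The $v\mu$-model lies on the velocity-equilibrium ($v$-)branch of the hierarchy, so it carries a single acoustic sound speed $c_{v\mu}$ (its full eigenvalue set being $\{v-c_{v\mu},v,v,v,v+c_{v\mu}\}$), whereas the $\mu$-model carries the two sound speeds $c_{\mu,\pm}$ of \cref{eq:c_mupm} (with eigenvalue set $\boldsymbol{\lambda}_\mu^{(0)}$). Since the $v\mu$-model has $v\gas = v\liq \equiv v$ built into its equilibrium manifold, evaluating the $\mu$-model eigenvalues on this manifold---i.e.\ to lowest order in the velocity difference, exactly as in \cref{eq:c_mupm}---is the natural comparison, and no extra velocity-equilibrium hypothesis is required: the proposition concerns the strong condition of \cref{dfn:subchar}. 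With $N=6$ eigenvalues for the $\mu$-model and $n=5$ for the $v\mu$-model, the interlacing requirement $\lambda_j \in [\Lambda_j,\Lambda_{j+1}]$ is trivially met by the non-acoustic eigenvalues, which equal $v$ in both models, so the entire statement reduces to the single bracketing
\[
  c_{\mu,-} \leq c_{v\mu} \leq c_{\mu,+} .
\]

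To establish this, I would first recall the explicit expression for $c_{v\mu}^2$ from Lund \cite{lund_hierarchy_2012}, and then form the product of differences $(c_{\mu,+}^2 - c_{v\mu}^2)(c_{\mu,-}^2 - c_{v\mu}^2)$. The key point is that this expands as $c_{v\mu}^4 - (c_{\mu,+}^2 + c_{\mu,-}^2)\,c_{v\mu}^2 + c_{\mu,+}^2 c_{\mu,-}^2$, in which the sum and product of the two $\mu$-model sound speeds enter as the elementary symmetric functions of the roots of the quadratic underlying \cref{eq:c_mupm}. These are rational in the phasic quantities and, crucially, free of the square root, so the product of differences can be evaluated in closed form. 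The aim is to show that it collapses to $-\Q_{v\mu}^\mu$ for a manifestly nonnegative $\Q_{v\mu}^\mu$, which I expect---by analogy with $\Q_{vT}^T$---to take the mass-fraction-weighted sum-of-squares form $\Q_{v\mu}^\mu = Y\gas Y\liq\,[\,\cdots\,]^2 \geq 0$ on the given conditions $\rho_k,\Cpk,T_k > 0$.

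Since $c_{\mu,\pm}$ are already known to be real and positive from \cref{prop:SC_mu_0}, and since $c_{\mu,+}^2 \geq c_{\mu,-}^2$, the nonpositivity $(c_{\mu,+}^2 - c_{v\mu}^2)(c_{\mu,-}^2 - c_{v\mu}^2) = -\Q_{v\mu}^\mu \leq 0$ forces $c_{\mu,-}^2 \leq c_{v\mu}^2 \leq c_{\mu,+}^2$. Taking positive square roots yields the bracketing displayed above, and the interlacing of the full eigenvalue sets then follows at once, so the subcharacteristic condition of \cref{dfn:subchar} holds.

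The main obstacle is the algebraic identity in the second step: confirming that the product of differences collapses exactly to $-Y\gas Y\liq[\,\cdots\,]^2$. The bookkeeping is heavier than in \cref{prop:SC_vT_T} because the $\mu$-model coefficients are built from the combinations $\xi_k = c_k^2 - \Gamma_k T_k s_k$ together with the entropy-weighted terms $T_k s_k^2/\mCpk$, so the bracketed expression must be guessed and then confirmed rather than read off directly. Once this identity is in hand, the remaining steps are routine.
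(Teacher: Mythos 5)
Your proposal is correct and follows essentially the same route as the paper: it recalls $c_{v\mu}^2$ from Lund, forms the product $(c_{\mu,+}^2-c_{v\mu}^2)(c_{\mu,-}^2-c_{v\mu}^2)=-\Q_{v\mu}^\mu$, and exhibits $\Q_{v\mu}^\mu$ in exactly the mass-fraction-weighted sum-of-squares form $Y\gas Y\liq[\,\cdots\,]^2$ that the paper obtains, before concluding $c_{\mu,-}\leq c_{v\mu}\leq c_{\mu,+}$ via the reality and ordering established in \cref{prop:SC_mu_0}. The observation that the symmetric functions of $c_{\mu,\pm}^2$ are square-root-free is a sound justification for why the identity is tractable, and the remaining algebraic verification is precisely the step the paper carries out.
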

\begin{proof}
The sound speed in the $v\mu$-model is given by \cite{lund_hierarchy_2012}
\begin{align}
  c_{v\mu}^2 = \frac{1}{\rho}\frac{ m\gas c\gas^2 m\liq c\liq^2 \left( \frac{\xi\gas}{m\gas c\gas^2} + \frac{\xi\liq}{m\liq c\liq^2} \right)^2  + \left( \frac{T\gas s\gas^2}{\mCpg} + \frac{T\liq s\liq^2}{\mCpl} \right)\left(m\gas c\gas^2 + m\liq c\liq^2 \right) }{\frac{T\gas s\gas^2}{\mCpg} + \frac{T\liq s\liq^2}{\mCpl} + \frac{\xi\gas^2}{m\gas c\gas^2} + \frac{\xi\liq^2}{m\liq c\liq^2}}.
\end{align}
We now consider the product of the differences in the sound speeds of the two models, which may be written as
\begin{gather}
  (c^2_{\mu,+}-c_{v\mu}^2)(c^2_{\mu,-}-c_{v\mu}^2) 
  = - \Q_{v\mu}^\mu ,
  \label{eq:SC_vmu_mu_prod}
\end{gather}
where
\begin{gather}
  \Q_{v\mu}^\mu = Y\gas Y\liq \left[ \frac{\left( \frac{T\gas s\gas^2}{m\gas \Cpg} + \frac{T\liq s\liq^2}{m\liq \Cpl} \right)( c\gas^2-c\liq^2 ) - \frac{\xi\gas^2 c\liq^2}{m\gas c\gas^2} + \frac{\xi\liq c\gas^2}{m\liq c\liq^2} + \left(\frac{1}{m\gas}-\frac{1}{m\liq} \right) \xi\gas \xi\liq }{\frac{T\gas s\gas^2}{\mCpg} + \frac{T\liq s\liq^2}{\mCpl} + \frac{\xi\gas^2}{m\gas c\gas^2} + \frac{\xi\liq^2}{m\liq c\liq^2}}\right]^2.
\end{gather}
Clearly $\Q_{v\mu}^\mu \geq 0$. Hence, exactly one of the factors on the left hand side of \cref{eq:SC_vmu_mu_prod} must be negative, which gives $c_{\mu,-} \leq c_{v\mu} \leq c_{\mu,+}$, i.e.~the subcharacteristic condition is satisfied.
\end{proof}

\section{The $p\mu$-model}\label{sec:pmu-model}
We now consider the model which arises when we impose instantaneous mechanical-chemical equilibrium, i.e.~we let the relaxation parameters $\J, \K \to \infty$, which we expect corresponds to
\begin{align}
  p\gas = p\liq \equiv p \quad \textrm{and} \quad \mu\gas = \mu\liq \equiv \mu.
\end{align}
Simultaneously, $I\gas = \J (p\gas-p\liq)$ and $K\gas = \K (\mu\liq-\mu\gas)$ should remain finite.
We now seek explicit expressions for these terms in order to find the governing equations of the model.

In the following analysis we use the parameter set stated in \cref{sec:basic_model} and therefore let the interfacial pressure jump $\Dip = p\intph-p = 0$. From \cref{eq:Dp_k,eq:Dmu} we have
\begin{align}
  \D_k p 
   &= - \tfrac{\rho_k c_k^2}{\alpha_k} \left( \tilde I_k + \pd {\alpha_k v_k} x \right) + \tfrac{\xi_k +  \tfrac{1}{2} \Gamma_k (\Div_k)^2}{\alpha_k} K_k + \tfrac{\Gamma_k}{\alpha_k} \Dpv_k M_k + \tfrac{\Gamma_k}{\alpha_k} H_k ,
  \label{eq:Dp_pmu} \\
  \D_k \mu 
  &= - \tfrac{\xi_k}{\alpha_k} \left( \tilde I_k + \pd {\alpha_k v_k} x \right) +
  \left[ \tfrac{\xi_k^2}{m_k c_k^2} + \tfrac{T_k s_k^2}{\mCpk} + \tfrac{1}{2} (\Div_k)^2 \psi_k \right] K_k + \psi_k \Dpv_k M_k + \psi_k H_k .
  \label{eq:Dmu_pmu}
\end{align}
where we have defined $\tilde I_k = I_k - \vp \pd{\alpha_k}{x} = \pd{\alpha_k}{t}$.

Eqs.~\cref{eq:Dp_pmu,eq:Dmu_pmu} evaluated for each phase now constitute a $4 \times 4$ system which is straightforward to solve for the four unknowns $\pdinl p t$, $\pdinl \mu t$, $\tilde I\gas$, and $K\gas$, in terms of spatial derivatives and the remaining source terms. The final expressions for the latter two are
\begin{align}
  \tilde I\gas &= - \mcl P_p^{p\mu} (v\gas-v\liq) \pd p x - \mcl
                 G_p^{p\mu} (v\gas-v\liq) \pd \mu x - \Phi\gas \pd
                 {\alpha\gas v\gas} x + \Phi\liq \pd {\alpha\liq
                 v\liq} x + \mcl I_v^{p\mu} M\gas + \mcl I_T^{p\mu}
                 H\gas, \label{eq:I_pmu} \\
  K\gas &= - \mcl P^{p\mu}_{\mu} (v\gas-v\liq) \pd p x - \mcl
          G^{p\mu}_\mu  (v\gas-v\liq) \pd \mu x - \mcl
          V^{p\mu}_{\mu,\Gas} \pd {\bar v} x + \mcl K_v^{p\mu} M\gas +
          \mcl K_T^{p\mu}  H\gas, \label{eq:K_pmu}
\end{align}
where the coefficients are given in \cref{sec:coeff_pmu}.

\subsection{Governing equations}
Inserting the expressions \cref{eq:I_pmu,eq:K_pmu} into the basic model of \cref{sec:basic_model}, we are now in a position to state the full model. The mechanical--chemical equilibrium model may thus be formulated as follows.

\begin{itemize}
\item Conservation of mass: $\pd {\rho} t + \pd {\left(m\gas v\gas + m\liq v\liq \right)}{x} = 0$,
\item Momentum balance:
  \begin{multline}
    \pd {m\gas  v\gas} t + \pd {m\gas v\gas^2} x + \left( \alpha\gas + v\intph \mcl P^{p\mu}_{\mu} (v\gas-v\liq) \right) \pd p x + v\intph \mcl G^{p\mu}_\mu  (v\gas-v\liq) \pd \mu x + v\intph \mcl V^{p\mu}_{\mu,\Gas} \pd {\bar v} x = \left( 1 + v\intph \mcl K_v^{p\mu} \right) M\gas + v\intph \mcl K_T^{p\mu}  H\gas , \label{eq:momg_pmu} 
  \end{multline}
\item Energy balance:
  \begin{multline}
    \pd {E\gas} t + \pd { E\gas v\gas}{x} + \left[
      \alpha\gas v\gas + \left(\left( \mu + \tfrac{1}{2}v\intph^2
        \right) \mcl P^{p\mu}_{\mu} - p \mcl P_p^{p\mu} \right)
      (v\gas-v\liq) \right] \pd {p} x  + \left[ \left( \mu +
        \tfrac{1}{2}v\intph^2 \right) \mcl G^{p\mu}_\mu - p \mcl
      G_p^{p\mu} \right] (v\gas-v\liq) \pd \mu x \\
    + \left[ \left( \mu + \tfrac{1}{2}v\intph^2 \right) \mcl V^{p\mu}_{\mu,\Gas} + p \Phi\liq \right] \pd {\bar v} x
    = \left[ \vm + \left( \mu + \tfrac{1}{2}v\intph^2 \right) \mcl
      K_v^{p\mu} - p \mcl I_v^{p\mu} \right] M\gas + \left[ 1 + \left(
        \mu + \tfrac{1}{2}v\intph^2 \right) \mcl K_T^{p\mu} - p \mcl
      I_T^{p\mu}   \right] H\gas. \label{eq:eng_pmu} 
  \end{multline}
\end{itemize}
The momentum and energy equations for the liquid phase are found by phase symmetry.

\subsection{Wave velocities}
We now wish to write the system in a quasilinear form, in order to find the wave speeds of the system, in the homogeneous limit where we let the relaxation terms $\M,\H\to 0$. To this end, we will express the model in the vector of unknowns $\v u_{p\mu} = [p, \mu, \bar v, v\gas, v\liq ]$. We therefore seek the evolution equations for the elements of~$\v u_{p\mu}$.

For the volume evolution, we find, using \cref{eq:Dalpha_k,eq:I_pmu}, that
\begin{align}
  \pd {\alpha\gas} t + \mcl P_p^{p\mu} (v\gas-v\liq) \pd p x + \mcl G_p^{p\mu} (v\gas-v\liq) \pd \mu x + \Phi\gas \pd {\alpha\gas v\gas} x - \Phi\liq \pd {\alpha\liq v\liq} x= 0,
  \label{eq:Dalpha_pmu}
\end{align}
For the volume-averaged velocity $\bar v$ we find, using \cref{eq:Dv_k,eq:K_pmu,eq:I_pmu,eq:Dalpha_pmu}, that
\begin{multline}
  \pd {\bar v} t
  + ( \alpha\gas\rho\gas^{-1} + \alpha\liq \rho\liq^{-1} +
    P^{p\mu}_{\bar v} \veps^2 ) \pd p x 
  + G^{p\mu}_{\bar v} \veps^2 \pd \mu x
  + \alpha\gas \veps \pd {v\gas} x
  - \alpha\liq \veps \pd {v\liq} x
  + \left( \Phi\gas  v\gas + \Phi\liq v\liq - V^{p\mu}_{\bar v,\Gas} \veps \right)  \pd {\bar v} x
  = 0 ,
\end{multline}
where we have defined the shorthand coefficients $P^{p\mu}_{\bar v}, G^{p\mu}_{\bar v}, V^{p\mu}_{\bar v,\Gas}$ (for which expressions are given in \cref{sec:coeff_pmu}), used $\veps=v\gas-v\liq$, and inserted $\beta\gas = 1-\beta\liq = \sqrt{T\liq}/(\sqrt{T\gas}+\sqrt{T\liq})$.
Now, for the pressure and chemical potentials, we get from \cref{eq:Dp_pmu,eq:Dmu_pmu} that
\begin{gather}
  \pd p t + \left( \Psi^p\gas v\gas + \Psi^p\liq v\liq \right) \pd p x + G^{p\mu}_p \veps \pd \mu x + V^{p\mu}_{p} \pd {\bar v} x = 0,\\
  \pd \mu t + P^{p\mu}_\mu \veps \pd p x + \left( \Psi^\mu\gas v\gas + \Psi^\mu\liq v\liq \right) \pd \mu x + V^{p\mu}_\mu \pd {\bar v} x = 0.
\end{gather}
Again, the coefficients are given in \cref{sec:coeff_pmu}.

The homogeneous system in a quasilinear form thus reads $\pd {\v u_{p\mu}} t + \v A_{p\mu}\left( \v u_{p\mu} \right) \pd {\v
    u_{p\mu}} x = 0$, where the system Jacobian is given by
\begin{align}
  \v A _{p\mu} = \begin{bmatrix}
    \Psi^p\gas v\gas + \Psi^p\liq v\liq & G_p^{p\mu} \veps & V_p^{p\mu} & 0 & 0 \\
    P_\mu^{p\mu} \veps & \Psi^\mu\gas v\gas + \Psi^\mu\liq v\liq &
    V_\mu^{p\mu} & 0 & 0 \\
    \frac{\alpha\gas}{\rho\gas} + \frac{\alpha\liq}{\rho\liq} + P^{p\mu}_{\bar v} \veps^2 &
    G^{p\mu}_{\bar v} \veps^2 & \Phi\gas  v\gas + \Phi\liq v\liq - V^{p\mu}_{\bar v,\Gas} \veps & \alpha\gas \veps & - \alpha\liq \veps \\
    \frac{1}{\rho\gas} - \frac{\beta\liq \mcl P^{p\mu}_{\mu}}{m\gas}
    \veps^2 & - \frac{\beta\liq \mcl G^{p\mu}_\mu }{m\gas} \veps^2 & -
    \frac{\beta\liq \mcl V^{p\mu}_{\mu,\Gas}}{m\gas} \veps & v\gas & 0
    \\
    \frac{1}{\rho\liq} - \frac{\beta\gas \mcl P^{p\mu}_{\mu}}{m\liq}
    \veps^2 & - \frac{\beta\gas \mcl G^{p\mu}_\mu }{m\liq} \veps^2 &
    \frac{\beta\gas \mcl V^{p\mu}_{\mu,\Gas}}{m\liq} \veps & 0 & v\liq 
  \end{bmatrix}.
\end{align}
Obtaining the assocated eigenvalues exactly by analytic means is again
unfeasible, as the problem consists in finding the roots of a
fifth-degree polynomial. We therefore expand in $\veps$: $\v A_{p\mu} = \v A_{p\mu}^{(0)} + \veps \v A_{p\mu}^{(1)} + \veps^2 \v A_{p\mu}^{(2)} + \ldots$,
where it is assumed that the matrices $\v A_{p\mu}^{(i)}$ are independent of $\veps$. To the lowest order, where $\veps \to 0$, taking $v = v\gas = v\liq$, we get the matrix
\begin{align}
  \v A _{p\mu}^{(0)} = \begin{bmatrix}
    v & 0 & V_p^{p\mu,(0)} & 0 & 0 \\
    0 & v &
    V_\mu^{p\mu,(0)} & 0 & 0 \\
    \frac{\alpha\gas}{\rho\gas} + \frac{\alpha\liq}{\rho\liq} &
    0 & v & 0 & 0 \\
    \frac{1}{\rho\gas} & 0 & 0 & v & 0 \\
    \frac{1}{\rho\liq} & 0 & 0 & 0 & v
  \end{bmatrix},
  \label{eq:jacobi_pmu_0}
\end{align}
where the superscript ``(0)'' on the coefficients signifies the zeroth-order expansion in $\veps$, such that 
\begin{align}
   V_p^{p\mu, (0)} 
  = \frac{\frac{T\gas s\gas^2}{\mCpg} + \frac{T\liq s\liq^2}{\mCpl}}{\left( \frac{\alpha\gas}{\rho\gas c\gas^2 }  + \frac{\alpha\liq}{\rho\liq c\liq^2 } \right) \left( \frac{T\gas s\gas^2}{\mCpg} + \frac{T\liq s\liq^2}{\mCpl}\right)
   + \left( \frac{\xi\gas}{\rho\gas c\gas^2} - \frac{\xi\liq}{\rho\liq c\liq^2}
   \right)^2}.
 \end{align}
The eigenvalues in the $p\mu$-model are, to the lowest order in $\veps$,
\begin{align}
  \boldsymbol{\lambda}_{p\mu}^{(0)} = \left\lbrace v - c_{p\mu}, v, v, v, v + c_{p\mu} \right\rbrace,
\end{align}
where we have identified the sound speed $c_{p\mu}$ of the model, given by
\begin{align}
  c_{p\mu}^2 = \frac{\left(\frac{\alpha\gas}{\rho\gas} + \frac{\alpha\liq}{\rho\liq}\right)\left(\frac{T\gas s\gas^2}{\mCpg} + \frac{T\liq s\liq^2}{\mCpl} \right)}{\left( \frac{\alpha\gas}{\rho\gas c\gas^2 }  + \frac{\alpha\liq}{\rho\liq c\liq^2 } \right) \left( \frac{T\gas s\gas^2}{\mCpg} + \frac{T\liq s\liq^2}{\mCpl}\right)
   + \left( \frac{\xi\gas}{\rho\gas c\gas^2} - \frac{\xi\liq}{\rho\liq c\liq^2}
   \right)^2}.
  \label{eq:c_pmu}
\end{align}

\begin{prop}\label{prop:SC_pmu_p}
  The $p\mu$-model satisfies the weak subcharacteristic condition with respect to the $p$-model, given only the physically fundamental conditions $\rho_k,\Cpk, T_k > 0$, for $k\in\{\Gas,\Liq\}$, in the equilibrium state defined by \cref{eq:v-equil}.
\end{prop}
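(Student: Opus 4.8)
The plan is to compare the single sound speed $c_{p\mu}$ of the mechanical--chemical equilibrium model directly against the single sound speed $c_p$ of the mechanical equilibrium model, exploiting the fact that the two expressions \cref{eq:c_p,eq:c_pmu} share a common numerator. Writing the prose shorthands $A = \alpha\gas/\rho\gas + \alpha\liq/\rho\liq$ and $B = \alpha\gas/(\rho\gas c\gas^2) + \alpha\liq/(\rho\liq c\liq^2)$, the $p$-model sound speed \cref{eq:c_p} is $c_p^2 = A/B$. Introducing further $S = T\gas s\gas^2/\mCpg + T\liq s\liq^2/\mCpl$ and $D = \xi\gas/(\rho\gas c\gas^2) - \xi\liq/(\rho\liq c\liq^2)$, the $p\mu$-model sound speed \cref{eq:c_pmu} reads $c_{p\mu}^2 = AS/(BS + D^2)$. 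First I would note that, under the stated conditions $\rho_k,\Cpk,T_k > 0$, each of $A$, $B$, $S$ is strictly positive while $D^2 \geq 0$; hence both $c_p^2$ and $c_{p\mu}^2$ are real and non-negative, so the sound speeds are well defined and positive by convention.

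The core step is to form the difference of the squared sound speeds, which collapses to a single manifestly non-negative term:
\begin{align}
  c_p^2 - c_{p\mu}^2 = \frac{A}{B} - \frac{A S}{B S + D^2} = \frac{A\, D^2}{B\,(B S + D^2)} \geq 0 .
\end{align}
Every factor in numerator and denominator is non-negative on the given conditions, so this establishes the ordering $0 \leq c_{p\mu} \leq c_p$, with equality precisely when $D = 0$. This is the whole of the analytic content; unlike the two-sound-speed cases \cref{prop:SC_T_0,prop:SC_mu_0}, no product-of-differences argument is needed, since there is only a single pair of sound speeds to compare.

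Finally I would translate the ordering into the interlacing requirement of \cref{dfn:subchar}. In the velocity equilibrium state \cref{eq:v-equil} the $p$-model carries the six eigenvalues $\boldsymbol{\lambda}_p^{(0)} = \left\lbrace v-c_p, v, v, v, v, v+c_p \right\rbrace$, while the $p\mu$-model carries the five eigenvalues $\boldsymbol{\lambda}_{p\mu}^{(0)} = \left\lbrace v-c_{p\mu}, v, v, v, v+c_{p\mu} \right\rbrace$. With $N=6$ and $n=5$, the interlacing condition $\lambda_j \in [\Lambda_j,\Lambda_{j+1}]$ reduces to the two outer eigenvalues, and $0 \leq c_{p\mu} \leq c_p$ yields $v-c_p \leq v-c_{p\mu} \leq v$ and $v \leq v+c_{p\mu} \leq v+c_p$, the interior eigenvalues all coinciding at $v$. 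Hence the weak subcharacteristic condition holds. I anticipate no genuine obstacle: the only point requiring attention is recognizing the shared factor $A$ in \cref{eq:c_p,eq:c_pmu}, after which the difference reduces to the single non-negative expression above.
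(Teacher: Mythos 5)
Your proposal is correct and follows essentially the same route as the paper: the paper writes $c_{p\mu}^{-2} = c_p^{-2} + \Z_{p\mu}^p$ with $\Z_{p\mu}^p = D^2/(AS) \geq 0$ in your notation, which is a trivial rearrangement of your difference $c_p^2 - c_{p\mu}^2 = A D^2/(B(BS+D^2)) \geq 0$. Your explicit verification of the interlacing of the full eigenvalue sets is a harmless elaboration the paper leaves implicit.
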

\begin{proof}
  From \cref{eq:c_pmu,eq:c_p}, we observe that we may write
\begin{align}
  c_{p\mu}^{-2} 
  = c_p^{-2} + \Z_{p\mu}^p, \quad \textrm{where} \quad
  \Z_{p\mu}^p = \frac{\left( \tfrac{\xi\gas}{\rho\gas c\gas^2} - \tfrac{\xi\liq}{\rho\liq c\liq^2} \right)^2}{ \left( \tfrac{\alpha\gas}{\rho\gas} + \tfrac{\alpha\liq}{\rho\liq} \right) \left(\tfrac{T\gas s\gas^2}{\mCpg} + \tfrac{T\liq s\liq^2}{\mCpl}\right) } .
\end{align}
Due to the given physical conditions, $\Z_{p\mu}^p \geq 0$, and hence $ 0 \leq c_{p\mu} \leq c_{p} $, i.e.~the subcharacteristic condition is satisfied.
\end{proof}

\begin{prop}\label{prop:SC_pmu_mu}
  The $p\mu$-model satisfies the weak subcharacteristic condition with
  respect to the $\mu$-model, under the physically fundamental
  conditions $\rho_k, \Cpk, T_k > 0$, for $k\in\{\Gas,\Liq\}$, in the equilibrium state defined by \cref{eq:v-equil}.
\end{prop}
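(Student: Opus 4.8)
The plan is to reduce the interlacing required by \cref{dfn:subchar} to the single scalar inequality $c_{\mu,-} \leq c_{p\mu} \leq c_{\mu,+}$. Indeed, in the velocity-equilibrium state the $\mu$-model has the six ordered wave velocities $\boldsymbol{\lambda}_\mu^{(0)} = \{v-c_{\mu,+}, v-c_{\mu,-}, v, v, v+c_{\mu,-}, v+c_{\mu,+}\}$ while the $p\mu$-model has the five ordered velocities $\boldsymbol{\lambda}_{p\mu}^{(0)} = \{v-c_{p\mu}, v, v, v, v+c_{p\mu}\}$. Checking the intervals $[\Lambda_j, \Lambda_{j+1}]$ one at a time, the only nontrivial requirements come from the outermost eigenvalues $v \mp c_{p\mu}$, both of which translate into precisely $c_{\mu,-} \leq c_{p\mu} \leq c_{\mu,+}$ (the central eigenvalues $v$ lie in their intervals automatically since $c_{\mu,\pm} \geq 0$, which was established in \cref{prop:SC_mu_0}). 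Hence it suffices to bracket $c_{p\mu}$ between the two $\mu$-model sound speeds.

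Following the same device used in \cref{prop:SC_vmu_mu,prop:SC_vT_T,prop:SC_pT_T}, I would form the product of the differences
\begin{align}
  \left(c_{\mu,+}^2 - c_{p\mu}^2\right)\left(c_{\mu,-}^2 - c_{p\mu}^2\right) = -\Q_{p\mu}^\mu,
\end{align}
and exhibit $\Q_{p\mu}^\mu$ as a manifestly non-negative quantity. Because $c_{\mu,\pm}^2$ are the two roots of the quadratic in $c^2$ whose coefficients can be read off from \cref{eq:c_mupm} (with positive leading coefficient $\kappa_\mu^{(0)}$), the left-hand side is obtained by evaluating that quadratic at $c^2 = c_{p\mu}^2$ and dividing by $\kappa_\mu^{(0)}$. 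In practice I would work through the elementary symmetric functions $c_{\mu,+}^2 + c_{\mu,-}^2$ and $c_{\mu,+}^2 c_{\mu,-}^2$ (the latter already computed in the proof of \cref{prop:SC_mu_0}), thereby avoiding the square root in \cref{eq:c_mupm} altogether, and substitute $c_{p\mu}^2$ from \cref{eq:c_pmu}.

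The main obstacle is the algebraic reduction establishing $\Q_{p\mu}^\mu \geq 0$. After clearing the common denominators — which mix the $\mu$-model quantities built from $m_k = \alpha_k \rho_k$ with the $p\mu$-model quantities built from $\alpha_k$ and $\rho_k$ separately — the numerator must be shown to collapse to a perfect square times positive thermodynamic factors, exactly as happened for $\Q_{v\mu}^\mu$ and $\Q_{pT}^T$. By analogy with $\Q_{pT}^T$, I expect $\Q_{p\mu}^\mu$ to take the form of a positive prefactor assembled from $\tfrac{\alpha\gas\alpha\liq}{\rho\gas c\gas^2 \rho\liq c\liq^2}$, the combination $\tfrac{T\gas s\gas^2}{\mCpg} + \tfrac{T\liq s\liq^2}{\mCpl}$, and $[\kappa_\mu^{(0)}]^{-1}$, multiplying the square of a cross-difference analogous to $\tfrac{\xi\gas}{\rho\gas c\gas^2} - \tfrac{\xi\liq}{\rho\liq c\liq^2}$. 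Once $\Q_{p\mu}^\mu \geq 0$ is verified, the product of differences is non-positive, so exactly one factor is negative; combined with $c_{\mu,-} \leq c_{\mu,+}$ this forces $c_{\mu,-} \leq c_{p\mu} \leq c_{\mu,+}$, and the weak subcharacteristic condition follows.
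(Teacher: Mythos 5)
Your proposal takes essentially the same approach as the paper: the paper likewise reduces the claim to $c_{\mu,-} \leq c_{p\mu} \leq c_{\mu,+}$, writes $(c_{\mu,+}^2 - c_{p\mu}^2)(c_{\mu,-}^2 - c_{p\mu}^2) = -\Q_{p\mu}^\mu$, and exhibits $\Q_{p\mu}^\mu$ as a manifestly non-negative quantity of precisely the structure you predict --- the prefactor $\tfrac{\alpha\gas\alpha\liq}{\rho\gas c\gas^2 \rho\liq c\liq^2}\left(\tfrac{T\gas s\gas^2}{\mCpg}+\tfrac{T\liq s\liq^2}{\mCpl}\right)$ divided by $\kappa_\mu^{(0)}$ and the square of the $c_{p\mu}$-denominator, times the square of the cross-difference $\left(\tfrac{T\gas s\gas^2}{\mCpg}+\tfrac{T\liq s\liq^2}{\mCpl}\right)(c\gas^2-c\liq^2)-\left(\tfrac{\xi\gas}{\rho\gas c\gas^2}-\tfrac{\xi\liq}{\rho\liq c\liq^2}\right)\left(\tfrac{\xi\gas c\liq^2}{\alpha\gas}+\tfrac{\xi\liq c\gas^2}{\alpha\liq}\right)$. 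The only outstanding work is the algebraic verification that the numerator indeed collapses to this perfect square, which your symmetric-function strategy (using the sum and product of the roots $c_{\mu,\pm}^2$ from \cref{prop:SC_mu_0}) handles correctly.
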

\begin{proof}
  Using the expressions \cref{eq:c_pmu,eq:c_mupm} for the sound speeds in the two models, we may write
  \begin{gather}
  (c^2_{\mu,+}-c_{p\mu}^2)(c^2_{\mu,-}-c_{p\mu}^2) = - \Q_{p\mu}^\mu ,
    \label{eq:SC_pmu_mu_prod}\end{gather}
  where
  \begin{equation}
  \Q_{p\mu}^\mu  
    = \frac{ \frac{\alpha\gas \alpha\liq}{\rho\gas c\gas^2 \rho\liq
        c\liq^2} \left( \frac{T\gas s\gas^2}{\mCpg} + \frac{T\liq s\liq^2}{\mCpl} \right)
      \left[ \left(\frac{T\gas s\gas^2}{\mCpg}+\frac{T\liq s\liq^2}{\mCpl}\right) \left(
          c\gas^2-c\liq^2 \right)-\left( \frac{\xi\gas}{\rho\gas
            c\gas^2}-\frac{\xi\liq}{\rho\liq
            c\liq^2}\right)\left(\frac{\xi\gas
            c\liq^2}{\alpha\gas}+\frac{\xi\liq c\gas^2}{\alpha\liq}
        \right) \right]^2}{\left[ \frac{T\gas s\gas^2}{\mCpg} + \frac{T\liq s\liq^2}{\mCpl} + \frac{\xi\gas^2}{m\gas c\gas^2} + \frac{\xi\liq^2}{m\liq c\liq^2} \right] \left[ \left( \frac{\alpha\gas}{\rho\gas c\gas^2 }  + \frac{\alpha\liq}{\rho\liq c\liq^2 } \right) \left( \frac{T\gas s\gas^2}{\mCpg} + \frac{T\liq s\liq^2}{\mCpl}\right)
   + \left( \frac{\xi\gas}{\rho\gas c\gas^2} - \frac{\xi\liq}{\rho\liq c\liq^2}
   \right)^2 \right]^{2}} .
  \end{equation}
Clearly, on the given conditions, $\Q_{p\mu}^\mu \geq 0$. Therefore, exactly one factor on the left hand side of \cref{eq:SC_pmu_mu_prod} is negative, and hence
$c_{\mu,-} \leq c_{p\mu} \leq c_{\mu,+}$,
so the subcharacteristic condition is satisfied.
\end{proof}

\begin{prop}\label{prop:SC_vpmu_pmu}
  The $vp\mu$-model satisfies the subcharacteristic condition with
  respect to the $p\mu$-model, given the physically fundamental conditions $\rho_k, \Cpk, T_k > 0$.
\end{prop}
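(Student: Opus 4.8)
The plan is to reduce the interlacing requirement of \cref{dfn:subchar} to a single scalar inequality between sound speeds, and then to establish it by the same reciprocal sum-of-squares device used in \cref{prop:SC_pmu_p}. Like every model in the $v$-branch, the $vp\mu$-model carries a single fluid-mechanical sound speed $c_{vp\mu}$, so its wave velocities are $\boldsymbol{\lambda}_{vp\mu} = \{ v - c_{vp\mu}, v, v, v + c_{vp\mu} \}$, whereas the $p\mu$-model supplies the five eigenvalues $\boldsymbol{\lambda}_{p\mu}^{(0)} = \{ v - c_{p\mu}, v, v, v, v + c_{p\mu} \}$ obtained above. With $N = 5$ and $n = 4$, the interlacing $\lambda_j \in [\Lambda_j, \Lambda_{j+1}]$ holds trivially for the two central eigenvalues (all equal to $v$) and collapses, for the two outermost, to the single condition $0 \le c_{vp\mu} \le c_{p\mu}$. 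Because the $p\mu$-eigenvalues are already evaluated on the velocity-equilibrium manifold $\veps \to 0$ of \cref{eq:v-equil}, which is precisely the equilibrium state imposed by the velocity relaxation rather than an extra assumption, this delivers the condition in its \emph{strong} form, in keeping with the analogous velocity-relaxation results \cref{prop:SC_vT_T,prop:SC_vmu_mu}.

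Next I would bring in the explicit equilibrium sound speed $c_{vp\mu}^2$. This can be quoted directly from Flåtten and Lund \cite{flatten_relaxation_2011}, who derived the $vp\mu$-model, or re-derived inside the present framework by taking the velocity-relaxation limit $\M \to \infty$ of the $p\mu$-model: one sets $v\gas = v\liq = v$ in the two phasic velocity equations, solves the resulting pair for the finite product $M\gas$ exactly as in \cref{eq:M_v-equil}, and reads off the acoustic eigenvalue of the reduced Jacobian. I expect $c_{vp\mu}^2$ to retain the denominator structure of \cref{eq:c_pmu} while its numerator is diminished by the velocity coupling, so that $c_{vp\mu}^2 \ge 0$ follows at once from $\rho_k, \Cpk, T_k > 0$.

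With both expressions available, the remaining task is to prove $c_{vp\mu}^2 \le c_{p\mu}^2$. Guided by \cref{prop:SC_pmu_p}, I expect the natural form to be the reciprocal decomposition $c_{vp\mu}^{-2} = c_{p\mu}^{-2} + \Z_{vp\mu}^{p\mu}$, in which $\Z_{vp\mu}^{p\mu}$ is a manifestly nonnegative remainder, a perfect square over a positive denominator. The physical positivity conditions then give $\Z_{vp\mu}^{p\mu} \ge 0$, hence $c_{vp\mu}^{-2} \ge c_{p\mu}^{-2}$ and therefore $0 \le c_{vp\mu} \le c_{p\mu}$, which closes the argument.

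The main obstacle is algebraic rather than conceptual. Both sound speeds are bulky rational functions of the phasic data $c_k, \Gamma_k, s_k, \Cpk, \xi_k, \alpha_k, \rho_k$, and the real work lies in regrouping their difference so that the velocity-relaxation remainder appears as one isolated square. I anticipate this square to be of the same character as the density-mismatch term $\bigl( \sqrt{\rho\gas/\rho\liq} - \sqrt{\rho\liq/\rho\gas} \bigr)^2$ that surfaces in the corresponding $p \to vp$ reduction, rather than the chemical coupling $\bigl( \tfrac{\xi\gas}{\rho\gas c\gas^2} - \tfrac{\xi\liq}{\rho\liq c\liq^2} \bigr)^2$ already present in \cref{eq:c_pmu}. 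Once the correct square is identified, nonnegativity is immediate from the stated assumptions and the surrounding bookkeeping is routine.
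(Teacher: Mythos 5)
Your plan coincides with the paper's proof: it quotes $c_{vp\mu}^2$ from Flåtten and Lund, reduces interlacing to $0 \le c_{vp\mu} \le c_{p\mu}$, and establishes this via the reciprocal decomposition $c_{vp\mu}^{-2} = c_{p\mu}^{-2} + \Z_{vp\mu}^{p\mu}$ with $\Z_{vp\mu}^{p\mu} = \tfrac{\alpha\gas\alpha\liq}{\rho\gas\rho\liq}(\rho\liq-\rho\gas)^2\, c_{p\mu}^{-2} \ge 0$, which is precisely the density-mismatch square you anticipated. The only remaining work is the routine algebra verifying that identity (equivalently, the ratio in \cref{rmk:relation}), so the proposal is correct and essentially identical to the paper's argument.
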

\begin{proof}
The sound speed in the $vp\mu$-model is given by \cite{flatten_relaxation_2011,lund_hierarchy_2012}
\begin{align}
  c_{vp\mu}^2 = \frac{1}{\rho}\frac{\frac{T\gas s\gas^2}{\mCpg} + \frac{T\liq s\liq^2}{\mCpl}}{\left( \frac{\alpha\gas}{\rho\gas c\gas^2 }  + \frac{\alpha\liq}{\rho\liq c\liq^2 } \right) \left( \frac{T\gas s\gas^2}{\mCpg} + \frac{T\liq s\liq^2}{\mCpl}\right)
   + \left( \frac{\xi\gas}{\rho\gas c\gas^2} - \frac{\xi\liq}{\rho\liq c\liq^2}
   \right)^2}.
  \label{eq:c_vpmu}
\end{align}
Now, we may write
\begin{align}
  c_{vp\mu}^{-2} 
  &= c_{p\mu}^{-2} + \Z_{vp\mu}^{p\mu}, \qquad \textrm{where} \qquad
  \Z_{vp\mu}^{p\mu} 
  = \tfrac{\alpha\gas \alpha\liq}{\rho\gas \rho\liq} \left( \rho\liq - \rho\gas \right)^2 c_{p\mu}^{-2},
\end{align}
which is clearly positive, due to the given conditions. Thus, $0 \leq
c_{vp\mu} \leq c_{p\mu} $, i.e.~the subcharacteristic condition is satisfied.
\end{proof}

\begin{rmk}\label{rmk:relation}
By direct comparison of \cref{eq:c_vpmu,eq:c_pmu}, we find the ratio
\begin{align}
  \frac{c_{p\mu}}{c_{vp\mu}} = \sqrt{ \rho \left( \frac{\alpha\gas}{\rho\gas} + \frac{\alpha\liq}{\rho\liq} \right) }.
  \label{eq:ratio_c_vpmu_pmu}
\end{align}
This is exactly the same ratio as has been shown to hold for other models associated with $v$-relaxation in the $p$-branch of the hierarchy \cite{morin_two-fluid_2013,martinez_ferrer_effect_2012}. We can thus extend the relation
\begin{align}
  \frac{c_p}{c_{vp}} = \frac{c_{pT}}{c_{vpT}} = \frac{c_{pT\mu}}{c_{vpT\mu}} =\frac{c_{p\mu}}{c_{vp\mu}},
  \label{eq:relation_extended}
\end{align}
by the newly obtained ratio \eqref{eq:ratio_c_vpmu_pmu} between the sound speeds of the $vp\mu$- and $p\mu$-models.
\end{rmk}

\begin{prop}\label{prop:SC_pTmu_pmu}
  The $pT\mu$-model satisfies the weak subcharacteristic condition
  with respect to the $p\mu$-model, given the physically fundamental conditions $\rho_k, \Cpk, T > 0$, in the equilibrium state defined by \cref{eq:v-equil}.
\end{prop}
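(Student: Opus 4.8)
The plan is to mirror the strategy used in \cref{prop:SC_pmu_p,prop:SC_vpmu_pmu}, where a transition between two single-sound-speed models was handled by writing the reciprocal of the equilibrium sound speed as the reciprocal of the relaxation sound speed plus a manifestly non-negative term. Since both the $p\mu$- and $pT\mu$-models possess pressure equilibrium, their linearizations about the velocity-equilibrium state \cref{eq:v-equil} each collapse to a single sound speed, with eigenvalue structures $\boldsymbol{\lambda}_{p\mu}^{(0)} = \{v-c_{p\mu},v,v,v,v+c_{p\mu}\}$ and $\boldsymbol{\lambda}_{pT\mu}^{(0)} = \{v-c_{pT\mu},v,v,v+c_{pT\mu}\}$. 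Interlacing in the weak sense therefore reduces to the single inequality $c_{pT\mu} \leq c_{p\mu}$.

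First I would obtain an explicit expression for $c_{pT\mu}$, and evaluate the relaxation sound speed $c_{p\mu}$ of \cref{eq:c_pmu} on the thermal-equilibrium manifold $T\gas = T\liq = T$ dictated by the $pT\mu$-equilibrium (consistent with the hypothesis $T>0$). Rather than re-deriving the full $pT\mu$-model, I would take its sound speed from Morin and Fl\aa tten \cite{morin_two-fluid_2013}, or equivalently recover it from \cref{rmk:relation} via $c_{pT\mu}^2 = \rho\,(\alpha\gas/\rho\gas + \alpha\liq/\rho\liq)\,c_{vpT\mu}^2$ together with the known homogeneous-equilibrium sound speed $c_{vpT\mu}$ of \cite{flatten_relaxation_2011,lund_hierarchy_2012}. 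I expect the result to carry the same denominator structure as \cref{eq:c_pmu}, but with the chemical term $(\tfrac{\xi\gas}{\rho\gas c\gas^2} - \tfrac{\xi\liq}{\rho\liq c\liq^2})^2$ supplemented by an additional squared difference reflecting the now-equalized temperature, in direct analogy with the passage from $c_p$ to $c_{pT}$.

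The core step is then to establish the decomposition
\begin{equation}
  c_{pT\mu}^{-2} = c_{p\mu}^{-2} + \Z_{pT\mu}^{p\mu}, \qquad \Z_{pT\mu}^{p\mu} \geq 0,
\end{equation}
where $\Z_{pT\mu}^{p\mu}$ should emerge as a squared thermodynamic difference (of the type $(\tfrac{\Gamma\gas}{\rho\gas c\gas^2} - \tfrac{\Gamma\liq}{\rho\liq c\liq^2})^2$ appearing in $c_{pT}$) divided by quantities that are strictly positive under the hypotheses $\rho_k,\Cpk,T>0$. Non-negativity of $\Z_{pT\mu}^{p\mu}$ would immediately give $0 \leq c_{pT\mu} \leq c_{p\mu}$, and hence the interlacing required by \cref{dfn:subchar_weak}.

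The main obstacle will be the algebra of the reduction to this clean form: the coefficients $\xi_k$, $\Gamma_k$ and the heat capacities enter the two sound-speed formulas in a tangled way, and collapsing $c_{pT\mu}^{-2} - c_{p\mu}^{-2}$ into a single perfect square will require systematic use of the thermodynamic identities relating $c_k^2$, $\Gamma_k$, $\Cpk$, $s_k$ and $\xi_k = c_k^2 - \Gamma_k T_k s_k$ (cf.\ the Supplementary Material). As a consistency check I would verify that the resulting $\Z_{pT\mu}^{p\mu}$ is compatible with the common ratio \cref{eq:ratio_c_vpmu_pmu} of \cref{rmk:relation}, i.e.\ that the $p\mu/pT\mu$ decomposition maps onto the analogous $v$-branch relation $c_{vpT\mu} \leq c_{vp\mu}$ up to the factor $\rho\,(\alpha\gas/\rho\gas + \alpha\liq/\rho\liq)$, ensuring that the two-fluid result degenerates correctly to the homogeneous-flow case.
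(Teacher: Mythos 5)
Your proposal is correct, and it contains both ingredients that appear in the paper's proof, but with their roles reversed. You make the reciprocal decomposition $c_{pT\mu}^{-2} = c_{p\mu}^{-2} + \Z_{pT\mu}^{p\mu}$ the core step and relegate the ratio identity to a consistency check; the paper does the opposite. Its primary argument is precisely the reorganization of \cref{eq:relation_extended} into $c_{p\mu}/c_{pT\mu} = c_{vp\mu}/c_{vpT\mu}$ (\cref{eq:relation_reorg}), which transfers the already-established $v$-branch result $0 \leq c_{vpT\mu} \leq c_{vp\mu}$ of Fl\aa tten and Lund \cite{flatten_relaxation_2011} directly to $0 \leq c_{pT\mu} \leq c_{p\mu}$ with no new algebra at all; the explicit non-negative $\Z_{pT\mu}^{p\mu}$ (a perfect square over positive quantities, as you anticipated) is then recorded only as a supplementary ``in particular.'' The trade-off is exactly what you flag as your ``main obstacle'': the decomposition route is self-contained but requires collapsing $c_{pT\mu}^{-2} - c_{p\mu}^{-2}$ into a single square using the identities among $\xi_k$, $\Gamma_k$, $s_k$ and $\Cpk$ (the paper's $\Z_{pT\mu}^{p\mu}$ involves the combination $\tfrac{1}{\Delta h}(\tfrac{1}{\rho\liq}-\tfrac{1}{\rho\gas})(\tfrac{s\gas}{\mCpg}+\tfrac{s\liq}{\mCpl}) + \tfrac{\Gamma\gas}{\rho\gas c\gas^2}\tfrac{s\liq}{\mCpl} + \tfrac{\Gamma\liq}{\rho\liq c\liq^2}\tfrac{s\gas}{\mCpg}$, squared, over a manifestly positive denominator), whereas the ratio route buys the conclusion essentially for free once \cref{rmk:relation} is in hand. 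Were you to promote your consistency check to the main argument, you would land on the paper's proof exactly; as written, your plan is sound but does more work than necessary.
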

\begin{proof}
  In the equilibrium state defined by the $pT\mu$-model, we have
  $T\gas=T\liq \equiv T$. The sound velocity in the $pT\mu$-model is
  given in~\cite{morin_two-fluid_2013}, and may be rewritten as
  \begin{equation}
    c_{pT\mu}^2 = \frac{ \frac{\alpha\gas}{\rho\gas} + \frac{\alpha\liq}{\rho\liq} }{ \frac{\alpha\gas}{\rho\gas c\gas^2} +
      \frac{\alpha\liq}{\rho\liq c\liq^2} +
      \mCpg T \left[ \frac{1}{\Delta h}\left(\frac{1}{\rho\liq} - \frac{1}{\rho\gas} \right) + \frac{\Gamma\gas}{\rho\gas c\gas^2}  \right]^2
      + \mCpl T \left[ \frac{1}{\Delta h} \left( \frac{1}{\rho\gas} - \frac{1}{\rho\liq} \right) - \frac{\Gamma\liq}{\rho\liq c\liq^2} \right]^2
    }  ,
    \label{eq:c_pTmu}
  \end{equation}
  where we have introduced the enthalpy difference $\Delta h = h\gas-h\liq$.

  We may reorganize the last equality in \cref{eq:relation_extended}
  to yield
  \begin{align}
    \frac{c_{p\mu}}{c_{pT\mu}} = \frac{c_{vp\mu}}{c_{vpT\mu}}.
    \label{eq:relation_reorg}
  \end{align}
  Flåtten and Lund \cite{flatten_relaxation_2011} showed that the
  subcharacteristic condition is satisfied between the models on the
  right hand side, i.e.~that $0 \leq c_{vpT\mu} \leq c_{vp\mu}$.
  The same must hold for the models on the left hand side of
  \cref{eq:relation_reorg}, i.e.~$0 \leq c_{pT\mu} \leq c_{p\mu}$, and hence the subcharacteristic condition is satisfied.
  In particular, we may write the sound speed as
  \begin{gather}
    c_{pT\mu}^{-2} = c_{p\mu}^{-2} + \Z_{pT\mu}^{p\mu},\end{gather}
  where
  \begin{gather}
    \Z_{pT\mu}^{p\mu} = \mCpg \mCpl T \frac{\left[ \frac{1}{\Delta h} \left( \frac{1}{\rho\liq}-\frac{1}{\rho\gas} \right) \left(\frac{s\gas}{\mCpg}+\frac{s\liq}{\mCpl}\right) + \frac{\Gamma\gas}{\rho\gas c\gas^2} \frac{s\liq}{\mCpl} + \frac{\Gamma\liq}{\rho\liq c\liq^2} \frac{s\gas}{\mCpg} \right]^2}{\left( \frac{\alpha\gas}{\rho\gas} + \frac{\alpha\liq}{\rho\liq} \right)\left(\frac{s\gas^2}{\mCpg}+\frac{s\liq^2}{\mCpl}\right)} .
  \end{gather}
  Clearly, $\Z_{pT\mu}^{p\mu} \geq 0$ based on the given conditions.
\end{proof}

\section{The $T\mu$-model}\label{sec:Tmu-model}
We now investigate the model which arises when we assume instantaneous thermal-chemical equilibrium, i.e.~let the relaxation parameters $\K,\H \to \infty$, which expectedly corresponds to
\begin{align}
  T\gas = T\liq \equiv T \quad \textrm{and} \quad \mu\gas = \mu\liq \equiv \mu .
\end{align}
The products $H\gas = \H(T\liq-T\gas)$ and $K\gas = \K(\mu\liq-\mu\gas)$ remain finite, and may be expressed in terms of spatial derivatives and remaining source terms. In the forthcoming, we seek explicit expressions for these terms to insert into the basic model of \cref{sec:basic_model}.

The equilibrium conditions are contained in \cref{eq:Dmu,eq:DT_1}.
These may be combined to yield
\begin{equation}
  K\gas = - \mcl A^{T\mu}_{\mu} \pd {\alpha\gas} x - \mcl G^{T\mu}_\mu \veps \pd \mu x - \mcl T^{T\mu}_\mu \veps \pd T x - \mcl V^{T\mu}_{\mu,\Gas} \pd {\alpha\gas v\gas} x + \mcl V^{T\mu}_{\mu,\Liq} \pd {\alpha\liq v\liq} x + \mcl K^{T\mu}_p \tilde I\gas + \mcl K^{T\mu}_v \veps M\gas
  \label{eq:K_Tmu}
\end{equation}
where the coefficients are given in \cref{sec:coeff_Tmu}.

\subsection{Governing equations}
We are now in a position to state the $T\mu$-model in its entirety, by inserting \cref{eq:K_Tmu} into the basic model of \cref{sec:basic_model}. The model can be expressed by the following equation set:
\begin{itemize}
\item Volume advection: $\pd {\alpha\gas} t + \vp \pd {\alpha\gas} x = I\gas$,
\item Conservation of mass: $\pd {\rho} t + \pd {\left(\alpha\gas \rho\gas v\gas + \alpha\liq \rho\liq v\liq \right)}{x} = 0$,
\item Conservation of momentum:
  \begin{multline}
    \pd {\alpha\gas\rho\gas v\gas} t + \pd {(\alpha\gas\rho\gas
      v\gas^2 + \alpha\gas p\gas)} x + v\intph \big[ \mcl G^{T\mu}_\mu (v\gas-v\liq) \pd \mu x + \mcl
      T^{T\mu}_\mu (v\gas-v\liq) \pd T x + \mcl V^{T\mu}_{\mu,\Gas} \pd
      {\alpha\gas v\gas} x - \mcl V^{T\mu}_{\mu,\Liq} \pd {\alpha\liq
        v\liq} x \big] \\+ \left( v\intph^2 \left(\mcl V^{T\mu}_{\mu,\Gas} + \mcl V^{T\mu}_{\mu,\Liq}\right) - p\intph \right) \pd
    {\alpha\gas} x  =  v\intph \mcl K^{T\mu}_p I\gas
      + \left(1 + v\intph \mcl K^{T\mu}_v (v\gas-v\liq) \right) M\gas ,
  \end{multline}
\item Conservation of energy: $\pd {E} t + \pd {\left( E\gas v\gas + E\liq v\liq + \alpha\gas
        v\gas p\gas + \alpha\liq v\liq p\liq \right)} x = 0$.
\end{itemize}

\subsection{Wave velocities}
We now seek the wave velocities of the model in the homogeneous limit,
where $\J,\M\to 0$.
As usual, we are interested in the zeroth-order expansion in
$\varepsilon = v\gas-v\liq$.\footnote{Strictly speaking, exact eigenvalues may be found analytically in this model, since noting that $\alpha\gas$ is a characteristic variable reduces the eigenvalue problem to finding the solutions of a fourth-degree polynomial, which is analytically tractable.} We may therefore directly evaluate the evolution equations in this limit, and take $v\gas = v\liq = v$ if they are outside the differential operator.

After some tedious, but fairly straightforward algebra, we find that to the lowest order in $\veps$, the wave velocities of the $T\mu$-model are given by
\begin{align}\label{eq:eigenvals_Tmu}
  \boldsymbol{\lambda}_{T\mu}^{(0)} = \left\lbrace v - c_{T\mu,+}, v - c_{T\mu,-}, v, v + c_{T\mu,-}, v + c_{T\mu,+} \right\rbrace.
\end{align}
Herein, $c_{T\mu,\pm}$ are the sound speeds of this model, which may be expressed by
\begin{multline}
  c_{T\mu,\pm}^2
  = \tfrac{1}{2}
  \Biggl\lbrace \tfrac{\Delta h^2 m\gas m\liq \left( c\gas^{2} + c\liq^{2} \right)}{\mCpg \mCpl T^2 c\gas^2 c\liq^2} + \tfrac{m\liq + m\gas \left( 1 + \frac{\Gamma\liq}{c\liq^2} \Delta h \right)^2 }{\mCpg T} 
    +  \tfrac{m\gas + m\liq \left( 1 - \frac{\Gamma\gas}{c\gas^2} \Delta h \right)^2 }{\mCpl T} \pm \\
    \Biggl[ \Biggl(
      \tfrac{\Delta h^2 m\gas m\liq \left( c\gas^{2} - c\liq^{2} \right)}{\mCpg \mCpl T^2 c\gas^2 c\liq^2} 
      - \tfrac{m\liq - m\gas \left( 1 + \frac{\Gamma\liq}{c\liq^2} \Delta h \right)^2}{\mCpg T}
      + \tfrac{m\gas - m\liq \left( 1 - \frac{\Gamma\gas}{c\gas^2} \Delta h \right)^2}{\mCpl T}
      \Biggr)^2
  + 4 m\gas m\liq \Biggl( \tfrac{1 + \frac{\Gamma\liq}{c\liq^2} \Delta h }{\mCpg T}
      + \tfrac{1 - \frac{\Gamma\gas}{c\gas^2} \Delta h }{\mCpl T}
      \Biggr)^2 \Biggr]^{\frac{1}{2}}
    \Biggr\rbrace  \\ \times \Biggl[
 \tfrac{\Delta h^2 m\gas m\liq}{\mCpg \mCpl T^2 c\gas^2 c\liq^2}
  + \tfrac{\tfrac{m\liq}{c\liq^2} + \tfrac{m\gas}{c\gas^2} \left( 1 + \tfrac{\Gamma\liq}{c\liq^2} \Delta h \right)^2}{\mCpg T}
  + \tfrac{\tfrac{m\gas}{c\gas^2} + \tfrac{m\liq}{c\liq^2} \left( 1 - \tfrac{\Gamma\gas}{c\gas^2} \Delta h \right)^2 }{\mCpl T }  
 + \left(\tfrac{\Gamma\liq}{c\liq^2} - \tfrac{\Gamma\gas}{c\gas^2} - \tfrac{\Gamma\gas\Gamma\liq}{c\gas^2 c\liq^2} \Delta h \right)^2 \Biggr]^{-1}.
 \label{eq:soundvel_Tmu}
\end{multline}

\begin{prop}\label{prop:SC_Tmu_T}
  The $T\mu$-model satisfies the weak subcharacteristic condition with
  respect to the $T$-model, given the physically fundamental
  conditions $\rho_k, \Cpk, T > 0$, in the equilibrium state defined by \cref{eq:v-equil}.
\end{prop}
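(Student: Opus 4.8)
The goal is to verify the interlacing of \cref{dfn:subchar} between the five eigenvalues $\boldsymbol\lambda_{T\mu}^{(0)}$ and the six eigenvalues $\boldsymbol\lambda_T^{(0)}$, evaluated in the common-velocity state \cref{eq:v-equil} so that only the weak condition of \cref{dfn:subchar_weak} is claimed. The material waves at $v$ interlace trivially, so everything reduces to proving the nested ordering of the acoustic speeds,
\[ 0 \leq c_{T\mu,-} \leq c_{T,-} \leq c_{T\mu,+} \leq c_{T,+} . \]
This is structurally identical to \cref{prop:SC_T_0}, with the $T$-model now playing the role of the frozen (larger) model and the $T\mu$-model the relaxed one; I would therefore follow the same two-pronged strategy, using the explicit sound speeds \cref{eq:soundvel_Tmu} and \cref{eq:c_T}.

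First I would settle reality and sign. On the stated conditions $\rho_k,\Cpk,T>0$ the radicand in \cref{eq:soundvel_Tmu} is a sum of squares and the bracketed denominator is strictly positive, so $c_{T\mu,\pm}^2\in\mathbb R$ and $c_{T\mu,+}^2\geq 0$ are immediate. To obtain $c_{T\mu,-}^2\geq 0$ I would compute the product of the two $T\mu$-speeds and express it through that of the $T$-model,
\[ c_{T\mu,+}^{-2}c_{T\mu,-}^{-2} = c_{T,+}^{-2}c_{T,-}^{-2} + \Z_{T\mu}^T , \]
showing that the remainder $\Z_{T\mu}^T$ is a manifestly nonnegative combination of the coefficients. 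This simultaneously yields $0\leq c_{T\mu,+}^2 c_{T\mu,-}^2 \leq c_{T,+}^2 c_{T,-}^2$ and $c_{T\mu,-}^2\geq 0$.

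The second prong is to form the product of the four pairwise differences of squared speeds and write it as a nonpositive quantity,
\[ (c_{T,+}^2-c_{T\mu,+}^2)(c_{T,+}^2-c_{T\mu,-}^2)(c_{T,-}^2-c_{T\mu,+}^2)(c_{T,-}^2-c_{T\mu,-}^2) = -\Q_{T\mu}^T , \]
with $\Q_{T\mu}^T$ a positive multiple of a perfect square, hence $\Q_{T\mu}^T\geq 0$. Exactly as in \cref{prop:SC_T_0}, the two relations are jointly compatible only with the nested ordering: the sign of the four-factor product leaves just the nested configuration and the one in which $c_{T\mu,+}$ overshoots $c_{T,+}$, and the product bound $c_{T\mu,+}^2 c_{T\mu,-}^2 \leq c_{T,+}^2 c_{T,-}^2$ then discards the latter. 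This establishes the interlacing and thereby the weak subcharacteristic condition.

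The main obstacle is entirely computational. Both displayed identities require reducing the cumbersome expression \cref{eq:soundvel_Tmu}, in which the enthalpy difference $\Delta h = h\gas - h\liq$ enters quadratically and is coupled to the Gr\"uneisen coefficients $\Gamma_k$, to the two target forms with correctly signed remainders. The delicate point is to collect the $\Delta h$- and $\Gamma_k$-dependent cross terms into a single square so that $\Q_{T\mu}^T$ is manifestly nonnegative; this is the $T\mu$-analogue of the $(c\gas^2-c\liq^2)$ factor seen in $\Q_T^0$ and $\Q_{vT}^T$, but the extra $\Delta h$ couplings make the factorization substantially more intricate. I anticipate that $\Q_{T\mu}^T$ will ultimately appear as $m\gas m\liq$ (or $\alpha\gas\alpha\liq$) times a squared bracket, divided by the square of the denominator of $c_{T\mu,\pm}^2$, which secures its sign once the algebra is carried through.
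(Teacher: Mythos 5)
Your strategy is correct and is, in its essential structure, the same as the paper's: the centerpiece in both is the identity
$(c_{T,+}^2-c_{T\mu,+}^2)(c_{T,+}^2-c_{T\mu,-}^2)(c_{T,-}^2-c_{T\mu,+}^2)(c_{T,-}^2-c_{T\mu,-}^2)=-\Q_{T\mu}^{T}$
with $\Q_{T\mu}^{T}$ a manifestly nonnegative multiple of a squared bracket, which leaves only the nested and the overshooting configurations. The one point where you diverge is the auxiliary constraint used to discard the overshoot. You propose the reciprocal-product relation $c_{T\mu,+}^{-2}c_{T\mu,-}^{-2}=c_{T,+}^{-2}c_{T,-}^{-2}+\Z$ with $\Z\geq 0$, mirroring \cref{prop:SC_T_0} and \cref{prop:SC_mu_0}; the paper instead uses the difference of sums, $\left(c_{T,+}^2+c_{T,-}^2\right)-\left(c_{T\mu,+}^2+c_{T\mu,-}^2\right)=\Z_{T\mu}^{T}\geq 0$. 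Logically the two are interchangeable (in the overshooting ordering both the product and the sum of the relaxed squared speeds would exceed those of the $T$-model), and your version has the small advantage of delivering $c_{T\mu,-}^2\geq 0$ directly from the positivity of the product, whereas the paper's sum identity is cleaner to extract from \cref{eq:soundvel_Tmu}, since the sum of the roots avoids the square root entirely and its numerator collapses into two nonnegative terms by inspection. The only caveat is that your nonnegativity claim for the product remainder $\Z$ is asserted rather than exhibited; if the $\Delta h$-coupled cross terms do not collapse into a manifest sum of squares in that form, the paper's sum-of-roots identity is the fallback that demonstrably does. Otherwise the argument is complete and the conclusion $0\leq c_{T\mu,-}\leq c_{T,-}\leq c_{T\mu,+}\leq c_{T,+}$ follows as you describe.
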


\begin{proof}
  We may write
  \begin{align}
    \left( c_{T,+}^2 - c_{T\mu,+}^2 \right) \left( c_{T,+}^2 - c_{T\mu,-}^2 \right) \left( c_{T,-}^2 - c_{T\mu,+}^2 \right) \left( c_{T,-}^2 - c_{T\mu,-}^2 \right) = - \Q_{T\mu}^{T},
  \end{align}
  where
  \begin{multline}
    \Q_{T\mu}^{T} = m\gas m\liq \Biggl[ \left( \tfrac{1}{\mCpl T} + \tfrac{1+\Gamma\liq c\liq^{-2}\Delta h}{\mCpg T} \right) \left( \tfrac{1}{\mCpg T} + \tfrac{1-\Gamma\gas c\gas^{-2}\Delta h}{\mCpl T} + \tfrac{\Gamma\liq}{m\liq} \left( \tfrac{\Gamma\liq}{c\liq^2} - \tfrac{\Gamma\gas}{c\gas^2} - \tfrac{\Gamma\gas \Gamma\liq}{c\gas^2 c\liq^2} \Delta h \right) \right) c\gas^2 \\- \left( \tfrac{1}{\mCpg T} + \tfrac{1-\Gamma\gas c\gas^{-2}\Delta h}{\mCpl T} \right)\left( \tfrac{1}{\mCpl T} + \tfrac{1+\Gamma\liq c\liq^{-2}\Delta h}{\mCpg T} - \tfrac{\Gamma\gas}{m\gas} \left( \tfrac{\Gamma\liq}{c\liq^2} - \tfrac{\Gamma\gas}{c\gas^2} - \tfrac{\Gamma\gas \Gamma\liq}{c\gas^2 c\liq^2} \Delta h \right)\right) c\liq^2  \Biggr]^2 \\ \times \Biggl[ \Biggl(
 \tfrac{\Delta h^2 m\gas m\liq}{\mCpg \mCpl T^2 c\gas^2 c\liq^2}
  + \tfrac{1}{\mCpg T} \left( \tfrac{m\liq}{c\liq^2} + \tfrac{m\gas}{c\gas^2} \left( 1 + \tfrac{\Gamma\liq}{c\liq^2} \Delta h \right)^2
 \right)
  + \tfrac{1}{\mCpl T } \left( \tfrac{m\gas}{c\gas^2} + \tfrac{m\liq}{c\liq^2} \left( 1 - \tfrac{\Gamma\gas}{c\gas^2} \Delta h \right)^2
 \right) \\
  + \left(\tfrac{\Gamma\liq}{c\liq^2} - \tfrac{\Gamma\gas}{c\gas^2} - \tfrac{\Gamma\gas\Gamma\liq}{c\gas^2 c\liq^2} \Delta h \right)^2 \Biggr) \left( \tfrac{1}{\mCpg T} + \tfrac{1}{\mCpl T} + \tfrac{\Gamma\liq^2}{m\liq c\liq^2} + \tfrac{\Gamma\gas^2}{m\gas c\gas^2} \right) \Biggr]^{-2}.
  \end{multline}
  Moreover, we may write
  \begin{align}
    \left( c_{T,+}^2 + c_{T,-}^2 \right) - \left( c_{T\mu,+}^2 + c_{T\mu,-}^2 \right) = \Z_{T\mu}^{T}
  \end{align}
  where
  \begin{multline}
    \Z_{T\mu}^T = \Biggl[ \Bigl( \tfrac{1}{\mCpg T} +
    \tfrac{1-\frac{\Gamma\gas} {c\gas^2}\Delta h}{\mCpl T} +
    \tfrac{\Gamma\liq}{m\liq} \left( \tfrac{\Gamma\liq}{c\liq^2} -
      \tfrac{\Gamma\gas}{c\gas^2} - \tfrac{\Gamma\gas
        \Gamma\liq}{c\gas^2 c\liq^2} \Delta h \right) \Bigr) \tfrac{
      m\liq c\gas^2}{c\liq^2} \\+ \Bigl( \tfrac{1}{\mCpl T} +
    \tfrac{1+\tfrac{\Gamma\liq}{c\liq^2}\Delta h}{\mCpg T} -
    \tfrac{\Gamma\gas}{m\gas} \left( \tfrac{\Gamma\liq}{c\liq^2} -
      \tfrac{\Gamma\gas}{c\gas^2} - \tfrac{\Gamma\gas
        \Gamma\liq}{c\gas^2 c\liq^2} \Delta h \right)\Bigr) \tfrac{
      m\gas c\liq^2}{c\gas^2} \Biggr] \\
 \times \Biggl[ \Biggl(
 \tfrac{\Delta h^2 m\gas m\liq}{\mCpg \mCpl T^2 c\gas^2 c\liq^2}
  + \tfrac{ \frac{m\liq}{c\liq^2} + \frac{m\gas}{c\gas^2} \left( 1 + \frac{\Gamma\liq}{c\liq^2} \Delta h \right)^2
 }{\mCpg T}
  + \tfrac{\frac{m\gas}{c\gas^2} + \frac{m\liq}{c\liq^2} \left( 1 - \frac{\Gamma\gas}{c\gas^2} \Delta h \right)^2}{\mCpl T }
  + \left(\tfrac{\Gamma\liq}{c\liq^2} - \tfrac{\Gamma\gas}{c\gas^2} - \tfrac{\Gamma\gas\Gamma\liq}{c\gas^2 c\liq^2} \Delta h \right)^2 \Biggr) \\ \times \left( \tfrac{1}{\mCpg T} + \tfrac{1}{\mCpl T} + \tfrac{\Gamma\liq^2}{m\liq c\liq^2} + \tfrac{\Gamma\gas^2}{m\gas c\gas^2} \right) \Biggr]^{-1}.
  \end{multline}
  Clearly, $\Z_{T\mu}^T \geq 0$ and $\Q_{T\mu}^T \geq 0$ on grounds of
  the given
  conditions. 
  The only possible ordering of sound speeds is thus $ 0 \leq c_{T\mu,-} \leq c_{T,-} \leq c_{T\mu,+} \leq c_{T,+}$, i.e.~the weak subcharacteristic condition is satisfied.
\end{proof}

\begin{prop}\label{prop:SC_Tmu_mu}
  The $T\mu$-model satisfies the weak subcharacteristic condition with
  respect to the $\mu$-model, given the physically fundamental conditions $\rho_k,\Cpk, T > 0$, in the equilibrium state defined by \cref{eq:v-equil}.
\end{prop}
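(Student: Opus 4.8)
The plan is to reuse, essentially verbatim, the two-pronged ``product-of-differences plus a second symmetric invariant'' technique employed in the proof of \cref{prop:SC_Tmu_T}, but now comparing the $T\mu$-model against the $\mu$-model rather than against the $T$-model. Evaluated to lowest order in $\veps = v\gas - v\liq$ in the equilibrium state \cref{eq:v-equil}, both models carry the central velocity $v$ (of multiplicity two in the $\mu$-model and one in the $T\mu$-model) together with the two genuine acoustic pairs $v \pm c_{\mu,\pm}$ and $v \pm c_{T\mu,\pm}$. Since the $T\mu$-model is obtained from the $\mu$-model by additionally enforcing thermal equilibrium, it is the equilibrium system (with $n=5$) and the $\mu$-model the relaxation system (with $N=6$), so $N-n=1$ and the interlacing requirement $\lambda_j \in [\Lambda_j,\Lambda_{j+1}]$ of \cref{dfn:subchar} collapses to the single chain $0 \le c_{T\mu,-} \le c_{\mu,-} \le c_{T\mu,+} \le c_{\mu,+}$, which is what I would aim to establish.

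First I would take the explicit sound speeds $c_{\mu,\pm}$ from \cref{eq:c_mupm} and $c_{T\mu,\pm}$ from \cref{eq:soundvel_Tmu} and form the product of the four pairwise differences $(c_{\mu,+}^2 - c_{T\mu,+}^2)(c_{\mu,+}^2 - c_{T\mu,-}^2)(c_{\mu,-}^2 - c_{T\mu,+}^2)(c_{\mu,-}^2 - c_{T\mu,-}^2)$. Because this quantity is symmetric separately in $\{c_{\mu,+}^2, c_{\mu,-}^2\}$ and in $\{c_{T\mu,+}^2, c_{T\mu,-}^2\}$, it can be expressed entirely through the elementary symmetric functions (sum and product) of each pair, which are read off as rational coefficients of the two characteristic polynomials; this avoids ever isolating the individual square roots. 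I expect it to collapse to $-\Q_{T\mu}^\mu$ with $\Q_{T\mu}^\mu$ a perfect square times manifestly positive prefactors built from $m\gas, m\liq, c\gas^2, c\liq^2, \mCpg, \mCpl, T$ and the enthalpy difference $\Delta h = h\gas - h\liq$, exactly as $\Q_{T\mu}^T$ and $\Q_\mu^0$ arose in the neighbouring proofs. Under $\rho_k,\Cpk,T>0$ this yields $\Q_{T\mu}^\mu \ge 0$, so the fourfold product is nonpositive and an odd number of the four factors is negative.

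Second, to exclude the ``three-negative'' case and fix the ordering, I would compute one additional scalar invariant of the pair, most naturally the difference of reciprocal products $c_{T\mu,+}^{-2}c_{T\mu,-}^{-2} - c_{\mu,+}^{-2}c_{\mu,-}^{-2}$ (or equivalently of the sums $c_{\mu,+}^2 + c_{\mu,-}^2 - c_{T\mu,+}^2 - c_{T\mu,-}^2$), and show it equals a nonnegative $\Z_{T\mu}^\mu$, just as $\Z_{T\mu}^T$ was used in \cref{prop:SC_Tmu_T} and $\Z_\mu^0$ in \cref{prop:SC_mu_0}. This forces $c_{T\mu,+}^2 c_{T\mu,-}^2 \le c_{\mu,+}^2 c_{\mu,-}^2$, which, combined with the sign of the product of differences and the already-established positivity $c_{T\mu,+}^2 \ge 0$, leaves $0 \le c_{T\mu,-} \le c_{\mu,-} \le c_{T\mu,+} \le c_{\mu,+}$ as the only admissible ordering, establishing the weak subcharacteristic condition.

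The main obstacle is purely algebraic: the $T\mu$-sound speeds in \cref{eq:soundvel_Tmu} are considerably more intricate than any encountered so far, and in particular the enthalpy difference $\Delta h$ couples the Gr\"uneisen contributions in a way absent from both the $T$- and the $\mu$-models. The decisive creative step is therefore massaging the resulting large rational expression for the product of differences into a manifestly signed sum-of-squares form $-\Q_{T\mu}^\mu$. I would organize this by working with $c_{T\mu,\pm}^2$ only through their symmetric functions and by factoring out the common positive denominators visible in \cref{eq:soundvel_Tmu,eq:c_mupm} before attempting the square completion; a ratio shortcut of the kind exploited in \cref{rmk:relation} appears unavailable here, since both models carry two distinct acoustic speeds rather than a single one.
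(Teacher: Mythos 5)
Your proposal matches the paper's proof in both structure and substance: the paper establishes exactly the fourfold product identity $(c_{\mu,+}^2 - c_{T\mu,+}^2)(c_{\mu,+}^2 - c_{T\mu,-}^2)(c_{\mu,-}^2 - c_{T\mu,+}^2)(c_{\mu,-}^2 - c_{T\mu,-}^2) = -\Q_{T\mu}^{\mu}$ with $\Q_{T\mu}^{\mu}$ a manifestly nonnegative square, and then uses the second symmetric invariant $(c_{\mu,+}^2 + c_{\mu,-}^2) - (c_{T\mu,+}^2 + c_{T\mu,-}^2) = \Z_{T\mu}^{\mu} \geq 0$ (the sum-of-squares variant you list as the parenthetical alternative) to exclude the three-negative-factor ordering and conclude $0 \leq c_{T\mu,-} \leq c_{\mu,-} \leq c_{T\mu,+} \leq c_{\mu,+}$. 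The only remaining work in your plan is the algebra itself, which the paper carries out exactly as you outline.
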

\begin{proof}
  We note that we may write
  \begin{align}
    \left( c_{\mu,+}^2 - c_{T\mu,+}^2 \right) \left( c_{\mu,+}^2 - c_{T\mu,-}^2 \right) \left( c_{\mu,-}^2 - c_{T\mu,+}^2 \right) \left( c_{\mu,-}^2 - c_{T\mu,-}^2 \right) = - \Q_{T\mu}^{\mu},
  \end{align}
  where
  \begin{multline}
    \Q_{T\mu}^{\mu} = m\gas m\liq \Biggl[ \Bigl( \tfrac{s\liq}{\mCpl} + \tfrac{s\gas \left( 1 + \tfrac{\Gamma\liq}{c\liq^2} \Delta h \right)}{\mCpg} \Bigr) \Bigl( \tfrac{s\gas}{\mCpg} + \tfrac{s\liq \left( 1- \tfrac{\Gamma\gas}{c\gas^2} \Delta h \right)}{\mCpl}  - \left( \tfrac{\Gamma\liq}{c\liq^2} - \tfrac{\Gamma\gas}{c\gas^2} - \tfrac{\Gamma\gas \Gamma\liq}{c\gas^2 c\liq^2} \Delta h \right) \tfrac{\xi\liq}{m\liq} \Bigr) c\gas^2  \\ - \Biggl(  \tfrac{s\gas}{\mCpg} + \tfrac{s\liq \left( 1 - \tfrac{\Gamma\gas}{c\gas^2} \Delta h \right)}{\mCpl} \Biggr)\Biggl( \tfrac{s\liq}{\mCpl} + \tfrac{s\gas \left( 1 + \tfrac{\Gamma\liq}{c\liq^2} \Delta h \right)}{\mCpg} + \left( \tfrac{\Gamma\liq}{c\liq^2} - \tfrac{\Gamma\gas}{c\gas^2} - \tfrac{\Gamma\gas \Gamma\liq}{c\gas^2 c\liq^2} \Delta h \right) \tfrac{\xi\gas}{m\gas} \Biggr) c\liq^2 \Biggr]^2 \\\times \Biggl[ \Biggl(
 \tfrac{\Delta h^2 m\gas m\liq}{\mCpg \mCpl T^2 c\gas^2 c\liq^2}
  + \tfrac{1}{\mCpg T} \left( \tfrac{m\liq}{c\liq^2} + \frac{m\gas}{c\gas^2} \left( 1 + \tfrac{\Gamma\liq}{c\liq^2} \Delta h \right)^2
 \right)
  + \tfrac{1}{\mCpl T } \left( \tfrac{m\gas}{c\gas^2} + \tfrac{m\liq}{c\liq^2} \left( 1 - \tfrac{\Gamma\gas}{c\gas^2} \Delta h \right)^2
 \right) \\
  + \left(\tfrac{\Gamma\liq}{c\liq^2} - \tfrac{\Gamma\gas}{c\gas^2} - \tfrac{\Gamma\gas\Gamma\liq}{c\gas^2 c\liq^2} \Delta h \right)^2 \Biggr) \left( \tfrac{T s\gas^2}{\mCpg} + \tfrac{T s\liq^2}{\mCpl} + \tfrac{\xi\gas^2}{m\gas c\gas^2} + \tfrac{\xi\liq^2}{m\liq c\liq^2} \right) \Biggr]^{-2} .
  \end{multline}
  Now, we may also write
$    \left( c_{\mu,+}^2 + c_{\mu,-}^2 \right) - \left( c_{T\mu,+}^2 + c_{T\mu,-}^2 \right) = \Z_{T\mu}^{\mu}$,
where
  \begin{multline}
    \Z_{T\mu}^{\mu} = \Biggl\{ \Biggl[ \tfrac{s\gas}{\mCpg} + \tfrac{s\liq\left( 1- \tfrac{\Gamma\gas}{c\gas^2} \Delta h \right)}{\mCpl}  - \left( \tfrac{\Gamma\liq}{c\liq^2} - \tfrac{\Gamma\gas}{c\gas^2} - \tfrac{\Gamma\gas \Gamma\liq}{c\gas^2 c\liq^2} \Delta h \right) \tfrac{\xi\liq}{m\liq} \Biggr]^2 \tfrac{m\liq c\gas^2}{c\liq^2} \\ + \Biggl[ \tfrac{s\liq}{\mCpl} + \tfrac{s\gas \left( 1 + \tfrac{\Gamma\liq}{c\liq^2} \Delta h \right)}{\mCpg} + \left( \tfrac{\Gamma\liq}{c\liq^2} - \tfrac{\Gamma\gas}{c\gas^2} - \tfrac{\Gamma\gas \Gamma\liq}{c\gas^2 c\liq^2} \Delta h \right) \tfrac{\xi\gas}{m\gas} \Biggr]^2 \tfrac{m\gas c\liq^2}{c\gas^2} \Biggr\} \\ 
    \times \Biggl[ \Biggl(
 \tfrac{\Delta h^2 m\gas m\liq}{\mCpg \mCpl T^2 c\gas^2 c\liq^2}
  + \tfrac{1}{\mCpg T} \left( \tfrac{m\liq}{c\liq^2} + \tfrac{m\gas}{c\gas^2} \left( 1 + \tfrac{\Gamma\liq}{c\liq^2} \Delta h \right)^2
 \right) \\ + \tfrac{1}{\mCpl T } \left( \tfrac{m\gas}{c\gas^2} + \tfrac{m\liq}{c\liq^2} \left( 1 - \tfrac{\Gamma\gas}{c\gas^2} \Delta h \right)^2
 \right) 
  + \left(\tfrac{\Gamma\liq}{c\liq^2} - \tfrac{\Gamma\gas}{c\gas^2} - \tfrac{\Gamma\gas\Gamma\liq}{c\gas^2 c\liq^2} \Delta h \right)^2 \Biggr)  \left( \tfrac{T s\gas^2}{\mCpg} + \tfrac{T s\liq^2}{\mCpl} + \tfrac{\xi\gas^2}{m\gas c\gas^2} + \tfrac{\xi\liq^2}{m\liq c\liq^2} \right) \Biggr]^{-1}.
  \end{multline}
  Clearly, $\Q_{T\mu}^{\mu} \geq 0$ and $\Z_{T\mu}^{\mu} \geq 0$ for the given conditions. The only possible ordering of the sound speeds is therefore
  $    0 \leq c_{T\mu,-} \leq c_{\mu,-} \leq c_{T\mu,+} \leq c_{\mu,+}$,
  i.e.~the eigenvalues of the relaxed model are interlaced between the eigenvalues of the parent model, and the weak subcharacteristic condition is satisfied.
\end{proof}

\begin{prop}\label{prop:SC_vTmu_Tmu}
The $vT\mu$-model satisfies the subcharacteristic condition with
respect to the $T\mu$-model, given the physically fundamental conditions $\rho_k, \Cpk, T > 0$.
\end{prop}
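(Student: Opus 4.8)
The plan is to follow the template of \cref{prop:SC_vT_T,prop:SC_vmu_mu}, since the $vT\mu$-model arises from the $T\mu$-model precisely by velocity relaxation ($\M \to \infty$), in the same way as the $vT$- and $v\mu$-models arise from the $T$- and $\mu$-models. The $vT\mu$-model has a single sound speed $c_{vT\mu}$, whereas the $T\mu$-model has the two sound speeds $c_{T\mu,\pm}$ of \cref{eq:soundvel_Tmu}; moreover, the equilibrium state of the reduced model has $v\gas = v\liq$ by construction, so that \cref{eq:soundvel_Tmu} is evaluated at the genuine equilibrium and no extra assumption is needed. The (strong) subcharacteristic condition therefore reduces to the single interlacing inequality $c_{T\mu,-} \leq c_{vT\mu} \leq c_{T\mu,+}$, which follows once the product of differences $(c_{T\mu,+}^2 - c_{vT\mu}^2)(c_{T\mu,-}^2 - c_{vT\mu}^2)$ is shown to be non-positive.

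First I would take the expression for $c_{vT\mu}^2$ from Lund \cite{lund_hierarchy_2012}; by analogy with $c_{vT}^2$ and $c_{v\mu}^2$ it carries the $1/\rho$ prefactor characteristic of the velocity-equilibrium branch. Next, writing $c_{T\mu,\pm}^2 = (A \pm \sqrt{B})/(2C)$ in the obvious notation read off from \cref{eq:soundvel_Tmu}, I would record the sum $c_{T\mu,+}^2 + c_{T\mu,-}^2 = A/C$ and the product $c_{T\mu,+}^2 c_{T\mu,-}^2 = (A^2 - B)/(4C^2)$, and then expand
\begin{align*}
  (c_{T\mu,+}^2 - c_{vT\mu}^2)(c_{T\mu,-}^2 - c_{vT\mu}^2) = c_{vT\mu}^4 - \left( c_{T\mu,+}^2 + c_{T\mu,-}^2 \right) c_{vT\mu}^2 + c_{T\mu,+}^2 c_{T\mu,-}^2 .
\end{align*}
The goal is to collect the right-hand side into the form $-\Q_{vT\mu}^{T\mu}$, where I expect, in line with \cref{eq:SC_vT_T_prod,eq:SC_vmu_mu_prod}, that $\Q_{vT\mu}^{T\mu} = Y\gas Y\liq \, [\,\cdots\,]^2$ for a single bracketed factor, with $Y_k = m_k/\rho$. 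The positivity of the mass fractions together with the squared bracket renders $\Q_{vT\mu}^{T\mu} \geq 0$ manifest under $\rho_k, \Cpk, T > 0$.

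The main obstacle is the algebra of this middle step: \cref{eq:soundvel_Tmu} is substantially more intricate than the $T$- and $\mu$-model sound speeds, so clearing the common denominators and verifying that the numerator of $c_{vT\mu}^4 - (c_{T\mu,+}^2 + c_{T\mu,-}^2) c_{vT\mu}^2 + c_{T\mu,+}^2 c_{T\mu,-}^2$ collapses to a perfect square (times $Y\gas Y\liq$, over the positive denominators of the two models) demands careful tracking of the $\Delta h$-, $\Gamma_k$- and $s_k$-dependent terms. This is the step I expect to consume most of the effort, and where a computer-algebra check is advisable. Once $\Q_{vT\mu}^{T\mu} \geq 0$ is established, exactly one of the two factors on the left-hand side must be negative; combined with $0 \leq c_{T\mu,-} \leq c_{T\mu,+}$, which is inherited from \cref{prop:SC_Tmu_T,prop:SC_Tmu_mu}, this forces $c_{T\mu,-}^2 \leq c_{vT\mu}^2 \leq c_{T\mu,+}^2$, yielding the interlacing and hence the subcharacteristic condition.
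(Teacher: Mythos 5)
Your proposal matches the paper's proof in essentially every respect: both take $c_{vT\mu}^2$ from Lund, write the product $\left( c_{T\mu,+}^2 - c_{vT\mu}^2 \right)\left( c_{T\mu,-}^2 - c_{vT\mu}^2 \right) = -\Q_{vT\mu}^{T\mu}$ with $\Q_{vT\mu}^{T\mu} = Y\gas Y\liq\,[\cdots]^2 \geq 0$, and conclude $c_{T\mu,-} \leq c_{vT\mu} \leq c_{T\mu,+}$ from the sign of exactly one factor together with the ordering of $c_{T\mu,\pm}$ already established. Your observation that velocity equilibrium holds by construction here (so the condition is strong, not weak) is also exactly the paper's stance.
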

\begin{proof}
  The sound speed in the $vT\mu$-model is given by \cite{lund_hierarchy_2012}
  \begin{equation}
    c_{vT\mu}^2 = \frac{
       \tfrac{ \Delta h^2 m\gas m\liq c_v^2 }
             { \mCpg \mCpl T^2 c\gas^2 c\liq^2 \rho } 
       + \tfrac{ \left( \rho + m\gas \frac{\Gamma\liq}{c\liq^2} \Delta h \right)^2 }
               {\mCpg T}
       + \tfrac{\left( \rho - m\liq \frac{\Gamma\gas }{c\gas^2} \Delta h \right)^2 }
               { \mCpl T} 
    }{
        \tfrac{ \Delta h^2 m\gas m\liq }
              { \mCpg \mCpl T^2 c\gas^2 c\liq^2 } 
     + \tfrac{ \frac{m\liq}{c\liq^2} + \frac{m\gas}{c\gas^2} \left(1 + \frac{\Gamma\liq}{c\liq^2} \Delta h \right)^2}
                {\mCpg T}
        + \tfrac{ \frac{m\gas}{c\gas^2} + \frac{m\liq}{c\liq^2} \left(1 - \frac{\Gamma\gas}{c\gas^2} \Delta h \right)^2}
                {\mCpl T}
        + \left(
                \tfrac{\Gamma\liq}{c\liq^2} - \tfrac{\Gamma\gas}{c\gas^2} - \tfrac{\Gamma\gas\Gamma\liq}{c\gas^2 c\liq^2} \Delta h 
          \right)^2
    }
  \end{equation}
  We may now write the product of the differences between the sound
  speeds as
  \begin{align}\label{eq:SC_prod_vTmu_Tmu}
    \left( c_{T\mu,+}^2 - c_{vT\mu}^2 \right) \left( c_{T\mu,-}^2 - c_{vT\mu}^2 \right) = -\Q_{vT\mu}^{T\mu} ,
  \end{align}
where
\begin{equation}
  \Q_{vT\mu}^{T\mu} = Y\gas Y\liq 
  \left[ \frac{  \tfrac{\Delta
      h^2 m\gas m\liq \left(
    c\gas^{2} - c\liq^{2} \right)}{\mCpg \mCpl T^2 c\gas^2 c\liq^2} + \tfrac{\Gamma\liq \Delta h}{\mCpg T
      c\liq^2} \left( m\liq + m\gas \left( 1 +
    \tfrac{\Gamma\liq}{c\liq^2} \Delta h \right) \right) +
    \tfrac{\Gamma\gas \Delta h}{\mCpl T c\gas^2} \left( m\gas + m\liq
    \left( 1 - \tfrac{\Gamma\gas}{c\gas^2} \Delta h \right) \right)
     }{
      \tfrac{\Delta h^2 m\gas m\liq}{\mCpg \mCpl
      T^2 c\gas^2 c\liq^2} + \tfrac{\tfrac{m\liq}{c\liq^2} + \tfrac{m\gas}{c\gas^2} \left( 1 +
    \tfrac{\Gamma\liq}{c\liq^2} \Delta h \right)^2}{\mCpg T} 
     +
    \tfrac{\tfrac{m\gas}{c\gas^2} +
    \tfrac{m\liq}{c\liq^2} \left( 1 - \tfrac{\Gamma\gas}{c\gas^2} \Delta
    h \right)^2 }{\mCpl T } + \left(\tfrac{\Gamma\liq}{c\liq^2} -
    \tfrac{\Gamma\gas}{c\gas^2} - \tfrac{\Gamma\gas\Gamma\liq}{c\gas^2
      c\liq^2} \Delta h \right)^2 } \right]^{2} .
\end{equation}
Due to the given conditions, it is clear that $\Q_{vT\mu}^{T\mu} \geq 0$, and thus exactly one of the factors on the left hand side of \cref{eq:SC_prod_vTmu_Tmu} is negative.
Hence, the sound speeds must be ordered as
 $ c_{T\mu,-} \leq c_{vT\mu} \leq c_{T\mu,+} $,
i.e.~the subcharacteristic condition is satisfied.
\end{proof}

\begin{prop}\label{prop:SC_pTmu_Tmu}
  The $pT\mu$-model satisfies the weak subcharacteristic condition with
  respect to the $T\mu$-model, subject to the physically fundamental
  conditions
  $ \rho_k,
    \Cpk,
    T > 0$,
  in the equilibrium state defined by \cref{eq:v-equil}.
\end{prop}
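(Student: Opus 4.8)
The plan is to follow the same product-of-differences strategy used in \cref{prop:SC_vTmu_Tmu} and \cref{prop:SC_Tmu_mu}. Since the $pT\mu$-model has a single sound speed $c_{pT\mu}$ (given by \cref{eq:c_pTmu}), so that its eigenvalues in the state \cref{eq:v-equil} are $\{v-c_{pT\mu},v,v,v+c_{pT\mu}\}$, while the $T\mu$-model carries the two sound speeds $c_{T\mu,\pm}$ of \cref{eq:soundvel_Tmu} with the eigenvalue structure \cref{eq:eigenvals_Tmu}, and since all remaining eigenvalues of both models coincide with the common velocity $v$, the interlacing of \cref{dfn:subchar_weak} reduces to the single requirement $c_{T\mu,-} \leq c_{pT\mu} \leq c_{T\mu,+}$. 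Reality and nonnegativity of $c_{T\mu,\pm}$ follow from \cref{prop:SC_Tmu_T}, and $c_{pT\mu}^2 \geq 0$ is immediate from \cref{eq:c_pTmu} as a ratio of manifestly positive quantities; it therefore suffices to show that $c_{pT\mu}^2$ lies between the two roots $c_{T\mu,\pm}^2$.

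First I would read off, by Vieta's formulas applied to \cref{eq:soundvel_Tmu}, the sum $c_{T\mu,+}^2 + c_{T\mu,-}^2$ (the numerator preceding the $\pm$, divided by the common denominator $D$, i.e.~the bracketed factor raised to the power $-1$) and the product $c_{T\mu,+}^2 c_{T\mu,-}^2$ (formed from that numerator and the discriminant under the root). Writing the squared sound speeds as the roots of $D x^2 - N_0 x + M_0 = 0$, the product of differences factors as
\begin{align}
  \left( c_{T\mu,+}^2 - c_{pT\mu}^2 \right)\left( c_{T\mu,-}^2 - c_{pT\mu}^2 \right) = \frac{1}{D}\left[ D\, c_{pT\mu}^4 - N_0\, c_{pT\mu}^2 + M_0 \right].
\end{align}
Because $D > 0$ under the assumptions $\rho_k,\Cpk,T>0$, the sign of the left-hand side equals that of the quadratic evaluated at $x=c_{pT\mu}^2$. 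The goal is then to substitute the explicit expression \cref{eq:c_pTmu} and to show that this quadratic equals $-\Q_{pT\mu}^{T\mu}$ with $\Q_{pT\mu}^{T\mu}\geq 0$, which certifies that $c_{pT\mu}^2$ sits between the roots.

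By analogy with $\Q_{vT\mu}^{T\mu}$ in \cref{prop:SC_vTmu_Tmu}, I expect $\Q_{pT\mu}^{T\mu}$ to emerge as a positive prefactor times a bracketed expression squared, so that its nonnegativity is immediate once the square has been isolated. Granting $\Q_{pT\mu}^{T\mu}\geq 0$, the product of the two factors is nonpositive, hence exactly one factor is negative; combined with the ordering $c_{T\mu,-}\leq c_{T\mu,+}$ from \cref{prop:SC_Tmu_T}, this forces $c_{T\mu,-}\leq c_{pT\mu}\leq c_{T\mu,+}$, and the weak subcharacteristic condition follows.

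The main obstacle is purely computational. The two sound-speed formulas carry different and rather unwieldy denominators: \cref{eq:c_pTmu} is organized around the volume-specific quantities $\alpha_k/\rho_k$ and the enthalpy difference $\Delta h$, whereas \cref{eq:soundvel_Tmu} is organized around $m_k$, $c_k^2$ and the Gr\"uneisen coefficients. The bulk of the work thus lies in clearing these to a common denominator and recognizing the resulting numerator as a perfect square. I would also look for a shortcut via the established ratio $c_{pT\mu}/c_{vpT\mu}$ of \cref{rmk:relation} together with the $v$-branch inequality $c_{vpT\mu}\leq c_{vT\mu}$ and \cref{prop:SC_vTmu_Tmu}; however, since $\rho(\alpha\gas/\rho\gas + \alpha\liq/\rho\liq)\geq 1$ forces $c_{pT\mu}\geq c_{vpT\mu}$, this inequality points the wrong way to bound $c_{pT\mu}$ from above by $c_{T\mu,+}$, so the direct sum-of-squares computation appears unavoidable.
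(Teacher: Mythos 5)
Your proposal follows exactly the paper's route: the paper proves this proposition by writing $\left( c_{T\mu,+}^2 - c_{pT\mu}^2 \right)\left( c_{T\mu,-}^2 - c_{pT\mu}^2 \right) = -\Q_{pT\mu}^{T\mu}$ with $\Q_{pT\mu}^{T\mu}$ an explicit positive prefactor times a squared bracket (hence nonnegative under $\rho_k,\Cpk,T>0$), and concludes $c_{T\mu,-}\leq c_{pT\mu}\leq c_{T\mu,+}$ from exactly one factor being negative. Your setup via Vieta's formulas, your anticipated sum-of-squares structure for $\Q_{pT\mu}^{T\mu}$, and your (correct) dismissal of the shortcut through the ratio $c_{pT\mu}/c_{vpT\mu}$ all match; what remains is only the algebra that the paper itself records as the closed-form expression for $\Q_{pT\mu}^{T\mu}$.
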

\begin{proof}
  The sound speed in the $pT\mu$-model is 
  stated in \cref{eq:c_pTmu}.
  We note that we may write
  \begin{align}\label{eq:prod_pTmu_Tmu}
    \left( c_{T\mu,+}^2 - c_{pT\mu}^2 \right) \left( c_{T\mu,-}^2 -
    c_{pT\mu}^2 \right) = -\Q_{pT\mu}^{T\mu},
  \end{align}
  where
  \begin{multline}
    \Q_{pT\mu}^{T\mu} 
    = \tfrac{\mCpg \mCpl T^2}{\Delta h^2 \rho\gas^2 \rho\liq^2} \\ \times \Biggl[ \tfrac{\Delta h^2 m\gas m\liq}{\mCpg \mCpl T^2 c\gas^2 c\liq^2}
      ( c\gas^2 - c\liq^2 )
      + \tfrac{\rho\gas \rho\liq}{\mCpl T} \left(\tfrac{1}{\rho\liq} - \tfrac{1 - \frac{\Gamma\gas}{c\gas^2} \Delta h }{\rho\gas} \right)   \left( 1 - \alpha\liq \tfrac{\Gamma\gas}{c\gas^2} \Delta h \right)
      - \tfrac{\rho\gas \rho\liq}{\mCpg T} \left( \tfrac{1}{\rho\gas} - \tfrac{ 1 +
      \tfrac{\Gamma\liq}{c\liq^2} \Delta h }{\rho\liq} \right) \left(
      1 + \alpha\gas \tfrac{\Gamma\liq}{c\liq^2} \Delta h \right)
      \Biggr]^2 \\ \times
     \Biggl[ \tfrac{\Delta h^2 m\gas m\liq}{\mCpg
        \mCpl T^2 c\gas^2 c\liq^2} + \tfrac{
      \tfrac{m\liq}{c\liq^2} + \tfrac{m\gas}{c\gas^2} \left( 1 +
      \tfrac{\Gamma\liq}{c\liq^2} \Delta h \right)^2}{\mCpg T} +
      \tfrac{\frac{m\gas}{c\gas^2} +
      \tfrac{m\liq}{c\liq^2} \left( 1 - \tfrac{\Gamma\gas}{c\gas^2}
      \Delta h \right)^2}{\mCpl T }  + \left(\tfrac{\Gamma\liq}{c\liq^2} -
      \tfrac{\Gamma\gas}{c\gas^2} - \tfrac{\Gamma\gas\Gamma\liq}{c\gas^2
        c\liq^2} \Delta h \right)^2 \Biggr]^{-1} \\ \times \Biggl[
        \tfrac{\alpha\gas}{\rho\gas c\gas^2} +
        \tfrac{\alpha\liq}{\rho\liq c\liq^2} + \tfrac{\mCpg T}{\Delta
          h^2} \Biggl( \tfrac{1}{\rho\liq} - \tfrac{1 -
          \tfrac{\Gamma\gas}{c\gas^2} \Delta h}{\rho\gas} \Biggr)^2 +
        \tfrac{\mCpl T}{\Delta h^2} \Biggl( \tfrac{1}{\rho\gas} - \tfrac{
          1 + \tfrac{\Gamma\liq}{c\liq^2} \Delta h }{\rho\liq}
        \Biggr)^2 \Biggr]^{-2}.
  \end{multline}
  Due to the given conditions, it is clear that
  $\Q_{pT\mu}^{T\mu} \geq 0$. Hence, exactly one of the factors on the
  left hand side of \cref{eq:prod_pTmu_Tmu} must be negative, and
  therefore
 $   c_{T\mu,-} \leq c_{pT\mu} \leq c_{T\mu,+} $,
  i.e.~the weak subcharacteristic condition is satisfied.
\end{proof}

\section{Comparison and discussion of models}\label{sec:comparison}
In this section, we compare the models in the hierarchy.
We first show plots for relevant cases, and then briefly discuss physical and numerical aspects of the different models.

  

\subsection{Comparison of sound speeds}
We now present plots of the sound speeds in all the models in the hierarchy, for different physically relevant conditions, in order to illustrate the effect of different equilibrium assumptions.
Plots with the same physical parameters were presented by Lund \cite{lund_hierarchy_2012} for the $v$-branch of the hierarchy, building on plots by \citet{flatten_relaxation_2011}.
\citet{martinez_ferrer_effect_2012} and \citet{morin_two-fluid_2013} presented similar plots for the $p$-branch of the hierarchy.
In the following, results for the complete hierarchy are shown.

The main panel of \cref{fig:sound_vel_water} shows the fluid-mechanical sound speeds of all the models in the hierarchy for a water-steam mixture at atmospheric conditions.
The thermophysical parameters are shown in \cref{tab:water}.
Apart from the fact that the subcharacteristic condition is always respected, we notice that there are mainly two equilibrium assumptions that affect the propagation speeds, namely those of $p$- and $v$-relaxation, respectively.
First, imposing instantaneous equilibrium in pressure attracts the sound velocities to the lower bound of the parent models, which can be seen in the insets of \cref{fig:sound_vel_water}.
Further imposing velocity equilibrium, the sound velocity is reduced by a factor (see \cref{rmk:relation})
\begin{align}
  \sqrt{ \rho \left( \frac{\alpha\gas}{\rho\gas} + \frac{\alpha\liq}{\rho\liq} \right) } \simeq \sqrt{ \alpha\gas \alpha\liq \frac{\rho\liq}{\rho\gas}}.
\end{align}
The approximation made is valid when $\rho\gas \ll \rho\liq$, which is the case here.
Hence, these equilibrium assumptions seem to have severe impact on the wave velocities, in particular when the density difference between the phases is large.
\begin{figure}[htb]
  \centering
  \includegraphics[width=0.7\textwidth]{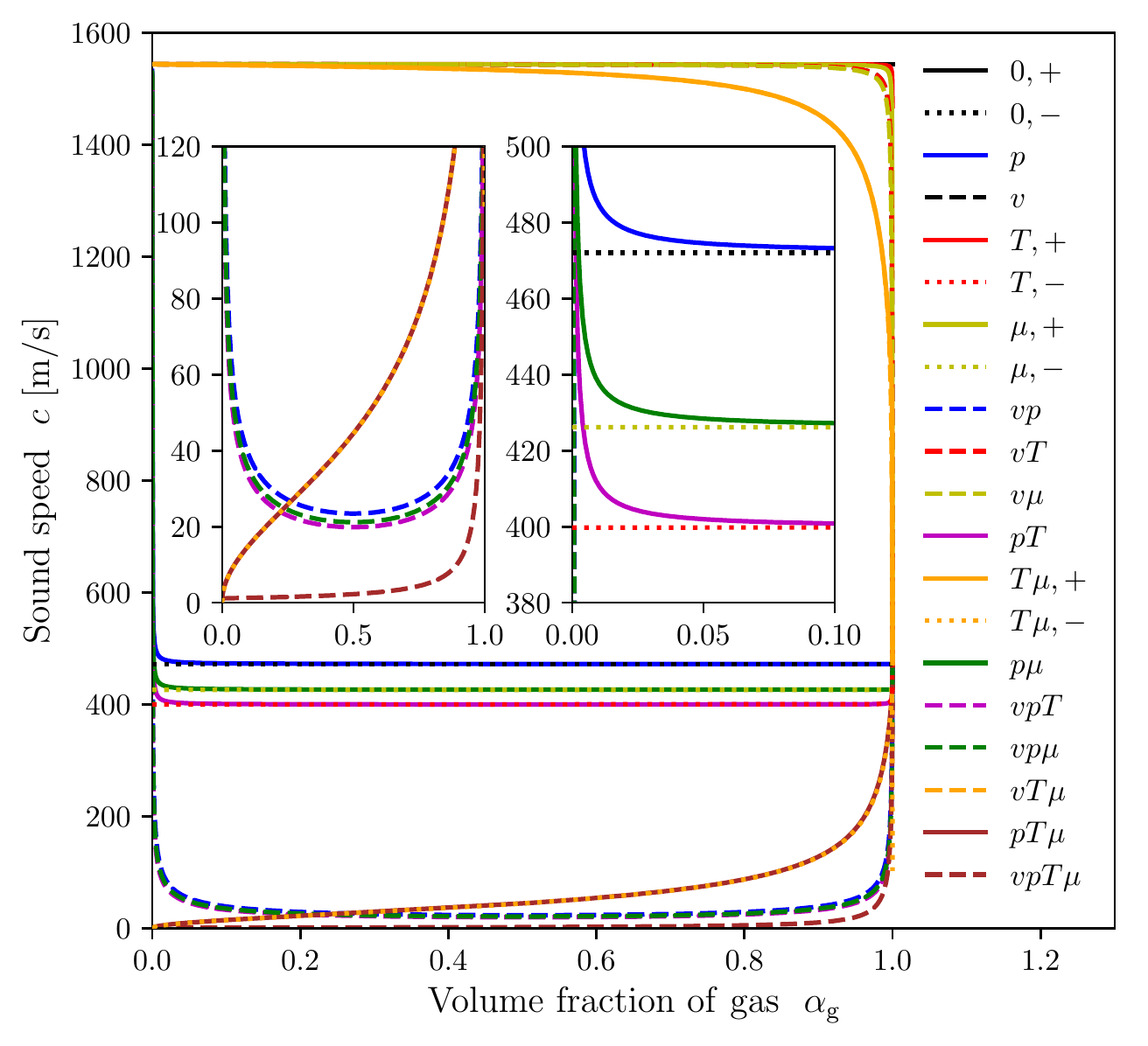}
  \caption{Sound velocities in a water-steam mixture at atmospheric conditions.
  The insets show close-ups of parts of the plots in the main panel.}
\label{fig:sound_vel_water}
\end{figure}
\begin{table}[htb]
\small
\caption{Parameters for a water-steam mixture at atmospheric pressure.}
\label{tab:water}
\centering
\begin{tabular}{lllcc}
  \toprule
  Quantity & Symbol & Unit & Gas & Liquid \\ \midrule
  Pressure & $p$ & \si{\mega\pascal}  & 0.1     & 0.1 \\
  Temperature & $T$ & \si{\kelvin}    & 372.76  & 372.76\\
  Density & $\rho$ & \si{\kilogram\per\cubic\meter}  & 0.59031 & 958.64 \\
  Speed of sound & $c$ & \si{\meter\per\second}   & 472.05  & 1543.4 \\
  Heat capacity & $C_p$ & \si{\joule\per\kilogram\per\kelvin} & 2075.9  & 4216.1 \\
  Entropy & $s$ & \si{\square\meter\per\square\second\per\kelvin}
  & 7358.8 & 1302.6 \\
  Grüneisen coefficient & $\Gamma$ & (--) & 0.33699 & 0.4 \\
\bottomrule
\end{tabular}
\end{table}

\begin{figure}[tbh]
  \centering
  \includegraphics[width=0.60\textwidth]{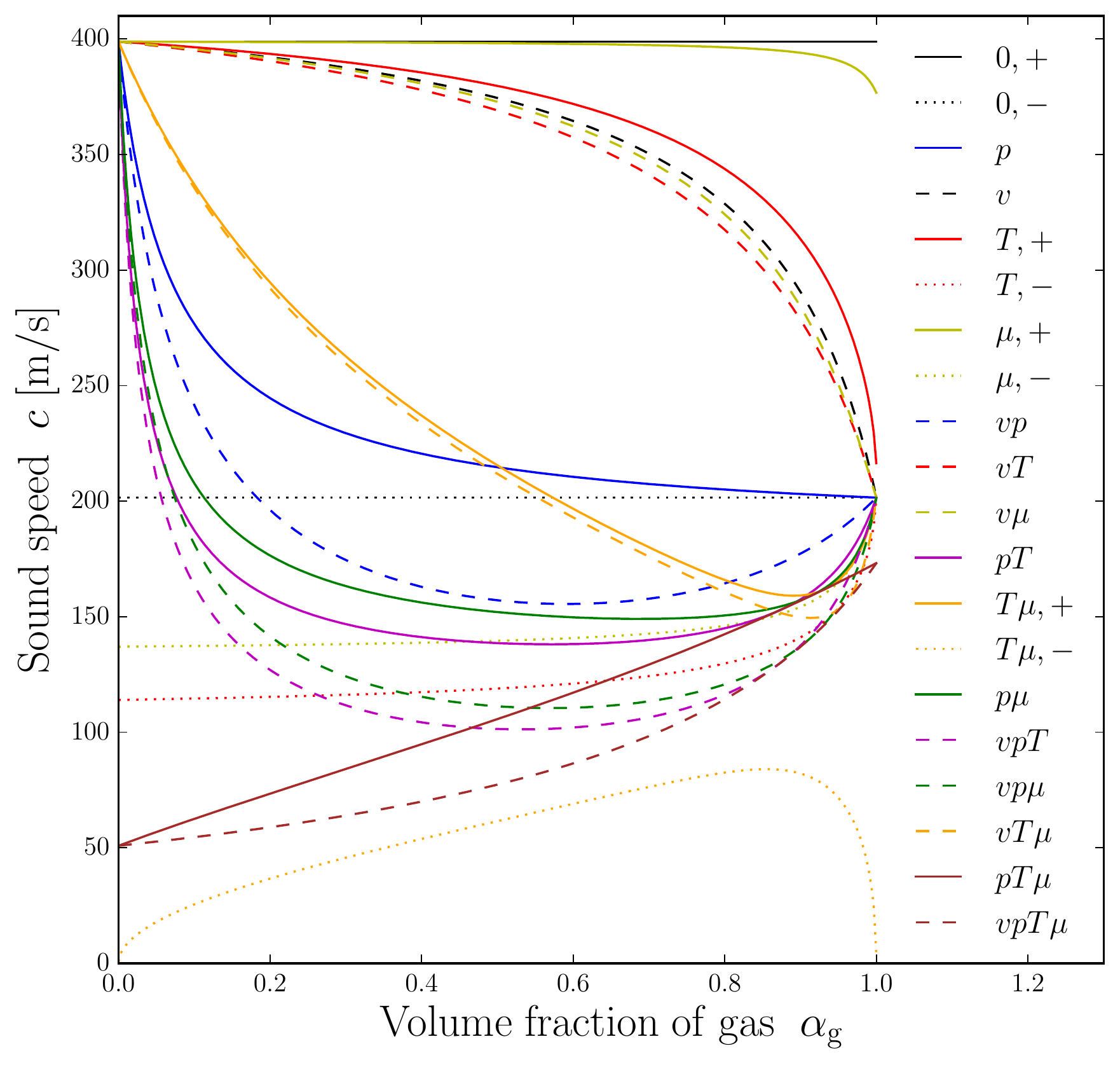}
  \caption{Sound speeds in a two-phase \coto{} mixture at 50 bar.}
\label{fig:sound_vel_CO2}
\end{figure}

In \cref{fig:sound_vel_CO2}, the sound speeds in the entire hierarchy are plotted for a \coto{} mixture at \SI{50}{\bar}, whose thermophysical properties are given in \cref{tab:coto}.
By close inspection, it can be seen that the subcharacteristic condition is everywhere respected.
In particular, the sound speeds of an equilibrium system are always interlaced between the sound speeds in the parent models.
Again, the pressure relaxation has the most prominent effect on the sound speed, but also combining thermal and chemical equilibrium seems to have a strong effect.

\subsection{Discontinuous sound speeds}
All the models considered in the present paper are only strictly valid when the gas fraction $\alpha\gas \in (0, 1)$.
One would expect the sound speeds of the models to be continuous at the phase boundary, i.e.~at the transition between single and two-phase flow, in the sense that the two-phase speed of sound should reduce to the single-phase speed of sound in the limit where one phase disappears:
\begin{equation}
  \lim_{\alpha_k \to 1} c_{X} \to c_k
\end{equation}
for a given model $X$ in the hierarchy.
However, some of the models have wave speeds that are discontinuous at the phase boundary.
In particular, this concerns the $pT\mu$- and $vpT\mu$-models, whose sound speeds are discontinuous in both limits $\alpha_k \to 1$, which can be seen directly by evaluating the analytic expressions in these limits (see Refs.~\cite{lund_hierarchy_2012,morin_two-fluid_2013}).


The $T$- and $\mu$-models have ``half-continuous'' sound speeds, in the sense that for the ``$\pm$'' sound waves, only one of them is continuous in the limit $\alpha_k \to 1$.
For the $\mu$-model, taking $\alpha\liq \to 1$ in \eqref{eq:c_mupm} yields
\begin{equation}
  \lim_{\alpha\liq\to 1} c_{\mu,\pm}^2 = \frac{ \tfrac{T\gas
        s\gas^2}{\Cpg} ( c\gas^2 + c\liq^2 ) + \tfrac{\xi\gas^2 c\liq^2}{c\gas^2} 
    \pm \left| \tfrac{T\gas s\gas^2}{\Cpg} ( c\gas^2 - c\liq^2 ) - \tfrac{\xi\gas^2 c\liq^2}{c\gas^2} \right| }{ 2
  \left[ \tfrac{T\gas s\gas^2}{\Cpg} +
    \tfrac{\xi\gas^2}{ c\gas^2} \right] },
\end{equation}
which is equivalent to
\begin{equation}
  \lim_{\alpha\liq\to 1} c_{\mu,+} = \max \left(  \frac{ c\gas }{ \sqrt{ 1 + \tfrac{\xi\gas^2}{ c\gas^2} \tfrac{\Cpg}{T\gas  s\gas^2} } } , c\liq \right), \quad \textrm{and} \quad
  \lim_{\alpha\liq\to 1} c_{\mu,-} = \min \left(  \frac{ c\gas }{ \sqrt{ 1 + \tfrac{\xi\gas^2}{ c\gas^2} \tfrac{\Cpg}{T\gas  s\gas^2} } } , c\liq \right).
\end{equation}
Clearly, only one of these approach the appropriate phasic value $c\liq^2$.
The result limits for $\alpha\gas\to 1$ are found by phase symmetry.
Similarly, we find for the $T$-model, from \cref{eq:c_T}, that
\begin{equation}
  \lim_{\alpha\liq\to 1} c_{T,+} = \min \left( \frac{ c\gas }{ \sqrt{ 1 + \tfrac{\Gamma\gas^2 \Cpg T}{c\gas^2} }}, c\liq \right), \quad \textrm{and} \quad
  \lim_{\alpha\liq\to 1} c_{T,-} = \min \left( \frac{ c\gas }{ \sqrt{ 1 + \tfrac{\Gamma\gas^2 \Cpg T}{c\gas^2} }}, c\liq \right),
\end{equation}
to which the same observation applies.

The remaining sound speeds are continuous at the phase boundary; for the $T\mu$-model in the sense that $\lim_{\alpha_k \to 1} c_{T\mu,+} = c_k$ and $\lim_{\alpha_k \to 1} c_{T\mu,-} = 0$, which can be deduced from the analytic expression \cref{eq:soundvel_Tmu}.

\begin{table}[bth]
\caption{Parameters for a two-phase \coto~mixture at 50 bar.}
\label{tab:coto}
\centering
\begin{tabular}{lllcc}
  \toprule
  Quantity & Symbol & Unit & Gas & Liquid \\ \midrule
  Pressure & $p$ & \si{\mega\pascal}  & 5.0 & 5.0 \\
  Temperature & $T$ & \si{\kelvin}    & 287.43 & 287.43\\
  Density & $\rho$ & \si{\kilogram\per\cubic\meter}  & 156.71 & 827.21 \\
  Speed of sound & $c$ & \si{\meter\per\second}   & 201.54 & 398.89 \\
  Heat capacity & $C_p$ & \si{\joule\per\kilogram\per\kelvin} & 3138.0 & 3356.9 \\
  Entropy & $s$ & \si{\square\meter\per\square\second\per\kelvin} 
  & 1753.9 & 1128.8 \\
  Grüneisen coefficient & $\Gamma$ & (--) & 0.30949 & 0.63175 \\
\bottomrule
\end{tabular}
\end{table}

\subsection{Physical considerations}
It is commonly argued that pressure relaxation is a much faster process than the other relaxation processes \cite{saurel_modelling_2008,zein2010modeling}.
Temperature relaxation, or heat flow, is associated with diffusion, which is an intrinsically slow process.
Chemical potential relaxation, i.e.~mass transfer, is also slow compared to pressure relaxation.
Zein et al.\ \cite{zein2010modeling} provide interesting discussions on the topic and argue that temperature relaxation is faster than chemical relaxation.
Generally, the magnitudes of the different relaxation times may be strongly problem-dependent.
Such considerations may have implications, e.g., for the mass flow through a nozzle, which has been shown to be linked to the subcharacteristic condition \cite{linga14}.

Apart from this, effects not captured by the coarse-grained flow models may come into play, and which model is more accurate may depend heavily on the flow regime under consideration.
The effects that arise from having independent phasic pressures may be of importance for the wave dynamics of the system, and thus models with different pressures may be sensible, even though the associated relaxation time is commonly thought to be comparatively short.
With regards to evaluating the physical relevance of the models presented herein, experimental data on sound speeds in two-phase flow can be found for various systems \cite{xu2000acoustic,kieffer1977sound,herbert2006gasdynamic,tosse2015experimental}.

\subsection{Numerical considerations}
A well-known problem with $p$-relaxed (one-pressure) two-fluid models is that they develop complex eigenvalues when $v\gas \neq v\liq$.
This is commonly resolved e.g.~by adding a regularising pressure which enforces hyperbolicity \cite{bestion_physical_1990,coquel_numerical_1997,evje_hybrid_2003,toumi_approximate_1996,ndjinga2007influence}.
It is worth noting that the two-fluid models with independent phasic pressures, i.e.\ the $T$-, $\mu$- and $T\mu$-models, are locally hyperbolic even for small perturbations away from velocity equilibrium, due to the following argument:
\emph{An eigenvalue of a matrix with real coefficients may only be complex if its complex conjugate is also an eigenvalue.
Since the eigenvalues of the individual phasic pressure models are real and distinct when $\veps=v\gas-v\liq=0$, they must remain so for sufficiently small $\veps$, as the eigenvalues may only become complex in a continuous way.}
In order to determine how large $\veps$ may be before hyperbolicity is lost, we must find the higher-order corrections in  $\veps$ to the eigenvalues, which is beyond the scope of this work.

\section{Conclusions and further work}\label{sec:conclusion}
In this paper, we have presented and completed a hierarchy of relaxation models for two-phase flow, which arises when we impose instantaneous equilibrium in different combinations of velocity, pressure, temperature and chemical potential.
The starting point of the analysis has been the classic seven-equation Baer--Nunziato model \cite{baer_two-phase_1986} equipped with relaxation source terms. 
We have in the present work provided the $T$-, $\mu$-, $p\mu$-, and $T\mu$-models, which represent original contributions to the hierarchy.
Explicit expressions for the sound speeds of these models have been derived.
Using the new expressions and results from the literature, we have shown analytically that the subcharacteristic condition is always satisfied in the hierarchy, given velocity equilibrium between the phases.
To this end, we have contributed with 15 new subcharacteristic conditions, stated in propositions \ref{prop:SC_v_0}--\ref{prop:SC_pTmu_Tmu}.
Out of these, five have been shown in a strong sense, and nine hold in a weak sense, i.e.~given equilibrium in velocity.


In further work, the hierarchy could be extended to multi-component or multi-phase flow.
Moreover, the different models could be implemented and studied numerically for relevant cases (cf.\ \cite{saurel2018diffuse}).
Upon comparison with experimental data, one may then unravel under which conditions the different models are admissible.

\section*{Acknowledgements}
The author is greatly indebted to T.~Flåtten and H.~Lund for invaluable advice and useful comments on the manuscript.
Further, U.~J.~F.~Aarsnes, S.~Bore, A.~Bolet, A.~Aasen, A.~Morin, S.~T.~Munkejord, and M.~Hammer are thanked for helpful discussions.

This work has been carried out with support from the BIGCCS Centre, performed under the Norwegian research program Centres for Environment-friendly Energy Research (FME), from the 3D Multifluid Flow project at SINTEF Energy Research, and from the European Union's Horizon 2020 research and innovation program through Marie Curie initial training networks under grant agreement 642976 (NanoHeal).
The following partners are acknownledged for their contributions: ConocoPhillips, Gassco, Shell, Statoil, TOTAL, GDF SUEZ and the Research Council of Norway (193816/S60).

\appendix
\section{Coefficients in the $p\mu$-model}\label{sec:coeff_pmu}
The coefficients in the $p\mu$-model are given by
\begin{gather*}
  \mcl P_p^{p\mu} = \tfrac{\alpha\gas\alpha\liq \kappa_\mu}{\rho\gas c\gas^2 \rho\liq c\liq^2 \kappa_{p\mu}}, \quad
  \mcl G_p^{p\mu} = \tfrac{1}{\kappa_{p\mu}} \left[  \tfrac{\alpha\gas}{\rho\gas c\gas^2 }  + \tfrac{\alpha\liq}{\rho\liq c\liq^2 } + \tfrac{\alpha\liq \Gamma\gas T\gas + \alpha\gas\Gamma\liq T\liq}{\rho\gas c\gas^2 \rho\liq c\liq^2}  s^*\right],
  \\
  \mcl P^{p\mu}_{\mu} = \tfrac{\alpha\gas\alpha\liq}{\rho\gas c\gas^2 \rho\liq c\liq^2 \kappa_{p\mu}} \left( \tfrac{\xi\gas}{\alpha\gas}+ \tfrac{\xi\liq}{\alpha\liq}\right) ,\quad
  \mcl G^{p\mu}_\mu = - \tfrac{1}{\kappa_{p\mu}} \left( \tfrac{\alpha\gas}{\rho\gas c\gas^2 }  + \tfrac{\alpha\liq}{\rho\liq c\liq^2 } \right) , \quad
  \mcl V^{p\mu}_{\mu,\Gas} = - \tfrac{1}{\kappa_{p\mu}} \left( \tfrac{\xi\gas}{\rho\gas c\gas^2} - \tfrac{\xi\liq}{\rho\liq c\liq^2} \right),
\end{gather*}
\begin{multline*} \mcl I_v^{p\mu} = \tfrac{\alpha\gas\alpha\liq }{\rho\gas c\gas^2 \rho\liq c\liq^2 \kappa_{p\mu}} \Bigl\{ \Bigl( \tfrac{s\gas}{\mCpg} \left[ \tfrac{\xi\gas}{\alpha\gas}+\tfrac{\xi\liq}{\alpha\liq} \right] + \tfrac{\Gamma\gas}{\alpha\gas} \left[ \tfrac{T\gas s\gas^2}{\mCpg} + \tfrac{T\liq s\liq^2}{\mCpl}\right]  - \tfrac{\Gamma\gas \xi\liq}{\alpha\gas \alpha\liq}\left[ \tfrac{\xi\gas}{\rho\gas c\gas^2} - \tfrac{\xi\liq}{\rho\liq c\liq^2} \right] \\-
\Bigl[ \tfrac{\Gamma\gas \Gamma\liq}{\alpha\gas\alpha\liq} \left( \tfrac{\xi\gas}{\rho\gas c\gas^2} - \tfrac{\xi\liq}{\rho\liq c\liq^2} \right) + \tfrac{\Gamma\gas s\liq}{\alpha\gas \mCpl} - \tfrac{\Gamma\liq s\gas}{\alpha\liq \mCpg} \Bigr] T\liq s^* \Bigr) \Dpv\gas \\
+ \Bigl( \tfrac{s\liq}{\mCpl} \left[
  \tfrac{\xi\gas}{\alpha\gas}+\tfrac{\xi\liq}{\alpha\liq} \right] +
\tfrac{\Gamma\liq}{\alpha\liq} \left[ \tfrac{T\gas s\gas^2}{\mCpg} +
  \tfrac{T\liq s\liq^2}{\mCpl}\right] + \tfrac{\Gamma\liq
  \xi\gas}{\alpha\gas \alpha\liq}\left[ \tfrac{\xi\gas}{\rho\gas
    c\gas^2} - \tfrac{\xi\liq}{\rho\liq c\liq^2} \right] \\
 + \left[ \tfrac{\Gamma\gas \Gamma\liq}{\alpha\gas\alpha\liq} \left( \tfrac{\xi\gas}{\rho\gas c\gas^2} - \tfrac{\xi\liq}{\rho\liq c\liq^2} \right) + \tfrac{\Gamma\gas s\liq}{\alpha\gas \mCpl} - \tfrac{\Gamma\liq s\gas}{\alpha\liq \mCpg} \right] T\gas s^* \Bigr) \Dpv\liq \Bigr\},\end{multline*}
\begin{multline*}
  \mcl I_T^{p\mu} = \tfrac{\alpha\gas\alpha\liq }{\rho\gas c\gas^2 \rho\liq c\liq^2 \kappa_{p\mu}} \Bigl\{ \left( \tfrac{s\gas}{\mCpg} + \tfrac{s\liq}{\mCpl} \right)\left[ \tfrac{\xi\gas}{\alpha\gas}+\tfrac{\xi\liq}{\alpha\liq} \right] + \left( \tfrac{\Gamma\gas}{\alpha\gas} + \tfrac{\Gamma\liq}{\alpha\liq} \right) \left[ \tfrac{T\gas s\gas^2}{\mCpg} + \tfrac{T\liq s\liq^2}{\mCpl}\right] \\ - \tfrac{\Gamma\gas \xi\liq - \Gamma\liq \xi\gas}{\alpha\gas \alpha\liq}\left[ \tfrac{\xi\gas}{\rho\gas c\gas^2} - \tfrac{\xi\liq}{\rho\liq c\liq^2} \right] - \left[ \tfrac{\Gamma\gas \Gamma\liq}{\alpha\gas\alpha\liq} \left( \tfrac{\xi\gas}{\rho\gas c\gas^2} - \tfrac{\xi\liq}{\rho\liq c\liq^2} \right) + \tfrac{\Gamma\gas s\liq}{\alpha\gas \mCpl} - \tfrac{\Gamma\liq s\gas}{\alpha\liq \mCpg} \right] \left( T\liq - T\gas \right) s^* \Bigr\},\end{multline*}
\begin{equation*}
  \mcl K_v^{p\mu} = \tfrac{1}{\kappa_{p\mu}} \Bigl\{
\left[ \left( \tfrac{\alpha\gas}{\rho\gas c\gas^2 }  + \tfrac{\alpha\liq}{\rho\liq c\liq^2 } \right) \tfrac{s\gas}{\mCpg} - \tfrac{\Gamma\gas}{\rho\gas c\gas^2} \left( \tfrac{\xi\gas}{\rho\gas c\gas^2} - \tfrac{\xi\liq}{\rho\liq c\liq^2} \right) \right] \Dpv\gas + \left[ \left( \tfrac{\alpha\gas}{\rho\gas c\gas^2 }  + \tfrac{\alpha\liq}{\rho\liq c\liq^2 } \right) \tfrac{s\liq}{\mCpl} + \tfrac{\Gamma\liq}{\rho\liq c\liq^2} \left( \tfrac{\xi\gas}{\rho\gas c\gas^2} - \tfrac{\xi\liq}{\rho\liq c\liq^2} \right) \right] \Dpv\liq
\Bigr\},
\end{equation*}
\begin{gather*}
  \mcl K_T^{p\mu} = \tfrac{1}{\kappa_{p\mu}} \left\{
\left( \tfrac{\alpha\gas}{\rho\gas c\gas^2 }  + \tfrac{\alpha\liq}{\rho\liq c\liq^2 } \right) \left( \tfrac{s\gas}{\mCpg} + \tfrac{s\liq}{\mCpl} \right) - \left( \tfrac{\Gamma\gas}{\rho\gas c\gas^2} - \tfrac{\Gamma\liq}{\rho\liq c\liq^2} \right) \left( \tfrac{\xi\gas}{\rho\gas c\gas^2} - \tfrac{\xi\liq}{\rho\liq c\liq^2} \right)
\right\},
\end{gather*}
\begin{equation*}
  \Phi\gas = \tfrac{1}{\kappa_{p\mu}} \Bigl\{ \tfrac{\alpha\liq}{\rho\liq c\liq^2 } \left( \tfrac{T\gas s\gas^2}{\mCpg} + \tfrac{T\liq s\liq^2}{\mCpl}\right) -\tfrac{\xi\liq}{\rho\liq c\liq^2}\left( \tfrac{\xi\gas}{\rho\gas c\gas^2} - \tfrac{\xi\liq}{\rho\liq c\liq^2}
  \right) - \left[ \tfrac{\alpha\liq}{\rho\liq c\liq^2 } \left( \tfrac{T\gas s\gas}{\mCpg} + \tfrac{T\liq s\liq}{\mCpl}\right)
  + \left( \tfrac{\xi\gas}{\rho\gas c\gas^2} - \tfrac{\xi\liq}{\rho\liq c\liq^2}
  \right) \tfrac{\Gamma\liq T\liq}{\rho\liq c\liq^2} \right] s^* \Bigr\}.
  \end{equation*}
Herein, we have defined the shorthand denominator
\begin{equation*}
  \kappa_{p\mu} = \left( \tfrac{\alpha\gas}{\rho\gas c\gas^2 }  + \tfrac{\alpha\liq}{\rho\liq c\liq^2 } \right) \left( \tfrac{T\gas s\gas^2}{\mCpg} + \tfrac{T\liq s\liq^2}{\mCpl}\right)
  + \left( \tfrac{\xi\gas}{\rho\gas c\gas^2} - \tfrac{\xi\liq}{\rho\liq c\liq^2}
  \right)^2 - \left[ \left( \tfrac{\alpha\gas}{\rho\gas c\gas^2 }  + \tfrac{\alpha\liq}{\rho\liq c\liq^2 } \right)  \left( \tfrac{T\gas s\gas}{\mCpg} + \tfrac{T\liq s\liq}{\mCpl}\right)
  - \left( \tfrac{\xi\gas}{\rho\gas c\gas^2} - \tfrac{\xi\liq}{\rho\liq c\liq^2}
  \right) \left( \tfrac{\Gamma\gas T\gas}{\rho\gas c\gas^2} - \tfrac{\Gamma\liq T\liq}{\rho\liq c\liq^2}
  \right) \right] s^*
\label{eq:kappa_pmu}
\end{equation*}
and an entropy contribution due to velocity differences
$  s^* = (v\gas-v\liq)^2 / [2(\sqrt{T\gas}+\sqrt{T\liq})^2 ] $.
The coefficients related to the quasi-linear form are given by
\begin{gather*}
  P^{p\mu}_{\bar v} = \mcl P_p^{p\mu} - \left(\tfrac{\beta\liq}{\rho\gas}+\tfrac{\beta\gas}{\rho\liq}\right) \mcl P^{p\mu}_{\mu} , \quad
  G^{p\mu}_{\bar v} = \mcl G_p^{p\mu} - \left( \tfrac{\beta\liq}{\rho\gas} + \tfrac{\beta\gas}{\rho\liq} \right) \mcl G^{p\mu}_\mu , \quad
  V^{p\mu}_{\bar v,\Gas} =  \left( \tfrac{\beta\liq}{\rho\gas} + \tfrac{\beta\gas}{\rho\liq}\right)  \mcl V^{p\mu}_{\mu,\Gas} ,\\
  \Psi^p\gas = \tfrac{1}{\kappa_{p\mu}} \Bigl[  \tfrac{\alpha\gas}{\rho\gas c\gas^2 }  \left( \tfrac{T\gas s\gas (s\gas-s^*)}{\mCpg} + \tfrac{T\liq s\liq (s\liq-s^*)}{\mCpl}\right)
  + \tfrac{\xi\gas}{\rho\gas c\gas^2} \left( \tfrac{\xi\gas+\Gamma\gas T\gas s^*}{\rho\gas c\gas^2} - \tfrac{\xi\liq+\Gamma\liq T\liq s^*}{\rho\liq c\liq^2}
  \right) \Bigr] ,\\
  G^{p\mu}_p = \tfrac{1}{\kappa_{p\mu}} \left[ \tfrac{\xi\gas + \Gamma\gas T\gas s^*}{\rho\gas c\gas^2} - \tfrac{\xi\liq + \Gamma\liq T\liq s^*}{\rho\liq c\liq^2} \right] , \quad
  V^{p\mu}_{p} = \tfrac{1}{\kappa_{p\mu}} \left[ \tfrac{T\gas
      s\gas^2}{\mCpg} + \tfrac{T\liq s\liq^2}{\mCpl} - \left(
      \tfrac{T\gas s\gas}{\mCpg} + \tfrac{T\liq s\liq}{\mCpl} \right)
    s^* \right], \\
  P^{p\mu}_\mu = \tfrac{\xi\gas \xi\liq}{\rho\gas c\gas^2 \rho\liq c\liq^2 \kappa_{p\mu}} \Bigl\{ \tfrac{\xi\gas}{\rho\gas c\gas^2} - \tfrac{\xi\liq}{\rho\liq c\liq^2} + \tfrac{\alpha\gas T\gas s\gas^2}{\xi\gas \mCpg} - \tfrac{\alpha\liq T\liq s\liq^2}{\xi\liq \mCpl} - \left[ \tfrac{\Gamma\liq T\liq}{\rho\liq c\liq^2} - \tfrac{\alpha\liq T\liq s\liq}{\xi\liq \mCpl} - \tfrac{\Gamma\gas T\gas}{\rho\gas c\gas^2} + \tfrac{\alpha\gas T\gas s\gas}{\xi\gas \mCpg} \right] s^* \Bigr\}, 
\end{gather*}

\begin{gather*}
  \Psi^\mu\gas = \tfrac{1}{\kappa_{p\mu}} \left[ \left( \tfrac{\alpha\gas}{\rho\gas c\gas^2 }  + \tfrac{\alpha\liq}{\rho\liq c\liq^2 } \right) \tfrac{T\gas s\gas (s\gas-s^*)}{\mCpg} - \tfrac{\xi\liq}{\rho\liq c\liq^2} \left( \tfrac{\xi\gas + \Gamma\gas T\gas s^*}{\rho\gas c\gas^2} - \tfrac{\xi\liq + \Gamma\liq T\liq s^*}{\rho\liq c\liq^2}
  \right) \right],\\
  V^{p\mu}_\mu = \tfrac{1}{\kappa_{p\mu}} \left[\tfrac{\xi\liq}{\rho\liq c\liq^2} \tfrac{T\gas s\gas (s\gas-s^*)}{\mCpg} + \tfrac{\xi\gas}{\rho\gas c\gas^2} \tfrac{T\liq s\liq (s\liq-s^*)}{\mCpl}  \right].
\end{gather*}

\section{Coefficients in the $T\mu$-model}\label{sec:coeff_Tmu}
The coefficients in the $T\mu$-model are given by
\begin{gather*}
  \mcl A^{T\mu}_{\mu} = \tfrac{(p\gas-p\liq)(v\gas+v\liq)}{2} A^{T\mu}\gas,
 \quad
  \mcl G^{T\mu}_\mu = - \tfrac{m\gas m\liq}{c\gas^2 c\liq^2 T \kappa_{T\mu}} \left[ \tfrac{\zeta\gas}{\mCpg} + \tfrac{\zeta\liq}{\mCpl} \right],
  \\
  \mcl T^{T\mu}_\mu = \tfrac{m\gas m\liq}{c\gas^2 c\liq^2 T \kappa_{T\mu}} \left[ \tfrac{\Gamma\gas}{m\gas} + \tfrac{\Gamma\liq}{m\liq} - \tfrac{s\gas \zeta\gas}{\mCpg} - \tfrac{s\liq \zeta\liq}{\mCpl} \right], \\
  A^{T\mu}\gas = \tfrac{m\gas m\liq}{c\gas^2 c\liq^2 T \kappa_{T\mu}} \left[ \tfrac{\Gamma\liq \zeta\gas}{m\liq \mCpg} - \tfrac{\Gamma\gas \zeta\liq}{m\gas \mCpl} + \tfrac{\zeta\gas \zeta\liq \Delta h}{\mCpg \mCpl T} \right],\\
  \mcl V^{T\mu}_{\mu,\Gas} = \tfrac{m\gas m\liq}{c\gas^2 c\liq^2 T \kappa_{T\mu}} \left[ -\tfrac{c\gas^2}{\alpha\gas}\left( \tfrac{1}{\mCpg}+\tfrac{1}{\mCpl} \right) + \tfrac{c\gas^2 c\liq^2}{\alpha\gas \alpha\liq} \tfrac{\Gamma\liq T}{\rho\liq c\liq^2}\left(\tfrac{\Gamma\gas}{c\gas^2}-\tfrac{\Gamma\liq}{c\liq^2}\right) + \tfrac{\Gamma\gas \Delta h}{\alpha\gas} \left( \tfrac{1}{\mCpl} + \tfrac{ \Gamma\liq^2 T}{m\liq c\liq^2} \right) \right],\\
  \mcl V^{T\mu}_{\mu,\Liq}   = \tfrac{m\gas m\liq}{c\gas^2 c\liq^2 T \kappa_{T\mu}} \left[ -\tfrac{c\liq^2}{\alpha\liq}\left( \tfrac{1}{\mCpg}+\tfrac{1}{\mCpl} \right) - \tfrac{c\gas^2 c\liq^2}{\alpha\gas \alpha\liq} \tfrac{\Gamma\gas T}{\rho\gas c\gas^2}\left(\tfrac{\Gamma\gas}{c\gas^2}-\tfrac{\Gamma\liq}{c\liq^2}\right) - \tfrac{\Gamma\liq \Delta h}{\alpha\liq} \left( \tfrac{1}{\mCpg} + \tfrac{ \Gamma\gas^2 T}{m\gas c\gas^2} \right) \right],\\
  \mcl K^{T\mu}_p = -(p\gas-p\liq) A^{T\mu}\gas + \mcl V^{T\mu}_{\mu,\Gas} + \mcl V^{T\mu}_{\mu,\Liq},
  \quad
  \mcl K^{T\mu}_v = - A^{T\mu}\gas, 
\end{gather*}
\begin{gather*}
  \kappa_{T\mu} = \tfrac{\Delta h^2 m\gas m\liq}{\mCpg \mCpl T^2 c\gas^2
    c\liq^2} + \tfrac{\tfrac{m\liq}{c\liq^2} +
  \tfrac{m\gas}{c\gas^2} \left( 1 + \tfrac{\Gamma\liq}{c\liq^2} \Delta h
  \right)^2}{\mCpg T} + \tfrac{\tfrac{m\gas}{c\gas^2}
  + \tfrac{m\liq}{c\liq^2} \left( 1 - \tfrac{\Gamma\gas}{c\gas^2} \Delta
  h \right)^2 }{\mCpl T } + \left(\tfrac{\Gamma\liq}{c\liq^2} -
  \tfrac{\Gamma\gas}{c\gas^2} - \tfrac{\Gamma\gas\Gamma\liq}{c\gas^2
    c\liq^2} \Delta h \right)^2.
\end{gather*}

\bibliographystyle{elsarticle-num-names}
\bibliography{references}

\end{document}